\DeclareMathOperator{\conv}{conv}
\newcommand{\B}{{\cal B}}
\newcommand{\ee}{\end{equation}}
\newcommand{\be}{\begin{equation}}
\def\epsilon{{\varepsilon}}
\def\@begintheorem#1#2{\tmpitemindent\itemindent\topsep 0pt\rm\trivlist
    \item[\hskip \labelsep{\indent\it #1\ #2:}]\itemindent\tmpitemindent}
\def\@opargbegintheorem#1#2#3{\tmpitemindent\itemindent\topsep 0pt\rm \trivlist
    \item[\hskip\labelsep{\indent\it #1\ #2\ \rm(#3)}]\itemindent\tmpitemindent}
\def\@endtheorem{\endtrivlist\unskip}
\def\BibTeX{{\rm B\kern-.05em{\sc i\kern-.025em b}\kern-.08em
    T\kern-.1667em\lower.7ex\hbox{E}\kern-.125emX}}
\def\={\triangleq}                           %{\overset{\triangle}{=}}
\newcommand{\qed}{\hfill $\blacksquare$}
\newcommand{\tr}{{\operatorname{Tr}\,}}
\newcommand{\bra}[1]{{\langle{#1}|}}
\newcommand{\ket}[1]{{|{#1}\rangle}}
\newcommand{\C}{{\mathcal{C}}}
\newcommand{\N}{{\mathbb{N}}}
\newcommand{\X}{{{\mathcal X}}}
\newcommand{\fset}[1]{{\mathcal{#1}}}
\newcommand{\V}{\mathcal V}
\newcommand{\W}{\mathcal W}
\newcommand{\M}{\mathcal M}
\newcommand{\A}{\mathcal A}
\newcommand{\D}{\mathcal D}
\renewcommand{\P}{\mathcal P}
\newcommand{\1}{{\mathbbm{1}}}
\newlength{\blank}
\newenvironment{proof}[1][{\hspace{-\blank}}]{{\noindent\textbf{Proof~{#1}.\ }}}{\hfill\qed}
\newtheorem{Theorem}{Theorem}[section]
\newtheorem{Definition}[Theorem]{Definition}
\newtheorem{rem}[Theorem]{Remark}
\newtheorem{Corollary}[Theorem]{Corollary}
\newtheorem{Lemma}[Theorem]{Lemma}
\newtheorem{Proposition}[Theorem]{Proposition}
\newcommand{\nc}{\newcommand}
\nc{\rnc}{\renewcommand}
\nc{\beq}{\begin{equation}}
\nc{\eeq}{{\end{equation}}}
\nc{\beqa}{\begin{eqnarray}}
\nc{\eeqa}{\end{eqnarray}}
\nc{\lbar}[1]{\overline{#1}}
\nc{\proj}[1]{| #1\rangle\!\langle #1 |}
\nc{\avg}[1]{\langle#1\rangle}
\nc{\Rank}{\operatorname{Rank}}
\nc{\smfrac}[2]{\mbox{$\frac{#1}{#2}$}}
\nc{\ox}{\otimes}
\nc{\dg}{\dagger}
\nc{\dn}{\downarrow}
\nc{\cA}{\mathcal{A}}
\nc{\cB}{\mathcal{B}}
\nc{\cC}{\mathcal{C}}
\nc{\cD}{\mathcal{D}}
\nc{\cE}{\mathcal{E}}
\nc{\cF}{\mathcal{F}}
\nc{\cG}{\mathcal{G}}
\nc{\cH}{\mathcal{H}}
\nc{\cI}{\mathcal{I}}
\nc{\cJ}{\mathcal{J}}
\nc{\cK}{\mathcal{K}}
\nc{\cL}{\mathcal{L}}
\nc{\cM}{\mathcal{M}}
\nc{\cN}{\mathcal{N}}
\nc{\cO}{\mathcal{O}}
\nc{\cP}{\mathcal{P}}
\nc{\cR}{\mathcal{R}}
\nc{\cS}{\mathcal{S}}
\nc{\cT}{\mathcal{T}}
\nc{\cU}{\mathcal{U}}
\nc{\cX}{\mathcal{X}}
\nc{\cY}{\mathcal{Y}}
\nc{\cZ}{\mathcal{Z}}
\nc{\csupp}{{\operatorname{csupp}}}
\nc{\qsupp}{{\operatorname{qsupp}}}
\nc{\var}{\operatorname{var}}
\nc{\rar}{\rightarrow}
\nc{\lrar}{\longrightarrow}
\nc{\polylog}{\operatorname{polylog}}
\nc{\RR}{{{\mathbb R}}}
\nc{\CC}{{{\mathbb C}}}
\nc{\FF}{{{\mathbb F}}}
\nc{\NN}{{{\mathbb N}}}
\nc{\ZZ}{{{\mathbb Z}}}
\nc{\PP}{{{\mathbb P}}}
\nc{\QQ}{{{\mathbb Q}}}
\nc{\UU}{{{\mathbb U}}}
\nc{\EE}{{{\mathbb E}}}
\begin{document}

\title{Secure and Robust Identification via Classical-Quantum Channels}

\makeatletter 

\author{Holger Boche~\IEEEmembership{Fellow,~IEEE}\thanks{HB is with Lehrstuhl f\"ur Theoretische
Informationstechnik, Technische Universit\"at M\"unchen, M\"unchen and Munich Center for Quantum Science and Technology (MCQST), M\"unchen, Germany. boche@tum.de}, %
Christian Deppe~\IEEEmembership{Member,~IEEE}\thanks{CD is with Lehrstuhl f\"ur Nachrichtentechnik,
Technische Universit\"at M\"unchen, M\"unchen, Germany. christian.deppe@tum.de}, %
and Andreas Winter\thanks{AW is with ICREA and with Departament de F\'{\i}sica: Grup d'Informaci\'{o} Qu\`{a}ntica, Universitat Aut\`{o}noma de Barcelona, ES-08193 Bellaterra (Barcelona), Spain. andreas.winter@uab.cat}}

%\date{30 January 2018}

\maketitle
\makeatother 

\begin{abstract}
We study the identification capacity of classical-quantum channels
(``cq-channels'') under channel uncertainty and privacy constraints. 
To be precise, we first consider compound memoryless cq-channels 
and determine their identification capacity; then we add an eavesdropper by 
considering compound memoryless wiretap cqq-channels, and determine 
their secret identification capacity. In the first case (without privacy), 
we find the identification capacity always equal to the transmission 
capacity. In the second case, we find a dichotomy: either the secrecy capacity
(also known as private capacity) of the channel is zero, 
and then the secrecy identification capacity is also zero, 
or the secrecy capacity is positive and then the secrecy 
identification capacity equals the transmission capacity of the main
channel without the wiretapper.
We perform the same analysis for the case of arbitrarily varying 
wiretap cqq-channels (cqq-AVWC) with analogous findings, and make several 
observations regarding the continuity and super-additivity of the 
identification capacity in the latter case.
\end{abstract}

\section{Introduction}
\label{Introduction}
Identification via channels was introduced by Ahlswede and Dueck \cite{AhlswedeDueck:ID.A}
forty years after Shannon \cite{Shannon}
introduced information theory as a theory of communication.  
In Shannon's transmission theory, the sender encodes the messages
as sequences of channel input letters in such a way, that although the channel
might not transmit the sequence correctly, the receiver still can decide 
what message had been sent, at least with a high probability.

In the theory of identification, the
receiver is not interested in the exact message, but only wants to know if 
the sent message is equal to a particular one that he is interested in. 
Of course, the sender does not know in which message the receiver is interesting.
It was shown that there are codes for classical channels 
with double exponential size in the block length of the codewords. 
In identification theory, one considers also models in which several receivers 
receive the same transmission but are interested in different one messages.
Applications for identification codes can be found in the theory of digital 
watermarks \cite{AhlswedeCai,P01} and communication complexity \cite{T01}.

Investigation into communication via quantum channels started in the 1960s. 
We refer the reader to the book~\cite{Wilde:book} for more details on quantum 
and classical channels and the various transmission capacities associated 
with them, including their history.

L\"ober \cite{Loeber:PhD} was the first to consider identification 
via classical-quantum channels (so-called cq-channels). 
He introduced two generalizations of the classical identification codes. 
First he defined identification codes for cq-channels where the receiver
has a binary measurement for each possible message he could be interested
in. Crucially, in quantum mechanics, these measurements may be
incompatible, meaning that one cannot identify several messages at the same time.
In certain applications, this is an undesirable feature, when there
are many receivers each wanting to identify ``their'' message. To
address this, L\"ober formulated a second model, that of a
\emph{simultaneous ID-code}, for which there has to be one single 
(simultaneous) measurement that allows us to identify every message at the same 
time. This model is also valid if the one who performs the measurement is not 
the ultimate receiver, and in particular, does not know in which
message this receiver is interested. 
%This more restricted version of an identification code is called a simultaneous ID-code. 
There are many examples where identification schemes require 
simultaneous ID-codes because their real implementation would consist of many 
receivers at a time (for examples see \cite{AhlswedeDueck:ID.A}). 
This is not always the case \cite{Kleinew} if both
sender and receiver have a (possibly different) text and they want to check if
it is the same one, using an ID-code. Here, only one receiver is asking
only one question.

In the present paper, we consider both \emph{secure} and \emph{robust} models of 
cq-channels. Our coding schemes are all simultaneous,
but we will prove converses in the general, non-simultaneous setting.
With this, we characterize the identification and the simultaneous identification capacity. Here, these two capacities turn out to be the same.

Security is modeled by a channel with an eavesdropper, called a wiretap cqq-channel. 
It is connecting a sender with two receivers, a legal one and a wiretapper. 
The legitimate receiver accesses the output of the first channel 
and the wiretapper observes the output of the second channel.
A code for the channel conveys information to the
legal receiver such that the wiretapper knows nothing about
the transmitted  information.  The classical degraded form of this channel was introduced
by Wyner \cite{W75}, who determined the secrecy capacity of this channel. 
The classical non-degraded model was presented and solved in \cite{CK78}. 
The wiretap cqq-channel was considered in \cite{CWY04} and in \cite{D05}. 

To model the robustness aspect, we consider compound cq-channels, 
which are described as a set of memoryless channels. 
Before the start of the transmission, a channel is chosen unknown to
the sender or receiver, and used during the transmission of one codeword.
The code of the sender and the 
receiver has to be robust and therefore independent of the chosen model. 
The classical channel model was introduced by Blackwell, Breiman, and 
Thomasian \cite{BBT59}.
The compound cq-channel was considered in \cite{BB09}, \cite{M15} and \cite{BJK17}.

There exist many combinations of these concepts.
The classical compound wiretap channel was considered in \cite{LKPS09} and \cite{BBS11}.
The transmission capacity of the compound wiretap cqq-channel was given 
in \cite{BCCD14}. For an overview we refer to the wide ranging textbook
by Wilde \cite{Wilde:book}, which only omits the theory of identification over 
quantum channels. An overview on this topic can be found in \cite{winter:survey}. 
In \cite{BD17} we gave the identification capacities for the classical compound 
channel and the classical compound wiretap channel. Therefore, the present paper 
is a generalization to the classical-quantum case. 

The structure of our paper is as follows.
We start in Section~\ref{Basic} with the basic definitions of cq-channels and 
of transmission and identification via cq-channels; we
review the main result of \cite{Loeber:PhD} and \cite{AhlswedeWinter:ID-q}
where the identification capacity of cq-channels were given. 
We generalize this result in Section~\ref{Robust} for identification via compound cq-channels.
In Section~\ref{Secure}, we define how to add the wiretapper to the model, define wiretap cqq-channels 
and give their capacity, and we prove a dichotomy theorem for its secure 
identification capacity. 
We generalize this result in Section~\ref{Securerobust} for the secure identification 
capacity of a compound wiretap cqq-channel, i.e.~we prove a capacity theorem
for secure and robust identification via quantum channels.
In Section~\ref{Jammer} we assume that the channel state can change after 
each qubit transmitted by the sender. 
We assume that this action comes from a jammer and consider the worst case. 
In Section~\ref{Securejammer} we
also add a wiretapper to this model. We give the capacity for both models. 
Finally, in Section~\ref{Continuity} we analyze the calculated capacities as functions 
of the channel parameter.

\section{Basic definitions and results}
\label{Basic}
In this section we give recall the definitions of cq-channels, and of 
transmission and identification via cq-channels. 
Furthermore, we review the main results of \cite{Loeber:PhD} 
and \cite{AhlswedeWinter:ID-q}. 

Cq-channels have a classical sender, having access to an input 
alphabet $\X$, but their output is quantum, being described
by a Hilbert space $\B$. 
As is customary, we identify the states on $\B$, ${\cS}(\B)$
with the set of density operators,  i.e.~the self-adjoint, positive 
semidefinite, linear operators on $\B$ with unit trace:
$${\cS}({\cB})=\{\rho: \rho=\rho^*\geq 0,\tr\rho=1\},$$
where $\tr\rho=\sum_i \bra{i}\rho\ket{i}$ for some complete orthonormal 
basis $\{\ket{i}\}_i$. 

\begin{Definition}\label{DCQC}
A \emph{discrete classical-quantum channel (cq-channel)} is a
map $W:{\X}\longrightarrow {\cS}({\cB})$
where ${\X}$ is a finite set and ${\cS}(\B)$ the set of quantum
states of the complex Hilbert space $\B$, which we assume to be finite dimensional.
Furthermore, we denote $a=|\fset{X}|$ the cardinality of $\fset{X}$,
and $d=|\B|$ the dimension of $\B$. 
\end{Definition}

\medskip
Associated to $W$ is the channel map on a sequence of length $n$ over the alphabet $\X$.
$$W^{\otimes n}:{\X}^n\longrightarrow {\cS}(\B^{\otimes n})$$
with $W(x^n) = W^{\otimes n}(x^n) = W(x_1)\otimes\cdots\otimes W(x_n)$.
(Note that to abbreviate, we will customarily omit the superscript
${\otimes n}$ if the block length is evident from the input 
string $x^n$.) 
We call $W^{\otimes n}$ a memoryless channel.
In the following, we use the notation 
$W^{\otimes n}(P) \= \sum_{x^n\in A^n} P(x^n)W(x^n)$ to denote
the output state of the channel in $S(\B^{\otimes n})$ when 
the input is distributed according to $P$. 
To access the (classical) information of a quantum state, we have to perform
a measurement on the output space.

\begin{Definition}
Let $\B$ be a finite dimensional Hilbert space. A \emph{POVM 
(positive operator valued measure)} on $\B$  is a collection
$(D_i)_{i=1}^N$ of positive semidefinite operators $D_i$ on $\B$ such that
$\sum_{i=1}^N D_i = \1_{\B}$, where $\1_{\B}$ denotes
the identity operator on $\B$.
\end{Definition}

In transmission theory, Alice uses the classical-quantum channel 
to transmit messages from the set $\X$ to Bob. 
He tries to determine the transmitted messages by making a quantum
measurement (POVM). 

\begin{Definition}
An \emph{$(n,M,\lambda)$-code} is a set of pairs 
$\{(P_i,D_i):i\in [M]\=\{1,\dots,M\}\}$ where the $P_i$ are probability distributions on
$\cX^n$ and $D\=(D_i)_{i\in[M]}$ 
a POVM on ${\cB}^{\otimes n}$ such that: 
$\tr W^{\otimes n}(P_i)\!\cdot\!D_i  \ge 1-\lambda.$ 
The largest $M$ such that an $(n,\lambda)$-code exists
is denoted $M(n,\lambda)$.
\end{Definition}

The rate $R$ of a $(n, M,\lambda)$-code is defined as
$R = \frac{1}{n} \log M$. A rate $R$ is said to be \emph{achievable} 
if for all $\eta\in (0,1)$ 
there exists a $n_0(\eta)$, such 
that for all $n\geq n_0(\eta)$ there exists an $(n,2^{n(R-\eta)},\eta)$-code. 
The transmission capacity $C(\W)$ of a compound cq-channel $\W$ is the 
supremum of all achievable rates, which hence, is the largest achievable rate.
One of the main topics in quantum information theory is to determine the 
transmission capacities of channels. 

%\begin{Definition}
Let $\rho\in {\cS}(\A)$ be a state of a quantum system $\A$. We denote by 
$S(\rho^\A)=S(\A)=-\tr \rho^\A \log \rho^\A$ 
the von Neumann entropy. Furthermore, we define
the Holevo information $I(X:\B) = I(P;W) = S(W(P))-S(W|P)$ with
the output state $W(P)=\sum_{x\in{\X}} P(x)W(x)\in {\cS}(\B)$, 
and $S(W|P)=\sum_{x\in{\X}} P(x)S(W(x))$, the conditional entropy 
of the channel output for the input distribution $P$.

\begin{Theorem}[\cite{H98}, \cite{SW97}]
The classical transmission capacity of the cq-channel $W$, defined as
\[
  C(W) = \inf_{\lambda>0} \liminf_{n\rightarrow\infty} \frac1n \log M(n,\lambda),
\]
is given by
\begin{align*}
  \label{eq:C-cap1}
  C(W) = \max_{P(x)} I(P;W).
\end{align*}
Furthermore, the strong converse holds \cite{OgawaNagaoka:strong,winter:qstrong}:
\(
  \displaystyle\lim_{n\rightarrow\infty} \frac1n \log M(n,\lambda) = C(W).
\)
\qed
\end{Theorem}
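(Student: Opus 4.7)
The plan is to prove achievability and converse separately, and for the converse to give both the weak version (implying $\limsup$) and the strong converse (implying the full limit).

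For achievability, I would fix an input distribution $P$ on $\X$ achieving the maximum of $I(P;W)$, and use a random coding argument: draw $M = \lceil 2^{n(I(P;W)-\delta)} \rceil$ codewords $x^n_1,\dots,x^n_M$ i.i.d.\ from $P^{\otimes n}$. The decoding POVM is built from typical-projector techniques. Let $\Pi$ be the projector onto the $\delta$-typical subspace of the average output state $W(P)^{\otimes n}$, and for each codeword $x^n$ let $\Pi_{x^n}$ be the conditionally $\delta$-typical projector of $W(x^n)$. The candidate detection operator for message $i$ is $S_i = \Pi\, \Pi_{x^n_i}\, \Pi$, and these are combined into a valid POVM via the square-root (pretty good) construction $D_i = \bigl(\sum_j S_j\bigr)^{-1/2} S_i \bigl(\sum_j S_j\bigr)^{-1/2}$. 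Using the Hayashi--Nagaoka operator inequality together with the standard typicality bounds on $\tr(W(x^n)\Pi)$, $\tr(W(x^n)\Pi_{x^n})$, $\Pi\, W(P)^{\otimes n}\Pi \preceq 2^{-n(S(W(P))-\delta)}\Pi$ and $\tr\Pi_{x^n}\leq 2^{n(S(W|P)+\delta)}$, the expected error probability of message $i$ under the random code is bounded by a quantity tending to $0$ as long as $\frac1n\log M < I(P;W)-\delta'$. An expurgation argument then yields a deterministic $(n,M,\lambda)$-code, so $R=I(P;W)$ is achievable; taking the max over $P$ gives $C(W)\geq \max_P I(P;W)$.

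For the weak converse, I would start from any $(n,M,\lambda)$-code with uniformly distributed message $J$ and output state $W^{\otimes n}(P_J)$, invoke Fano's inequality combined with the Holevo bound, which for a cq-ensemble $\{p_i,\rho_i\}$ says that the accessible information is upper-bounded by $S(\sum_i p_i\rho_i)-\sum_i p_iS(\rho_i)$. This yields $\log M \leq I(P_{\text{av}};W^{\otimes n}) + 1 + \lambda\log M$, and the additivity of Holevo information under tensor products of product inputs (combined with concavity of $I(\cdot;W)$ in the input distribution) gives $I(P_{\text{av}};W^{\otimes n})\leq n\max_P I(P;W)$. Rearranging produces $\frac1n\log M(n,\lambda)\leq \frac{1}{1-\lambda}\bigl(\max_P I(P;W)+\frac1n\bigr)$, so $C(W)\leq \max_P I(P;W)$.

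For the strong converse, the weak-converse Fano bound degrades as $\lambda\to 1$, so it is insufficient. Here I would cite Ogawa--Nagaoka~\cite{OgawaNagaoka:strong} or Winter~\cite{winter:qstrong}: either one argues via the monotonicity of quantum Rényi-like divergences with order $\alpha>1$, noting that success probability $\tr W^{\otimes n}(P_i)D_i\geq 1-\lambda$ forces $\log M$ to be bounded by $n\max_P I_\alpha(P;W)$ up to an additive constant, followed by $\alpha\downarrow 1$; or one uses the typical-projector method directly to show that if $\frac1n\log M > \max_P I(P;W)+\delta$ then the success probability of any code decays exponentially. Either way, this upgrades the weak converse to $\lim_{n\to\infty}\frac1n\log M(n,\lambda)=C(W)$ for every $\lambda\in(0,1)$. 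The main obstacle is the strong converse: the achievability and weak converse are by now standard, whereas the uniform (in $\lambda$) bound for the strong converse requires the more delicate Rényi/typical-projector machinery which I would invoke as a black box.
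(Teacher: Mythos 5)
The paper does not prove this theorem; it is quoted as a known result from Holevo, Schumacher--Westmoreland, Ogawa--Nagaoka and Winter, and your outline is exactly the standard argument those references (and Wilde's book) give: typical-projector random coding with the square-root measurement and the Hayashi--Nagaoka inequality for achievability, Fano plus the Holevo bound and single-letterization via subadditivity for the weak converse, and the R\'enyi-divergence or typical-projector strong converse invoked as a black box. The sketch is correct, including the handling of the stochastic encoders via concavity of the entropy, so there is nothing to object to beyond the fact that the strong-converse step remains uninstantiated---which is also how the paper treats it.
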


\medskip
In \cite{Wilde:book}, more properties and results about 
transmitting classical information over quantum channels are discussed.

\subsection{Identification via cq-channels}

Compared to transmission, in identification theory we change the goal 
for Bob: We assume that he ``only'' wants to know if the transmitted message 
is equal to some $j$. 

\begin{Definition}
\label{def:ID-code}
An \emph{$(n,N,\lambda_1,\lambda_2)$ ID-code} is a set of pairs 
$\{(P_i,D_i):i\in [N]\}$ where the $P_i$ are probability distributions on
$\cX^n$ and the $D_i$ are POVM elements, i.e.~$0 \le D_i \le \1$, acting on
$B^{\otimes n}$, that $\forall\, i\neq j \in[N]$
\begin{align*}
  \tr W^{\otimes n}(P_i)\!\cdot\!D_i &\geq 1-\lambda_1 \text{ and }\\
  \tr W^{\otimes n}(P_i)\!\cdot\!D_j &\leq \lambda_2.
\end{align*}
The largest size $N$ of an $(n,N,\lambda_1,\lambda_2)$ ID-code
is denoted $N(n,\lambda_1,\lambda_2)$.
\end{Definition}

We use a stochastic encoder for the encoding of the 
messages; this is essential in the theory of identification.
The definition of an ID-code only partially fits the definition of a 
classical identification code in the following sense. There are applications
of classical identification codes, where one assumes that there are several
receivers, each only interested in one message, and all wanting to decide
individually if ``their'' message was sent. 
The example given in \cite{AhlswedeDueck:ID.A} is that of $N$ sailors on 
a ship, and each sailor is related to one relative.
On a stormy night, one sailor drowns in the ocean. One could now broadcast the name of the 
sailor to all relatives. However, this takes $\lceil \log_2 N\rceil$ bits. 
And the news is of interest only to one relative. If we now allow a certain error 
probability, we can broadcast an identification code using only
$O(\log_2\log_2 N)$ bits.

The ID-code for a quantum channel has the property that the received state 
cannot be used in general to ask for two different messages. The reason is
that the POVMs $(D_i,\1-D_i)$ are in general not compatible. Therefore the 
realisation of applications with more than one receiver, like in the 
example above, is not possible with an ID-code as defined. 
There are, however, applications where we have only two parties, who want to 
check if they have the same text (such as watermarking, or in the
communication complexity setting). 
L\"ober \cite{Loeber:PhD} defined simultaneous ID-codes to overcome this 
limitation. In this code model, there has to be one single measurement which 
allows us to identify every message at the same time.

\begin{Definition}
\label{SimuDef}
An ID-code $\{(P_i,D_i) : i\in[N]\}$ is called \emph{simultaneous}
if there is a POVM $(E_y)_{y\in\cY}$ acting on $B^{\otimes n}$
and subsets $\cA_i\subseteq \cY$,
such that $D_i = \sum_{y\in\cA_i} E_y$ for all $i\in[N]$.
The largest size of a simultaneous $(n,\lambda_1,\lambda_2)$ ID-code
is denoted $N_{\rm sim}(n,\lambda_1,\lambda_2)$.
\end{Definition}

In this case the measurement gives as a result some $y\in\cY$, 
and receiver $i$ has to check whether $y\in \cA_i$. Note that the
definition can be expressed equivalently by requiring that the measurements
$(D_1,\1-D_i)$ are all compatible, because this requires that there
exists a common refinement of them, i.e.~a POVM of which all 
$(D_1,\1-D_i)$ are coarse grainings.

\begin{rem}
\normalfont
If the $D_i$ are not compatible, there is no way of measuring them
all together jointly, but this does not mean that we have to give up.
To identify a set of messages $i_1,\ldots,i_k$, we could simply
apply the decoding POVMs $(D_{i_\kappa},\1-D_{i_\kappa})$ sequentially
in some order. That this is not a bad idea follows from the gentle
measurement lemma \cite{winter:qstrong}: since each measurement has
a high probability of giving the correct outcome, the state is 
disturbed, but only ``a little'' in trace norm, so we can subject the
next measurement as if nothing had happened at all.

The best analysis of this approach is using Sen's non-commutative
union bound \cite{Sen:noncommutative-union} in the version of Wilde
for general POVMs \cite{Wilde:sequential-dec}. Using this bound,
we can see that if we have any ID-code with errors $\lambda_1,\lambda_2 \leq \lambda$,
then we can correctly identify any set of $k \leq \frac{\epsilon^2}{4\lambda}$ 
messages, with error probability bounded by $\epsilon$. 
This will not include all messages, since for the rates below the capacity, the
error $\lambda$ can be made to vanish exponentially we get at least an
exponentially large $k$.
\end{rem}

\medskip
In the present paper we consider the identification capacity of a cq-channel,
of which we distinguish a priori simultaneous and non-simultaenous flavours,
following L\"ober \cite{Loeber:PhD}:

\begin{Definition}
  \label{defi:ID-capacities}
  The \emph{(simultaneous) classical ID-capacity} of a cq-channel $W$
  is defined as
  \begin{align*}
    C_{\rm ID}(W)           &\= \inf_{\lambda > 0} 
                                 \liminf_{n\rightarrow \infty} \frac1n \log\log N(n,\lambda,\lambda), \\
    C_{\rm ID}^{\rm sim}(W) &\= \inf_{\lambda > 0} 
                           \liminf_{n\rightarrow \infty} \frac1n \log\log N_{\rm sim}(n,\lambda,\lambda),
  \end{align*}
  respectively. 
\end{Definition}

%-----AW: Commented out the following as they are no longer used.-----%
%
%Analogous to Sanov's Lemma is the following result for matrices,
%proved in \cite{AhlswedeWinter:ID-q}. 
%\begin{Theorem}[Operator tail bound~\cite{AhlswedeWinter:ID-q}]
%  \label{satz:chernoff}
%  Let $X,X_1,\ldots,X_n$ i.i.d.~random variables with values in
%  $[0,\1]\subset\alg{A}_s$, $\E X\leq m\1$, $A\geq a\1$,
%  $1\geq a\geq m\geq 0$. Then
%  $$\Pr\left\{\sum_{i=1}^{n}X_i\not\leq nA\right\}
%                   \leq d\cdot\exp\left(-nD(a\|m)\right).$$
%  Similarly, if $\E X\geq m\1$, $A\leq a\1$,
%  $0\leq a\leq m\leq 1$. Then
%  $$\Pr\left\{\sum_{i=1}^{n}X_i\not\geq nA\right\}
%                   \leq d\cdot\exp\left(-nD(a\|m)\right).$$
%  As a consequence we get, for
%  $\E X=M\geq \mu\1$ and $0\leq\epsilon\leq \frac{1}{2}$,
%  \begin{equation*}\begin{split}
%      \Pr\left\{\frac{1}{n}\sum_{i=1}^n X_i\not\in[(1-\epsilon)M,(1+\epsilon)M]\right\}
%              \leq 2d\!\cdot\!\exp\left(-n\!\cdot\! \frac{\epsilon^2\mu}{2\ln 2}\right)\! .
%  \end{split}\end{equation*}\qed
%\end{Theorem}
%
%As as special case it contains the following tail bound
%for Bernoulli variables (see \cite{CsiKor}). 
%\begin{Lemma}[\cite{CsiKor}]
%\label{lemgenchebyshev}
%Let $\Psi_1,\cdots,\Psi_M$ be a Bernoulli chain with values 0 and 1 and 
%$\Pr(\Psi_i=1)=E\Psi_i\le\mu<\lambda\le 1$. Then,
%\[
%  \Pr\left(\sum_{j=1}^{M}\Psi_j>M\cdot\lambda\right)\le
%              \exp\left(-M\cdot D(\lambda \| \mu)\right).
%\]\qed
%\end{Lemma}
%
%---------------------------------------------------------------------%

L\"ober \cite{Loeber:PhD} showed that for cq-channels,
the simultaneous classical ID capacity is equal 
to the transmission capacity.
Furthermore, he showed that the strong converse holds for simultaneous ID-codes. 
Later, Ahlswede and Winter \cite{AhlswedeWinter:ID-q} extended the strong converse
to non-simultaneous ID-codes. 

\begin{Theorem}[\cite{Loeber:PhD}, \cite{AhlswedeWinter:ID-q}]
For any cq-channel $W$,
\begin{equation*}
  C_{\rm ID}^{\rm sim}(W) = C_{\rm ID}(W) = C(W),
\end{equation*}
and the strong converse holds: for all $\lambda_1+\lambda_2 < 1$,
\begin{align*}
  \phantom{.}
  \lim_{n\rightarrow \infty} &\frac{1}{n} \log\log N(n,\lambda_1,\lambda_2)\\ 
  &=  \lim_{n\rightarrow \infty} \frac{1}{n} \log\log N_{\rm sim}(n,\lambda_1,\lambda_2)
   = C(W).
  \hspace{3mm}\blacksquare
\end{align*}
\end{Theorem}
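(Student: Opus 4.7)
The plan is to establish the chain $C(W) \leq C_{\rm ID}^{\rm sim}(W) \leq C_{\rm ID}(W) \leq C(W)$, with the middle inequality immediate from Definition~\ref{SimuDef}, the first (direct part) handled by a construction that already delivers simultaneous codes, and the last (strong converse) proved in the general, non-simultaneous setting. The cascade automatically collapses to a single asymptotic equality and even yields the strong converse whenever $\lambda_1+\lambda_2<1$.

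For the direct part, I would adapt Ahlswede--Dueck's randomized construction to the quantum setting. Start with an HSW transmission code $\{(x_m, E_m)\}_{m \in [M]}$ of rate $R = C(W) - \delta$ and maximum decoding error $\lambda_0 \to 0$. Choose subsets $T_1, \dots, T_N \subseteq [M]$ independently and uniformly at random, each of size $K = 2^{n\delta/2}$, with $N = 2^{2^{n(R - \delta)}}$. Let $P_i$ be the uniform distribution on $\{x_m : m \in T_i\}$ and $D_i = \sum_{m \in T_i} E_m$. Since every $D_i$ is a coarse-graining of the fine POVM $(E_m)_m \cup \{\1 - \sum_m E_m\}$, the code is simultaneous by construction. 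The correct-identification bound follows directly from the HSW error guarantee, while for $i \neq j$ the expectation $\mathbb{E}_{T_j}[\tr W^{\otimes n}(P_i) D_j]$ is essentially $K/M$; a standard probabilistic expurgation argument then produces a realization meeting all $\binom{N}{2}$ pairwise bounds simultaneously. Letting $\delta \to 0$ gives $C_{\rm ID}^{\rm sim}(W) \geq C(W)$.

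For the strong converse, I would follow Ahlswede--Winter's operator-Chernoff strategy. Given any ID-code $\{(P_i, D_i)\}_{i \in [N]}$ with $\lambda_1 + \lambda_2 < 1$, first reduce---at the cost of a polynomial factor in $N$---to the case that every $P_i$ is supported on a fixed type class, so the output states $\rho_i = W^{\otimes n}(P_i)$ all share a common typical subspace projector $\Pi$ of dimension $D_\Pi \approx 2^{n S(W(P))}$. Applying the Ahlswede--Winter operator Chernoff bound on $\Pi \cB^{\otimes n} \Pi$, each $P_i$ can be replaced by an empirical distribution $\tilde P_i$ with $|\supp \tilde P_i| = K = \mathrm{poly}(n)\cdot 2^{n C(W)}$ and $\|W^{\otimes n}(\tilde P_i) - \rho_i\|_1 \leq \mu$. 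The modified code still has errors summing to strictly less than $1$, and a counting bound finishes the job: the number of distinct $\tilde P_i$'s (up to a suitable trace-norm net) is at most $\binom{|\X|^n}{K} \cdot \mathrm{poly}(n)^{K} \leq 2^{K(n \log|\X| + O(\log n))}$, which is doubly exponential in $n$ at rate $C(W)$.

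The main obstacle is calibrating the operator Chernoff estimate so that the support size $K$ of $\tilde P_i$ remains at most $2^{n(C(W)+o(1))}$ while the trace-norm error $\mu$ is kept small enough that $\lambda_1 + \lambda_2 + 2\mu < 1$. A naive application on the full $d^n$-dimensional output space would require $K$ scaling with $d^n$, yielding only the weak bound $C_{\rm ID}(W) \leq \log d$; the preliminary typical-subspace projection, which reduces the effective dimension to $2^{n S(W(P))}$, is exactly what makes the counting tight. A further quantum-specific subtlety is that the converse must handle non-simultaneous codes, where the $D_i$ need not commute and no classical counting on measurement outcomes is available---forcing the operator-theoretic route rather than a direct classical reduction.
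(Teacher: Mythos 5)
Your overall architecture --- direct part via subsets of a transmission code yielding a simultaneous code, converse via the Ahlswede--Winter operator-Chernoff/approximation/counting argument for general non-simultaneous codes --- is exactly the route of the cited sources (and of the machinery this paper reuses in Lemma~\ref{GilbLem} and Theorem~\ref{cqid}; note the paper itself only quotes this theorem rather than reproving it). However, your direct part has a concrete parameter error that breaks the construction as written. You take subsets $T_i\subseteq[M]$ of size $K=2^{n\delta/2}$ and claim $N=2^{2^{n(R-\delta)}}$ of them. This is impossible even before any error analysis: the total number of size-$K$ subsets of $[M]$ is $\binom{M}{K}\le 2^{K\log M}=2^{nR\cdot 2^{n\delta/2}}$, which is far smaller than $2^{2^{n(R-\delta)}}$ once $\delta<2R/3$ or so. Correspondingly, the union bound in your expurgation step cannot close: the hypergeometric tail gives $\Pr[|T_i\cap T_j|>\lambda K]\approx 2^{-\lambda K\log(M/K)}\approx 2^{-\lambda n R\, 2^{n\delta/2}}$, and this must beat $N^2=2^{2\cdot 2^{n(R-\delta)}}$, which it does not. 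With subsets of size $2^{n\delta/2}$ the achievable double-exponential rate is only $\delta/2$, which tends to $0$ with $\delta$ rather than to $C(W)$. The fix is standard and is what Lemma~\ref{GilbLem} does: the subsets must have cardinality a constant \emph{fraction} of the code, $K=\lfloor\varepsilon M\rfloor$ with $\varepsilon$ a small constant, giving $N\ge 2^{\lfloor\varepsilon M\rfloor}/M$ and hence $\tfrac1n\log\log N\to R$.

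Two smaller remarks on the converse sketch. First, the reduction of each $P_i$ to (approximately) a single type class is delicate in the strong-converse regime: truncating $P_i$ to a dominant type class multiplies both error probabilities by the number of types, which can push $\lambda_1+\lambda_2$ past $1$; the paper only performs this truncation in the weak-converse setting of Theorem~\ref{cqid}, and the original Ahlswede--Winter argument handles the type fluctuation more carefully. Second, your calibration remark is on target: the support size $K$ of the approximating empirical distribution comes out as roughly $2^{n(I(P;W)+o(1))}$ precisely because the conditional- and total-typical projections reduce the operator-Chernoff sample complexity from $d^n$ to the ratio $2^{nS(W(P))}/2^{nS(W|P)}$; stating it as ``dimension $2^{nS(W(P))}$'' alone slightly understates the role of the conditional typicality, but the mechanism you identify is the right one.
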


Ahlswede and Winter also considered the case of a general (quantum-quantum)
channel, but the results are much less complete. It is not even clear
if in the general case the simultaneous capacity and the non-simultaneous 
ID-capacity coincide. See the subsequent papers \cite{HW12} and the
review \cite{winter:survey} for a presentation of the state of the art.

Definitions \ref{def:ID-code} and \ref{SimuDef} address Freeman Dyson's 
critique on the status quo of experiments, measurements, and detectors in 
particle physics (of course in our setting of operational tasks).  
According to Dyson, experiments as currently conducted in particle physics,
can only answer very specific questions. Analogous to our model, this corresponds 
to identification codes (Definition  \ref{def:ID-code}), and in particular the use of
\emph{message-dependent} measurements. 
In comparison, the simultaneous identification codes provide universal 
measurements so that the relevant questions can be answered by classical 
post-processing. 

We will also show that the same performance can be achieved with this code 
concept as with fully general identification codes.  
Of course, particle physics applications do not have the luxury€ of being 
able to control the encoding in general.

\section{Identification via robust cq-channels}
\label{Robust}
In this section we will define the identification capacity of a compound cq-channel
and derive its single-letter formula. 
In \cite{M15} and \cite{BJK17}, the transmission capacity was derived.
We will use the transmission code and build an identification code
with the method introduced in \cite{AhlswedeDueck:ID.B}. 
This method was also used in \cite{Loeber:PhD} to get the identification capacity 
of a cq-channel. For the converse we generalize the method of
\cite{AhlswedeWinter:ID-q}. 
%We follow the notations of \cite{BB09}. 

\begin{Definition}
\label{CQCC}
Let $\Theta$ be an index set, 
$\X$ a finite set and $\B$ a finite-dimensional Hilbert space. 
Let $W_t:\X \longrightarrow \cS(\B)$ be a cq-channel for every $t\in \Theta$:
\[
  W_t:\X\ni x\mapsto W_{t}(x)\in \mathcal{S}(\B).
\]
The memoryless extension of the cq-channel $W_t$ is given by 
$W_t(x^n) = W_t^{\otimes n}(x^n) = W_{t}(x_1)\otimes \ldots \otimes W_{t}(x_n)$ for $x^n \in \X^n$.
We then call $\W \= \{W_t\}_{t\in \Theta}$ a \emph{compound cq-channel}.
\end{Definition}

\begin{Definition}
An \emph{$(n,M,\lambda)$-code} for the compound cq-channel $\W$ is a family 
$\mathcal{C}\=\left( (P_m,D_m) : m\in[M] \right)$ consisting of pairs of stochastic
encodings given by code word probability distributions $P_m$ over $\X^n$ 
and positive semi-definite operators  $D_i\in \mathcal{B}(B^{\otimes n})$ 
forming a POVM, i.e.~$\sum_{m=1}^{M} D_m = \1_{B^n}$, such that
\begin{equation*}
   \sup_{t\in\Theta} \max_{i\in[M]} 1-\tr W_{t}^{\otimes n}(P_i) D_i  
    \le \lambda.
\end{equation*}

The number $M$ is called the \emph{size} of the code, and $\lambda$ the error 
probability. The maximum $M$ for given $n$ and $\lambda$ is
denoted $M(n,\lambda)$, extending the definition for a cq-channel
(which is recovered for $|\Theta|=1$).

The capacity of $\W$ is defined as before,
\[
  C(\W) = \inf_{\lambda>0} \liminf_{n\rightarrow\infty} \frac1n \log M(n,\lambda).
\]
\end{Definition}

Thus an $(n,M,\lambda)$-code for the compound cq-channel $\W$ ensures that the maximal 
error probability for all channels $W_t$ is uniformly bounded above by 
$\lambda$. A more intuitive description of the compound cq-channel is that 
the sender and receiver do not know which channel from the set $\W$ 
is actually used during the transmission of the $n$-block; their prior knowledge is 
merely that the channel is memoryless and belongs to the set $\W$. Their task
is to prepare for the worst case among those.

%A number $R\ge 0$ is called an \emph{achievable rate} for the compound channel 
%$\W$, if there is a sequence of codes $(\mathcal{C}_n)_{n\in\NN}$ of sizes $M_n$,
%such that
%\begin{align}
%  \liminf_{n\to\infty}\frac{1}{n}\log M_n &\geq R, \text{ and } \\
%  \max_{t\in \Theta} e(t,\mathcal{C}_n)   &\leq \lambda_n \longrightarrow 0 
%                                            \text{ as } n\longrightarrow\infty.
%\end{align} 
%The \emph{capacity} $C(\W)$ is the least upper bound of all achievable rates.
%\textcolor{red}{(AW: Warum definieren wir nicht wie vorher $M(n,\lambda)$
%als die maximale Codekardinalit\"at und dann die Kapazit\"at wie gehabt?)}

\begin{Theorem}[\cite{BB09}]
  \label{CCQ-cap}
  Let $\W$ be a compound cq-channel with finite input alphabet $\X$ 
  and finite-dimensional output Hilbert space $\B$. Then,
  $\displaystyle{C(\W) = \max_{P(x)} \inf_{t\in\Theta} I(P;W_t).}$
  \qed
\end{Theorem}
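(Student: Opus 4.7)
The plan is to prove the two bounds $C(\W)\le \max_P\inf_t I(P;W_t)$ and $C(\W)\ge \max_P\inf_t I(P;W_t)$ separately. For the \emph{converse}, I take any $(n,M,\lambda)$-code $\{(P_m,D_m)\}_{m\in[M]}$ with uniform message distribution. For every fixed $t\in\Theta$ this is a valid code for the single memoryless cq-channel $W_t$, so Fano together with the Holevo bound gives $\log M \le I(X^n:B^n)_{W_t^{\otimes n}} + n\delta_n$ with $\delta_n\to 0$ as $\lambda\to 0$ and $n\to\infty$. Sub-additivity of the output entropy over tensor factors and additivity of the conditional entropy single-letterize this to $I(X^n:B^n)_{W_t^{\otimes n}} \le \sum_i I(P_i;W_t)$; concavity of $P\mapsto I(P;W_t)$ then bounds the sum by $n\,I(\bar P_n;W_t)$, where $\bar P_n$ is the average of the per-letter codeword marginals. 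Taking the infimum over $t$ first and the supremum over $\bar P_n$ afterwards yields $\frac{1}{n}\log M \le \max_P \inf_{t\in\Theta} I(P;W_t) + \delta_n$.

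For the \emph{direct part}, I would first reduce to a finite index set by a compactness argument. Since $\X$ is finite and $\B$ is finite-dimensional, the set of all cq-channels $\X\to\cS(\B)$ is compact in any standard distance, and $W\mapsto I(P;W)$ is uniformly continuous by Fannes continuity of the von Neumann entropy. Consequently, for every $\epsilon>0$ there exists a finite subset $\Theta_\epsilon\subseteq\Theta$ with $\inf_{t\in\Theta_\epsilon}I(P;W_t) \le \inf_{t\in\Theta}I(P;W_t)+\epsilon$ uniformly in $P$, and it suffices to build rate-$\bigl(\inf_{t\in\Theta_\epsilon}I(P^*;W_t)-2\epsilon\bigr)$ codes for the finite compound family $\{W_t\}_{t\in\Theta_\epsilon}$, where $P^*$ attains the outer maximum.

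For the finite case, I would adopt a random coding construction: draw $M$ codewords $X^n_1,\dots,X^n_M$ i.i.d.\ from $(P^*)^{\otimes n}$, and for every $t\in\Theta_\epsilon$ prepare the conditional typical projector $\Pi_{X^n_m,t}$ onto the typical subspace of $W_t^{\otimes n}(X^n_m)$ and the average-state typical projector $\Pi_t$ for $W_t(P^*)^{\otimes n}$. Combine the projectors across $t$ via the Sen--Wilde non-commutative union bound to obtain, for each codeword, a ``universally typical'' operator $\Lambda_m$ that acts close to $\1$ on every $W_t^{\otimes n}(X^n_m)$ simultaneously. Pass to a legitimate POVM via the pretty-good measurement prescription $D_m \= (\sum_k\Lambda_k)^{-1/2}\Lambda_m(\sum_k\Lambda_k)^{-1/2}$, and invoke the Hayashi--Nagaoka operator inequality to bound the expected error probability by a ``gentle measurement'' term decaying in $n$ plus a ``wrong codeword'' term of order $2^{nR}\,\EE\tr\!\bigl(W_t(P^*)^{\otimes n}\Lambda_m\bigr)$, which decays whenever $R<I(P^*;W_t)-\delta$ for every $t\in\Theta_\epsilon$. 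A standard expurgation then delivers a deterministic $(n,M,\lambda)$-code of the required rate, and letting $\epsilon\to 0$ closes the gap.

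The main obstacle is the construction of the universal decoder. Classically, the maximum-mutual-information decoder is automatically universal and the analysis reduces to a method-of-types union bound; in the quantum setting no such canonical object exists, and the combinatorial penalty from combining typical projectors across $\Theta_\epsilon$ grows with $|\Theta_\epsilon|$. Balancing the coarseness of the net -- fine enough for the continuity reduction to lose at most $\epsilon$ in the Holevo quantity, but coarse enough that the error terms stemming from the non-commutative union bound remain exponentially small -- is the delicate step on which the entire achievability argument rests.
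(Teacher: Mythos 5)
The paper states this theorem as a quotation from \cite{BB09} and supplies no proof of its own, so your argument can only be assessed on its merits. Your converse is correct and standard: Fano plus the Holevo bound, single-letterization by subadditivity of the output entropy and additivity of the conditional term, concavity of $I(\cdot\,;W_t)$ in the input distribution, and the key observation that the averaged marginal $\bar P_n$ depends only on the code and not on $t$, so the infimum over $t$ may be taken at fixed $\bar P_n$ before maximizing over $P$.

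The direct part, however, has two genuine gaps. First, the reduction to a finite index set via a \emph{fixed} $\epsilon$-net controls only the Holevo quantities, not the decoding errors: a code built for every channel in $\Theta_\epsilon$ need not work for a channel $W_t$ at distance $\epsilon$ from the net, because over $n$ uses the output states $W_t^{\otimes n}(x^n)$ and $W_{t'}^{\otimes n}(x^n)$ can drift apart by as much as $n\epsilon$ in trace norm, which does not vanish. The standard repair (and the one used in \cite{BB09}) is an $n$-dependent net of fineness $\tau_n$ with $n\tau_n\to 0$, whose cardinality of order $(c/\tau_n)^{ad^2}$ grows slowly enough to be absorbed by the exponentially small per-channel random-coding error in the union bound; without this the transfer from $\Theta_\epsilon$ to $\Theta$ simply fails. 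Second, and more seriously, the universal decoder for the finite family --- which you yourself identify as the crux --- is never actually constructed. The non-commutative union bound of Sen and Wilde is a tool for analysing a \emph{sequential} measurement, not a recipe for producing a single positive operator $\Lambda_m$ that is simultaneously typical for all $t$: products of the projectors $\Pi_t\Pi_{x^n,t}$ over different $t$ are not positive operators, and their uniform average has overlap only about $1/|\Theta_\epsilon|$ with the true output state, which ruins the Hayashi--Nagaoka ``signal'' term. The workable constructions in the literature (the commutative approximation of \cite{BB09}, Schur--Weyl universal typical projectors, or the R\'enyi-divergence approach of \cite{M15}) each require a substantive further idea. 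As written, the achievability argument is a plan whose hardest step is left open.
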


We stress that we explicitly allow stochastic encoders in the definition. 
It is known that this does not change the capacity compared to deterministic
encoders, although it makes it easier for us
to relate later channel models to compound cq-channel coding results.
Note however, that this change implies that the average error probability
criterion and the maximum error criterion lead to the same achievable 
rates, and so the strong converse does not hold any more, only the weak
converse.

\begin{Definition}
An \emph{$(n,N,\lambda_1,\lambda_2)$ ID-code} for the compound cq-channel
$\W$ is a set of pairs $\{(P_i,D_i):i\in[N]\}$, where the $P_i$ are 
probability distributions on $\X^n$ and the $D_i$ are POVM elements, 
i.e.~$0 \le D_i\le \1$ acting on $\B^{\otimes n}$, such that $\forall\ i\neq j\in[N]$
\begin{align*}
 \inf_{t\in\Theta} \tr W_t^{\otimes n}(P_i)\!\cdot\!D_i &\geq 1-\lambda_1 
    \text{ and }\\
   \ \sup_{t\in\Theta} \tr W_t^{\otimes n}(P_i)\!\cdot\!D_j &\leq \lambda_2.
\end{align*}
The largest size of an $(n,N,\lambda_1,\lambda_2)$ ID-code
is denoted $N(n,\lambda_1,\lambda_2)$. Analogous to previous
definitions, we also have simultaneous ID-codes and the
maximum code size $N_{\rm sim}(n,\lambda_1,\lambda_2)$. 
\end{Definition}

The identification capacities are defined as before. All capacities 
in this paper are defined in the so-called pessimistic way. 
The optimistic definition of capacity is 
$\bar{C}(\W) = \inf_{\lambda>0} \limsup_{n\rightarrow\infty} \frac1n \log M(n,\lambda)$. 
We show that the converse holds for the optimistic definition and therefore 
also for the pessimistic definition.

\begin{Theorem}\label{IDcompound}
  \label{cqid}
  Let $\W$ be an arbitrary compound cq-channel with finite input alphabet 
  $\X$ and finite-dimensional output Hilbert space $\B$. Then,
  \begin{equation*}
    C_{\rm ID}(\W) = C_{\rm ID}^{\rm sim}(\W) = C(\W)= \inf_{t\in\Theta} I(Q;W_t),
  \end{equation*}
  and the weak converse holds for the optimistic ID-capacity. Indeed,
  \[\begin{split}
    \inf_{\lambda_1,\lambda_2>0}
       &\limsup_{n\rightarrow\infty} \frac1n \log N(n,\lambda_1,\lambda_2) \\
       &= \inf_{\lambda_1,\lambda_2>0}
         \liminf_{n\rightarrow\infty} \frac1n \log N_{\rm sim}(n,\lambda_1,\lambda_2) \\
       &= C(\W).
  \end{split}\]
\end{Theorem}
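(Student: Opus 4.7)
My plan is to prove the theorem in two parts: achievability of $C(\W)$ as a rate for simultaneous ID-codes, and a weak converse at the optimistic (limsup) level. Combined with the trivial inequality $N_{\rm sim}(n,\cdot,\cdot)\le N(n,\cdot,\cdot)$, these give all the claimed equalities at once.

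For the direct part I would apply the Ahlswede-Dueck double-exponential hashing construction \cite{AhlswedeDueck:ID.B} to a compound transmission code, exactly as in L\"ober's single-channel cq treatment \cite{Loeber:PhD}. Concretely, fix $R'<R<C(\W)$; by Theorem~\ref{CCQ-cap} there is, for $n$ large, a deterministic $(n,M,\lambda_n)$-transmission code $\{(x_i,D_i):i\in[M]\}$ for $\W$ with $M\ge 2^{nR}$ and $\lambda_n\to 0$ uniformly over $t\in\Theta$. I would then randomly select $N=2^{\lceil 2^{nR'}\rceil}$ subsets $\mathcal{A}_j\subseteq[M]$ of size $K=\lceil 2^{nR'}\rceil$ and set $P_j=\frac{1}{K}\sum_{i\in\mathcal{A}_j}\delta_{x_i}$, $\tilde D_j=\sum_{i\in\mathcal{A}_j}D_i$. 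The miss probability of message $j$ is bounded by $\lambda_n$ uniformly in $t$, while the false-alarm probability against $k\ne j$ is at most $|\mathcal{A}_j\cap\mathcal{A}_k|/K+\lambda_n$. The classical AD concentration argument (which survives the $N^2$-sized union bound thanks to the doubly-exponential sharpness of the tail on $|\mathcal{A}_j\cap\mathcal{A}_k|$) produces, after de-randomisation, subsets with all pairwise intersections at most $\lambda_2 K$. Since each $\tilde D_j$ is a partial sum of the common POVM $(D_i)_{i\in[M]}$, the resulting code is automatically simultaneous, giving $C_{\rm ID}^{\rm sim}(\W)\ge C(\W)$.

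For the weak converse I would generalise the Ahlswede-Winter counting argument \cite{AhlswedeWinter:ID-q}. Given any ID-code $\{(P_i,D_i):i\in[N]\}$ for $\W$ with $\lambda_1+\lambda_2<1$, I would project each $D_i$ onto a universal output typical subspace---built from the universal typical projectors developed for compound cq-channels in \cite{BB09,BJK17}---that simultaneously captures the output support of $W_t^{\otimes n}(P_i)$ for every $t\in\Theta$. Combining this restriction with the AW packing/counting scheme and the gentle measurement lemma should deliver $\frac{1}{n}\log\log N\le C(\W)+o(1)$ uniformly in $n$.

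I expect the main obstacle to lie in this last step. A naive per-channel application of the single-channel AW bound to each $W_t$ separately only gives $\inf_t C(W_t)=\inf_t\max_P I(P;W_t)$, which is generically strictly larger than $C(\W)=\max_P\inf_t I(P;W_t)$; matching the tighter target $C(\W)$ forces one to treat all $t\in\Theta$ \emph{jointly} inside the typical-subspace projection rather than assembling per-channel bounds after the fact. Moreover, because the converse is stated for the \emph{optimistic} (limsup) capacity, the bound must hold along every subsequence of $n$, not merely the worst one, which makes it a stronger statement than a pessimistic weak converse and further motivates the universal compound-typical approach over the naive $t$-by-$t$ one.
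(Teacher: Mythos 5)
Your direct part is essentially the paper's: a compound transmission code at rate $C(\W)-\delta$ fed into the Ahlswede--Dueck subset-selection construction, with $\tilde D_j$ a partial sum of the transmission POVM so that simultaneity is automatic. That half is fine.

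The converse, however, has a genuine gap, and it sits exactly at the point you flag as the "main obstacle." You correctly observe that applying the Ahlswede--Winter bound channel-by-channel gives $\inf_t \max_P I(P;W_t)$ rather than $\max_P \inf_t I(P;W_t)$, but the remedy you propose --- a universal typical projector treating all $t\in\Theta$ jointly inside the AW counting scheme --- is only asserted ("should deliver"), not carried out, and it is not clear it can be: the AW argument is not merely a typical-subspace projection but an approximation of each encoding distribution by one of small support, and making that approximation uniform over $\Theta$ is precisely what you would have to prove. The paper takes a different and more elementary route that sidesteps joint treatment entirely. It first decomposes each $P_i$ over a $\delta$-net of input types, passes to a $1/|\cM|$-fraction of messages whose type-mixing weights are all close to a common $\overline\mu$, and then truncates to a single type class $\cA_Q$ with $\overline\mu(Q)\ge 1/|\cT|$. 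After this reduction every encoding distribution is supported on sequences of (approximate) type $Q$, so the \emph{single-channel} AW converse applied to each $W_t$ separately yields $\frac1n\log\log N' \le I(Q;W_t)+O(\delta\log|\B|+h_2(\delta))$ with the \emph{same} $Q$ for every $t$; taking $\inf_t$ then gives $\inf_t I(Q;W_t)\le C(\W)$. In other words, the fix is to pin down a common input distribution before invoking the per-channel bound, not to build a compound-universal decoder projection. Note also that the truncation step multiplies the errors by $|\cT|$, which is why the paper only obtains a weak converse ($\lambda_1+\lambda_2\to 0$); your sketch implicitly aims at a uniform bound for fixed $\lambda_1+\lambda_2<1$, which is stronger than what is claimed or known here. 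As written, your converse does not close, so the theorem is not proved.
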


\begin{proof}
We will use an $(n,M,\lambda)$-code for the compound channel to construct an 
$(n,N,\lambda_1,\lambda_2)$ ID-code.
We use the following lemma from \cite{Loeber:PhD}, which is a slightly 
modified version of the original in \cite{AhlswedeDueck:ID.A}:

\begin{Lemma}[\cite{Loeber:PhD}]
\label{GilbLem}
Let ${\M}$ be a finite set of cardinality $M$ and let $\lambda\in(0,1)$.
Let $\varepsilon>0$ be small enough so that 
$\lambda\log_2(\frac{1}{\varepsilon}-1) > 2$. Then there are at least 
$N \ge \frac{1}{M}2^{\lfloor\varepsilon M\rfloor}$ subsets 
${\A}_1,\ldots,{\A}_N\subset {\M}$, each of cardinality
$\lfloor\varepsilon M\rfloor$, such that the cardinalities of the
pairwise intersections satisfy, for all  $i\neq j\in [N]$,
\[
  \hspace{2.3cm}
        |{\A}_i \cap {\A}_j| < \lambda\lfloor\varepsilon M\rfloor .
  \hspace{2.3cm}
  \blacksquare
\]
\end{Lemma}

By definition of the transmission capacity, there is an $(n,M,\lambda)$-code
${\cal C}=\{(c_m,E_m) : m \in [M] \}$ with $M\ge 2^{(C(\W)-\delta)n}$ if
$n$ is large enough. Using $[M]$ as ground set, 
Lemma \ref{GilbLem} provides us with subsets 
${\cal A}_1,\ldots,{\cal A}_N\subset [M]$ of cardinality
$\lfloor\varepsilon M\rfloor$ with pairwise intersections smaller than
$\lambda\lfloor\varepsilon M\rfloor$. Here we have for the number $N$ of 
those sets:
\[
  N \ge \frac{1}{M}2^{\lfloor\varepsilon M\rfloor}
             \underset{n\gg 1}\ge 2^{\,\lfloor\varepsilon 2^{(C(\W)-\delta)n}\rfloor - n}.
\]
We construct a simultaneous ID-code $\{(P_i,D_i):i=1,\ldots,N\}$ by taking 
as $P_i$ the uniform distribution on sets ${\cal C}_i \= \{c_m : m\in{\cal A}_i\}$,
and as $D_i$ the sum of the corresponding $E_m$'s:
\begin{align*}
  P_i(x^n) &\= 
  \begin{cases}
    \frac{1}{\lfloor\varepsilon M\rfloor} & \text{if } x^n\in{\cal C}_i, \\
    \hspace{9pt} 0                        & \text{otherwise},
  \end{cases} \\
  \text{and } D_i &\= \sum_{m\in{\cal A}_i} E_m \qquad (i=1,\ldots,N).
\end{align*}
We choose $\lambda_1\geq \lambda$ and $\lambda_2\geq 2\lambda$. 
It is now straightforward to bound the errors:
\begin{align*}
\min_{t\in\Theta}& \tr W_t^{\otimes n}(P_i) \!\cdot\! D_j\\
%   =  \tr(\sum_{m\in{\cal A}_i}E_m\cdot\sum_{x^n\in A^n}P_i(x^n)W^n_{x^n})
  &=    \frac{1}{\lfloor\varepsilon M\rfloor}     
            \min_{t\in\Theta}\sum_{m\in{\cal A}_i}\sum_{m'\in{\cal A}_i} \tr W^n_{t,c_{m'}} E_m\\
  &\geq \frac{1}{\lfloor\varepsilon M\rfloor}
            \min_{t\in\Theta}\sum_{m\in{\cal A}_i} \tr W^n_{t}(c_m) E_m\\
  &\geq 1-\lambda \geq 1-\lambda_1.
\end{align*}
For $i\neq j$,
\begin{align*}
%&\ &\hspace{-1.2cm} 
  \max_{t\in\Theta}& \tr W_t^n(P_i) \!\cdot\! D_j\\
  &= \frac{1}{\lfloor\varepsilon M\rfloor}
        \max_{t\in\Theta} \sum_{m\in{\cal A}_j}\sum_{m'\in{\cal A}_i} \tr W^n_{t,c_{m'}} E_m \\
  &= \frac{1}{\lfloor\varepsilon M\rfloor} \max_{t\in\Theta} \sum_{m\in{\cal A}_j} 
        \left(\sum_{m'\in{\cal A}_i\cap{\cal A}_j} \tr W^n_{t,c_{m'}} E_m \right. \\
  &\phantom{===========}
        \left. + \sum_{m'\in{\cal A}_i\setminus{\cal A}_j} \tr W^n_{t,c_{m'}} E_m \right) \\
  &\leq \frac{1}{\lfloor\varepsilon M\rfloor} 
         \left( \! \sum_{m'\in{\cal A}_i\cap{\cal A}_j}
         \!\underbrace{\max_{t\in\Theta} \tr W^n_{t,c_{m'}}
                              \!\left(\sum_{m\in{\cal A}_j}E_m\right)}_{\leq\,1} \right. \\
  &\phantom{====:}
         \left. + \sum_{m'\in{\cal A}_i\backslash{\cal A}_j}
         \!\underbrace{\max_{t\in\Theta} \tr W^n_{t,c_{m'}}
                        \!\left(\sum_{m\in{\cal A}_j}E_m\right)}_{\leq\,\lambda} \! \right)  \\
  &\leq \frac{1}{\lfloor\varepsilon M\rfloor} 
          \lambda\lfloor\varepsilon M\rfloor 
          + \frac{1}{\lfloor\varepsilon M\rfloor}
            \lfloor\varepsilon M\rfloor \lambda
   =    2\lambda \leq \lambda_2,
\end{align*}
where we have used $|{\A}_i \cap {\A}_j| < \lambda\lfloor\varepsilon M\rfloor$.
Therefore we have shown

\begin{equation*}
  C_{\rm ID}(\W)\geq C_{\rm ID}^{\rm sim}(\W) \geq C(\W).
\end{equation*}

It remains to prove the converse, i.e.
\begin{equation*} 
  C_{\rm ID}^{\rm sim}(\W) \leq C_{\rm ID}(\W) \leq C(\W).
\end{equation*}
For this, consider an arbitrary (non-simultaneous) $(n,\lambda_1,\lambda_2)$-ID 
code $\{(P_i,D_i) : i\in [N] \}$.

1. The first step follows by applying the theory of types. Fix a $\delta$-net $\cT \subset \cP(\X)$
on the probability distributions on $\X$, i.e.~for any p.d.~$P$ there exists
a $Q \in \cT$ with $\frac12 \|P-Q\|_1 \leq \delta$. It is known that such a 
net can be chosen with $|\cT| \leq \left(\frac{c}{\delta}\right)^{|\X|}$
for a constant $c>0$. 
This induces a partition of the input space
$\X^n = \bigcup_{Q\in\cT} \cA_Q$ in such a way that the type (i.e.~the
empirical distribution) of each $x^n \in \cA_Q$ is $\delta$-close to $Q$.
Now we can write each distribution $P_i$ as 
\[
  P_i = \bigoplus_{Q\in\cT} \mu_i(Q) P_{iQ},
\]
where $\mu_i$ is a p.d.~over $\cT$ and the $P_{iQ}$ are p.d.'s over
$\cA_Q$ (extended trivially to all of $\X^n$). Choose an $\epsilon$-net 
$\cM \subset \cP(\cT)$ which can be found with 
$|\cM| \leq \left(\frac{c}{\epsilon}\right)^{|\cT|}$. Then there exists a
$\overline{\mu}\in\cM$ such that at least a fraction of $\frac{1}{|\cM|}$
of the messages has its $\mu_i$ $\epsilon$-close to $\overline{\mu}$: w.l.o.g.,
\[
  \forall i\in [N']=\left\lfloor \frac{N}{|\cM|} \right\rfloor \quad 
                             \frac12 \|\mu_i-\overline{\mu}\|_1 \leq \epsilon.
\]
Now modify the code as follows:
\[
  P_i' \= \bigoplus_{Q\in\cT} \overline{\mu}(Q) P_{iQ},
\]
for $i\in [N']$, leaving the $D_i$ unchanged.
This clearly gives an $(n,N',\lambda_1',\lambda_2')$-ID
code with $\lambda_1'=\lambda_1+\epsilon$, $\lambda_2'=\lambda_2+\epsilon$.
If $\lambda_1+\lambda_2 < 1$, we can choose $\epsilon$ small
enough to ensure that $\lambda_1'+\lambda_2' < 1$.

2. Now, there exists a $Q \in \cT$ with $\overline{\mu}(Q) \geq \frac{1}{|\cT|}$.
Modify the code once more by truncating all other contributions $Q'\in\cT$,
i.e.~consider the code $\{(P_i''\=P_{iQ},D_i) : i\in [N'] \}$. Since the encodings
are a small, but not too-small fraction of the $P_i'$, the error probabilities
can increase significantly, but we can control them. Concretely, the new code has
errors of the first and second kind, $\lambda_1'' = |\cT|\lambda_1'$ and
$\lambda_2'' = |\cT|\lambda_2'$, respectively. Since we are in the weak
converse regime of $\lambda_1+\lambda_2 \rightarrow 0$ asymptotically, 
we can choose  $\epsilon \rightarrow 0$ sufficiently slowly so that
$\lambda_1'+\lambda_2' \rightarrow 0$ too, and hence for each $\delta > 0$,
$\lambda_1''+\lambda_2'' \rightarrow 0$. As we selected a fraction of the 
messages that is going to zero arbitrarily slowly, we have the same asymptotic 
rate.

3. At this point we are in a good position: all the code distributions
$P_i''=P_{iQ}$ are supported on $\cA_Q$, which is a subset of the 
$\delta$-typical sequences (in the sense of frequency typicality).
Which means we can apply the converse proof from \cite{AhlswedeWinter:ID-q}
to each $W_t\in\W$, $t\in\Theta$, obtaining
\[
  \frac1n \log\log N' \leq I(Q;W_t) + O\bigl(\delta\log|B| + h_2(\delta)\bigr),
\]
the latter terms occurring because the types of sequences in $\cA_Q$ 
fluctuate up to $\delta$ around $Q$. This completes it, since we can choose
$\delta>0$ arbitrarily small, and as explained above, $\epsilon$ can be made
to go to $0$ arbitrarily slowly. Hence,
\begin{align*}
  \frac1n \log\log N &\leq \frac1n \log\log N' + o(1)\\
                     &\leq \inf_{t\in\Theta} I(Q;W_t) + o(1),
\end{align*}
concluding the proof.
\end{proof}

\section{Secure identification via \protect\\ wiretap cqq-channels}
\label{Secure}
An important aspect in information theory is security, or privacy. 
Wyner \cite{W75} introduced the classical wiretap channel, which he solved
in the degraded case, and later Csisz\'ar and K\"orner \cite{CK78}
in the general case.
It can be described by two channels from the sender (``Alice'') 
to the legal receiver (``Bob'') and to the eavesdropper (``Eve''), respectively. 
In transmission theory the goal is to send messages to the legal receiver, 
while the wiretapper is to be kept ignorant. The wiretap channel was generalized
to the setting of quantum information theory in \cite{CWY04,D05}. 
Formally, in contrast to the classical case, quantumly the channel has to be 
described by a single quantum operation $T$, from Alice to the joint 
system of Bob and Eve together: then we can define the legal channel
$W = \tr_B \circ T$ and the wiretapper channel $V = \tr_E \circ T$.
Note that (unlike the classical case) this pair
of channels cannot be arbitrary! This has to do with the no-cloning theorem: 
Alice's input state cannot be duplicated and then sent through both channels. 

However, here we will restrict ourselves to the cq-channel case, where
Alice's input is described by a letter $x\in\X$ from a finite alphabet.
Then we can define the classical-quantum wiretap channel in a simple way.

\begin{Definition}
A \emph{classical-quantum wiretap channel (wiretap cqq-channel)} is a pair 
$(W,V)$ of two discrete memoryless cq-channels $W:\cX \longrightarrow \cS(\cB)$ 
and $V:\cX \longrightarrow \cS(\cE)$. 
When Alice sends a classical input $x^n\in{\X}^n$, Bob (legal receiver) and Eve 
(eavesdropper) receive the states $W^{\otimes n}(x^n)$ and $V^{\otimes n}(x^n)$, 
respectively. 
\end{Definition}

\begin{Definition}
An \emph{$(n,M,\lambda,\mu)$-wiretap code} for the wiretap cqq-channel 
$(W,V)$ is a collection $\{(P_i,D_i):i\in [M]\}$ of pairs consisting
of probability distributions $P_i$ on $\cX^n$ and a POVM $(D_i)_{i=1}^N$
on $\cB^{\otimes n}$ such that
\begin{align*}
  \forall i\in[M]\quad   & 1-\tr W^{\otimes n}(P_i)\!\cdot\!D_i \leq \lambda, \\
  \forall i,j\in[M]\quad & \frac12 \| V^{\otimes n}(P_i) - V^{\otimes n}(P_j) \|_1 \leq \mu.
\end{align*}

The largest $M$ such that an $(n,M,\lambda,\mu)$-wiretap code exists
is denoted $M(n,\lambda,\mu)$. The \emph{secrecy capacity} 
(aka \emph{private capacity}) of $(W,V)$ is then defined as
\[
  C_S(W,V) \= \inf_{\lambda,\mu>0} \liminf_{n\rightarrow\infty} \frac1n \log M(n,\lambda,\mu).
\]
\end{Definition}

Note that by the Fannes inequality \cite{Fannes,AW:S}, the second
condition (``privacy'') implies that for any random variable $J$ taking
values in $[M]$, $I(J:E^n) \leq \mu n \log|E| + h(\mu)$. It turns out that
the right hand side can be made arbitrarily small while achieving the capacity, 
because $\mu$ as well as $\lambda$ can be made to converge to $0$ to
any polynomial order.
%$\epsilon>0$, $\delta>0$, $\gamma>0$ and sufficiently large $n$ there exists an
%$(n,M)$-code, such that $\frac{\log M}{n}>R-\delta$ and
%\begin{equation} 
%  \max_{i=1,\ldots ,M_n} 1-\tr W(x^n(i))D_i \le \epsilon .
%\end{equation} 
%\begin{equation} 
%  \max_X \chi(X;E^n) \leq \gamma,
%\end{equation} 
%where $X$ ranges over all random variables taking values in $\{1,\dots, M\}$ 
%nd $E^n$ is the output of the wiretap channel.
%\textcolor{red}{(AW: Should we perhaps define secrecy directly in terms of
%the probability of guessing the message? Or of the trace distance of the output
%state from a fixed state?)}
%\end{Definition}

\begin{Theorem}[\cite{CWY04}]
\label{QWC}
The secrecy capacity of a wiretap cqq-channel is given by
\begin{align*}
  C_{S}&(W,V) \\
       &= \lim_{n\rightarrow \infty} 
                 \max_{U\rightarrow X^n \rightarrow B^n E^n} 
                  \frac{1}{n}\bigl( I(U:B^{n}) - I(U:E^{n}) \bigr),
\end{align*}
where the maximum is taken over all random variables that satisfy
the Markov chain relationships $U\rightarrow X^n \rightarrow B^n E^n$. \qed
\end{Theorem}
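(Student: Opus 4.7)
The plan is to prove Theorem \ref{QWC} via achievability and converse. Since the formula is regularized over blocks, it suffices to show that for each fixed $\ell$ and auxiliary $U$ with $U\to X^\ell \to B^\ell E^\ell$, the rate $\frac{1}{\ell}(I(U:B^\ell)-I(U:E^\ell))$ is achievable, and conversely that no $(n,M,\lambda,\mu)$ code exceeds this formula in the limit. For achievability, fix $\ell$, $U$, and a conditional distribution $P_{X^\ell|U}$. For the $n\ell$-block wiretap code I would generate codewords $u^n(m,\kappa) \in \mathcal{U}^n$ i.i.d.~from $P_U^{\otimes n}$, with $m \in [M]$ where $M \approx 2^{n\ell R}$ and $R = \frac{1}{\ell}(I(U:B^\ell) - I(U:E^\ell)) - 2\delta$, and $\kappa \in [L]$ with $L \approx 2^{n(I(U:E^\ell)+\delta)}$; the stochastic encoder, on input $m$, picks $\kappa$ uniformly and draws $X^{n\ell}$ through $P_{X^\ell|U}^{\otimes n}$ applied to $u^n(m,\kappa)$. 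Since the total rate of the pair $(m,\kappa)$ is strictly below $\frac{1}{\ell}I(U:B^\ell)$, the Holevo-Schumacher-Westmoreland theorem applied to the induced cq-channel $U\mapsto \mathcal{E}_U^{B^\ell}$ at block length $n$ gives a POVM decoder on $B^{n\ell}$ that recovers $(m,\kappa)$ with exponentially small average error, hence recovers $m$ with error at most $\lambda$.

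The main obstacle is enforcing Eve's confusion condition \emph{uniformly} over all messages. For each fixed $m$, Eve's conditional state is $\sigma_m^E = \frac{1}{L}\sum_\kappa \sigma^E_{u^n(m,\kappa)}$, where $\sigma^E_{u^n}$ denotes the induced Eve state on input $u^n$. After a standard smoothing step restricting each summand to a conditionally typical projector, so that its operator-norm eigenvalues are appropriately bounded, the Ahlswede-Winter operator Chernoff bound yields that $\sigma_m^E$ is within trace distance $\mu$ of the fixed mean state $V^{\otimes n\ell}(P_{X})$ with doubly-exponentially small failure probability, provided $L \geq 2^{n(I(U:E^\ell)+\delta)}$. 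A union bound over the $M$ messages preserves this simultaneously, and standard derandomization produces a deterministic codebook that satisfies both Bob's decoding and Eve's privacy guarantees. Letting $\delta\to 0$ and $\ell\to\infty$ yields $C_S(W,V) \geq \lim_{\ell\to\infty}\frac{1}{\ell}\max_U\bigl(I(U:B^\ell)-I(U:E^\ell)\bigr)$.

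For the converse, let $\{(P_i,D_i):i\in[M]\}$ be any $(n,M,\lambda,\mu)$ wiretap code and $J$ uniformly distributed on $[M]$. Fano's inequality combined with the decoding condition gives $H(J|B^n)\leq n\epsilon_n$ with $\epsilon_n\to 0$. The privacy condition together with the Fannes inequality, exactly as already noted just after the definition of a wiretap code, gives $I(J:E^n) \leq \mu n \log|E| + h(\mu) = o(n)$. Therefore
\begin{align*}
  \log M = H(J) &\leq I(J:B^n) + H(J|B^n) \\
                &\leq I(J:B^n) - I(J:E^n) + n\epsilon_n'.
\end{align*}
Setting $U=J$, which automatically satisfies the Markov chain $U\to X^n\to B^n E^n$ through the (stochastic) encoder and the product channels, and maximizing over all such $U$, yields $\frac{1}{n}\log M \leq \frac{1}{n}\max_{U\to X^n\to B^n E^n}\bigl(I(U:B^n)-I(U:E^n)\bigr) + \epsilon_n'$; passing to the limit closes the converse and completes the proof.
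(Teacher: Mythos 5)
This theorem is quoted in the paper from [CWY04] with no proof given (the \qed sits in the statement itself), so there is no internal argument to compare against; your sketch is the standard Cai--Winter--Yeung/Devetak proof and is essentially correct. The achievability via random coding over an auxiliary alphabet, HSW decoding of the pair $(m,\kappa)$ at total rate below $\frac{1}{\ell}I(U:B^\ell)$, and the Ahlswede--Winter operator Chernoff covering argument with $L\approx 2^{nI(U:E^\ell)}$ to pin Eve's conditional states to the mean, followed by derandomization, is exactly how the regularized lower bound is obtained; the Fano-plus-Fannes converse with $U=J$ is likewise the standard one. Two small points of care: concentration to the common mean state gives pairwise distinguishability $2\mu$ rather than $\mu$ (harmless after rescaling), and $I(J:E^n)\le \mu n\log|E|+h(\mu)$ is not $o(n)$ for fixed $\mu$ --- the error term $\mu\log|E|$ survives per channel use and only disappears under the outer $\inf_{\mu>0}$ in the definition of $C_S$, so the converse should be phrased as holding for each fixed $\mu$ up to $+\mu\log|E|$ before that infimum is taken.
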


Thus in the
case of transmission theory, we have a positive secrecy
capacity $C_S$ when the channel parameters of the legal
channel are ``better'' than those of the non-legal channel.
This means we pay a price in the form of a smaller rate
for secure transmission. We will show that in the case of
identification, the situation is different.

\begin{Definition}\label{wID}
An \emph{$(n,N,\lambda_1,\lambda_2,\mu)$ wiretap ID-code} for the 
wiretap cqq-channel $(W,V)$ is a set of pairs 
$\{(P_i,D_i):i\in[N]\}$, where the $P_i$ are probability distributions on
$\X^n$, and the $D_i$, $0 \le D_i \leq \1$ denote operators on 
$\B^{\otimes n}$, such that for all $i\neq j\in[N]$ and for all $0\leq F\leq \1_{E^n}$,
\begin{align}
  \tr W^{\otimes n}(Q_i) D_i &\ge 1-\lambda_1, \nonumber \\ 
  \tr W^{\otimes n}(Q_j) D_i &\le \lambda_2,   \nonumber \\
  \tr V^{\otimes n}(Q_j) F + \tr V^{\otimes n}(Q_i) (\1-F) &\geq 1-\mu.\label{ID-secrecy}
\end{align}
If the POVMs $(D_i,\1-D_i)$ are all compatible, we call the code
\emph{simultaneous}, as in the cq-channel case.
\end{Definition}
 
%In contrast to the transmission problem, the decoding sets for the
%identification problem are not necessarily disjoint.

Condition (\ref{ID-secrecy}) enforces that the wiretapper cannot very well distinguish 
the output states $Q_iV^{\otimes n}$ of the different
messages. Indeed, it is equivalent to
\begin{align*}
  \mu &\geq \max_{0\leq F\leq \1} \tr (V^{\otimes n}(Q_j)-V^{\otimes n}(Q_i)) F\\
      &=    \frac12 \| V^{\otimes n}(Q_j)-V^{\otimes n}(Q_i) \|_1,
\end{align*}
which by Helstrom's theorem \cite{Helstrom,NielsenChuang} means that
even if Eve somehow knows that the message can only be either $i$ or $j$
with equal probability, then her error probability for discriminating 
these two alternatives is at least $\frac12(1-\mu) \approx \frac12$.

The maximum $N$ for which a 
$(n,N,\lambda_1,\lambda_2,\mu)$ wiretap ID-code exists is denoted by
$N(n,\lambda_1,\lambda_2,\mu)$. For simultaneous wiretap ID-codes
we denote the maximum $N_{\rm sim}(n,\lambda_1,\lambda_2,\mu)$.
We then define the (simultaneous) secure identification capacity of the 
wiretap channel as
\begin{align*}
C_{\rm SID}&(W,V)\\
           &\= \inf_{\lambda_1,\lambda_2,\mu>0} 
                 \liminf_{n\rightarrow\infty} \frac1n \log\log N(n,\lambda_1,\lambda_2,\mu),\\
C_{\rm SID}^{\rm sim}&(W,V)\\ 
           &\= \inf_{\lambda_1,\lambda_2,\mu>0} 
                  \liminf_{n\rightarrow\infty} \frac1n \log\log N_{\rm sim}(n,\lambda_1,\lambda_2,\mu),
\end{align*}
respectively.

In this section we consider the wiretap cqq-channel and derive a multi-letter
formula for its secure identification capacity. 
The idea is similar to the classical case. We use a combination of two codes. 
For the converse we generalize inequalities of \cite{AZ95} and \cite{FG99}. 

\begin{Theorem}[Dichotomy theorem]
\label{Theorem D2}
Let $C(W)$ be the capacity of the cq-channel $W$ and let $C_{S}(W,V)$ be
the secrecy capacity of the wiretap cqq-channel. Then,
\begin{align*}
  C_{\text SID}(W,V) &= C^{\rm sim}_{\text SID}(W,V) \\
          &= \begin{cases}
               C(W) & \text{ if } C_S(W,V) > 0,\\
               0    & \text{ if } C_S(W,V) = 0.
             \end{cases}
\end{align*}
\end{Theorem}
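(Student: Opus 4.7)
My plan is to establish three parts separately: (i) the upper bound $C_{\rm SID}(W,V) \leq C(W)$; (ii) the vanishing-case implication $C_S(W,V) = 0 \Rightarrow C_{\rm SID}(W,V) = 0$; and (iii) the main achievability $C_S(W,V) > 0 \Rightarrow C_{\rm SID}(W,V) \geq C(W)$.

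The upper bound is immediate from Theorem~\ref{cqid} with $|\Theta|=1$, since any $(n,N,\lambda_1,\lambda_2,\mu)$ wiretap ID-code is a fortiori an $(n,N,\lambda_1,\lambda_2)$ ID-code for the cq-channel $W$. For the vanishing-case direction I would argue by contraposition. Suppose wiretap ID-codes at rate $r > 0$ with $\lambda_1,\lambda_2,\mu \to 0$ existed. The security bound $\|V^{\otimes n}(P_i) - V^{\otimes n}(P_j)\|_1 \leq 2\mu$ together with Fannes's inequality forces the Holevo information between a uniformly chosen ID-message and Eve's output to be at most $2\mu n\log|E|+h(2\mu) \to 0$. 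On Bob's side, the Ahlswede--Winter strong converse below capacity combined with a standard expurgation brings $\lambda_2$ down exponentially, say $\lambda_2 \leq 2^{-\gamma n}$; selecting a sub-collection of $M = 2^{\gamma n/2}$ ID-messages and using $(D_{i_1},\ldots,D_{i_M})$ as a quantum list-decoder extracts a wiretap transmission code of positive rate $\gamma/2$ with vanishing error and leakage, contradicting $C_S = 0$. The list-decoding step here is where the cq-generalisations of the AZ95 and FG99 resolvability inequalities do the heavy lifting.

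For the main achievability I would combine the Ahlswede--Dueck identification construction (Lemma~\ref{GilbLem}) with a wiretap covering structure. Fix $R < C(W)$ and pick an input $P$ on $\cX$ with $I(P;W) \geq R$ and a strict secrecy gap $I(P;W) > I(P;V)$; when the capacity-achieving $P^*$ itself fails the gap, pass instead to a super-letter input on $\cX^{n'}$ built via Theorem~\ref{QWC} from an auxiliary $U$ with $I(U;B^{n'}) > I(U;E^{n'})$, whose per-letter rate can be brought arbitrarily close to $C(W)$. Draw a random codebook $\{c_m\}_{m\in[M]}$, $M = 2^{nR}$, i.i.d.\ from $P^{\otimes n}$, and partition $[M]$ into $M_S$ \emph{wiretap cosets} $\mathcal{B}_k$ of equal size $|\mathcal{B}_k| \geq 2^{n(I(P;V)+\tau)}$. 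The Devetak--Winter quantum covering lemma then gives $\frac{1}{|\mathcal{B}_k|}\sum_{m \in \mathcal{B}_k} V^{\otimes n}(c_m) \to \bar V := V^{\otimes n}(P^{\otimes n})$ in trace norm, uniformly in $k$. Via Lemma~\ref{GilbLem} combined with a random-subset argument, choose $N \approx 2^{\epsilon M}$ subsets $\mathcal{A}_i \subset [M]$ of size $\epsilon M$ that are both pairwise near-disjoint and coset-balanced ($|\mathcal{A}_i \cap \mathcal{B}_k| \approx \epsilon|\mathcal{B}_k|$ for every $k$); set $P_i$ uniform on $\{c_m : m \in \mathcal{A}_i\}$ and $D_i = \sum_{m\in\mathcal{A}_i} E_m$.

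Bob's errors are controlled exactly as in the proof of Theorem~\ref{IDcompound}. Secrecy follows from the decomposition $V^{\otimes n}(P_i) = \sum_k \tfrac{|\mathcal{A}_i \cap \mathcal{B}_k|}{|\mathcal{A}_i|}\bar V_{i,k}$, with $\bar V_{i,k}$ denoting the empirical average of $V^{\otimes n}(c_m)$ over $m\in \mathcal{A}_i \cap \mathcal{B}_k$: coset-balance collapses the outer weights to $1/M_S$, and per-coset covering replaces every $\bar V_{i,k}$ by $\bar V$, giving $V^{\otimes n}(P_i) \approx \bar V$ uniformly in $i$. The main obstacle will be forcing the three probabilistic properties---transmission-code goodness, pairwise near-disjointness of $N = 2^{2^{nR}}$ subsets, and coset-balance---to coexist through a single union bound: a naive operator Chernoff bound is defeated by the $|E|^n$ dimension factor in the trace-norm conversion, so one must invoke the covering lemma in its strong trace-norm large-deviation form and trade off $\epsilon$, $\tau$, and $M_S$ carefully. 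The other delicate point is the super-letter reduction, which handles the edge case where no single-letter $P$ simultaneously attains rate near $C(W)$ and a strict secrecy gap; this is where Theorem~\ref{QWC} is essential.
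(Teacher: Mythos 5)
Your part (i) is fine, but both (ii) and (iii) contain genuine gaps, and in each case the paper takes a structurally different and simpler route.

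In part (ii), the step ``the Ahlswede--Winter strong converse combined with a standard expurgation brings $\lambda_2$ down exponentially'' is unfounded: the hypothesis only supplies wiretap ID-codes with $\lambda_1,\lambda_2,\mu\to 0$ at an unspecified speed, and a strong converse is an upper bound on code size --- it cannot be used to improve the error of a code you are merely given. Without exponential decay of $\lambda_2$, the list-decoding extraction of a positive-rate transmission code collapses. More importantly, the whole extraction is unnecessary. The paper's argument (Lemmas~\ref{Lemma 3} and~\ref{Lemma 4}) fixes just \emph{two} messages $i\neq j$ of the ID-code and regards $(P_i,P_j)$ as a binary-input wiretap channel $(\widetilde W,\widetilde V)$ preprocessing $(W^{\otimes n},V^{\otimes n})$; the decodability conditions give $I(Q;\widetilde W)\geq 1-h(\lambda)$ via the Holevo bound applied to the two-outcome measurement $(D_i,\1-D_i)$, while the ID-secrecy condition gives $I(Q;\widetilde V)\leq h(\mu/2)$ via Uhlmann purifications (this is the genuinely quantum step, which your Fannes bound roughly replaces). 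A strict gap for a single pair already certifies $C_S(W,V)>0$ through the multi-letter formula of Theorem~\ref{QWC}, because the pair defines an admissible $U\to X^n\to B^nE^n$. No positive-rate code needs to be built.

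In part (iii), the coset/covering construction cannot reach rate $C(W)$ in general. Your ground set has size $M=2^{nR}$ and the covering lemma forces each coset to contain at least $2^{n(I(P;V)+\tau)}$ codewords, so secrecy of the sets $\cA_i$ requires $R>I(P;V)$ \emph{at the same input $P$} for which $I(P;W)\approx C(W)$. But $C_S(W,V)>0$ only guarantees a secrecy gap for \emph{some} auxiliary $U\to X^{n'}$, whose Holevo information towards Bob may be far below $C(W)$; the capacity-achieving input itself may satisfy $I(P^*;V)\geq I(P^*;W)$, in which case a single coset exceeds the entire codebook and the construction degenerates. Your super-letter reduction does not repair this, because you cannot simultaneously push the per-letter rate of $U$ towards $C(W)$ and preserve the gap. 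The paper's construction (detailed in Theorem~\ref{QCWCdich}) decouples the two requirements precisely to avoid this: a rate-$C(W)$ transmission code over $n$ uses establishes common randomness with \emph{no} secrecy requirement, and only the $O(\sqrt n)$-bit colour $T_i(j)$ is sent through a wiretap code of block length $\lceil\sqrt n\rceil$, which exists at some positive rate as soon as $C_S(W,V)>0$, however small. This decoupling is the essential mechanism behind the dichotomy and is what your proposal is missing.
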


\begin{proof}
For the direct part, the identification code is constructed by means 
of two fundamental codes, following \cite{AhlswedeDueck:ID.B}.

Let $0<\epsilon<C$ be fixed. We know that there is a 
$\delta>0$ such that for sufficiently large $n$ there is an
$(n,M',\lambda(n))$-code $C'=\left\{\left(u'_j,\D'_j | j\in [M']\right)\right\}$ for the cq-channel
with code size $M'=\lceil 2^{n(C(\W)-\epsilon)}\rceil$
and by Theorem~\ref{QWC} an $(\lceil\sqrt{n}\,\rceil,M'',\lambda(\sqrt{n}),\mu(\sqrt{n}))$ wiretap code 
$C''=\left\{\left(u''_k,\D'_k | k\in [M'']\right)\right\}$ for the wiretap cqq-channel $(W; V )$ with code size 
$M''=\lceil2^{\epsilon\sqrt{n}} \rceil$.

In Theorem~\ref{QCWCdich}, the construction for the more general compound model is described. 
We use the same idea here to show the direct part: Alice and Bob first create shared randomness 
with the help of the code $C'$ at a rate equal to the channel capacity. 
A code with an arbitrary small positive rate is then sufficient to use the method of 
Ahlswede and Dueck by sending and decoding the function values.
For this purpose we use $C''$.
%\textcolor{red}{Wir k\"onnen ja noch mal diskutieren, ob wir dies wirklich allgemeiner aufschreiben.}

Furthermore, we have to show that if $C_{SID}(W,V)>0$, then $C_S(W,V)>0$.
We begin with the following two lemmas. To state them, 
we fix two messages $i$ and $j$ in an $(n,N,\lambda_1,\lambda_2,\mu)$ wiretap ID-code,
and consider the uniform distribution $Q$ on $\{i,j\}$.

We define a new wiretap channel $(\widetilde{W},\widetilde{V})$, which has binary 
input $\{i,j\}$ and output states in $\cS(\cB^n)$ and $\cS(\cE^n)$, respectively:
it acts by mapping $i$ and $j$ to the input distribution $P_i$ and $P_j$
on $\cX^n$, respectively, on which the wiretap cqq-channel $(W^{\otimes n},V^{\otimes n})$
operates then, yielding outputs 
\begin{align*}
  \widetilde{W}_i &= W^{\otimes n}(P_i),\ \widetilde{W}_j = W^{\otimes n}(P_j), \\
  \widetilde{V}_i &= V^{\otimes n}(P_i),\ \widetilde{V}_j = V^{\otimes n}(P_j).
\end{align*}

\begin{Lemma}
  \label{Lemma 3} 
  If for any POVM element $0\leq F\leq \1$ on $\cE^{\otimes n}$,
  \[
    \tr(V^{\otimes n}(P_j) F) + \tr(V^{\otimes n}(P_i) (\1-F)) \geq 1-\mu, 
  \]
  then with $Q$ the uniform distribution on $\{i,j\}$,
  \[
    I(Q;\widetilde{V}) \leq h\left(\frac{\mu}{2}\right).
  \]
\end{Lemma}

\begin{proof}
As remarked after Definition~\ref{wID}, the condition on $F$ means that
$\frac12 \| \widetilde{V}_i-\widetilde{V}_j \|_1 \leq \mu$.
By the inequalities relating trace norm and fidelity \cite{FG99},
this means that $F(\widetilde{V}_i,\widetilde{V}_j) \geq 1-\mu$.
Invoking furthermore Uhlmann's theorem \cite{Uhlmann,Jozsa:fidelity},
we know that there exist purifications $\ket{\varphi_i}$ and
$\ket{\varphi_j} \in \cB^{\otimes n} \otimes \cC$
of $\widetilde{V}_i$ and $\widetilde{V}_j$, respectively,
\[
  \tr_C \varphi_x = \widetilde{V}_x,\ x\in\{i,j\},
\]
such that
\[
  F(\varphi_i,\varphi_j) = F(\widetilde{V}_i,\widetilde{V}_j) \geq 1-\mu.
\]

Hence, by the data processing inequality,
\[
  I(Q;\widetilde{V}) \leq I(Q;\varphi),
\]
where we interpret the two states $\varphi_i$ and $\varphi_j$ as
a binary cq-channel. To get the desired upper bound, we need to 
maximise the right hand side above over all pairs of pure states
$\varphi_i$ and $\varphi_j$, with
$F(\varphi_i,\varphi_j) = |\bra{\varphi_i}\varphi_j\rangle| \geq 1-\mu$.
This can be done explicitly because it is effectively a 
two-dimensional problem in the span of the two state vectors.
Indeed, due to unitary invariance, we may w.l.o.g.~write
\begin{align*}
  \ket{\varphi_i} = \alpha\ket{0}+\beta\ket{1}, \\
  \ket{\varphi_i} = \alpha\ket{0}-\beta\ket{1},
\end{align*}
where $\alpha \geq \beta \geq 0$ are real and non-negative and 
$\alpha^2+\beta^2=1$. The fidelity constraint means that
$|\alpha^2-\beta^2|\geq 1-\mu$, which translates into 
$2\beta^2 \leq \mu$. On the other hand, the Holevo information
reduces to $I(Q;\varphi) = S\bigl( \alpha^2\proj{0}+\beta^2\proj{1}\bigr) = h(\beta^2)$,
which can be at most $h\left(\frac{\mu}{2}\right)$, as claimed.
\end{proof}

\medskip
\begin{Lemma}
  \label{Lemma 4}
  If for some POVM element $0\leq D\leq \1$ (for instance the
  decoding POVM element $D_i$ from the ID-code),
  $\tr(W^{\otimes n}(P_i) D) \ge 1-\lambda_1$
  and 
  $\tr(W^{\otimes n}(P_j) D) \le \lambda_2$,
  with $\lambda_1,\lambda_2 \leq \frac12$, then
  \[
    I(Q;\widetilde{W}) \geq h\left(\frac12(1+\lambda_1-\lambda_2)\right)
                             - \frac12 h(\lambda_1) - \frac12 h(\lambda_2).
  \]
\end{Lemma}

\begin{proof}
We construct a binary channel with inputs and outputs $\{i,j\}$,
by performing the binary measurement $(D,\1-D)$ on the states
$\widetilde{W}_x$, $x\in\{i,j\}$, leading to an output $y\in\{i,j\}$
and thus defining a channel $T$ via
\begin{align*}
  T(i|x) &= \tr(W^{\otimes n}(P_x) D),\\
  T(j|x) &= 1-\tr(W^{\otimes n}(P_x) D).
\end{align*}
By data processing (in fact, the original Holevo bound!), we have
\begin{align*}
  I(Q;\widetilde{W})
    &\geq I(Q;T) \\
    &\geq h\!\left(\! \frac12(1+\lambda_1-\lambda_2) \!\right)
          - \frac12 h(\lambda_1) - \frac12 h(\lambda_2),
\end{align*}
the last by an elementary calculation.
\end{proof}

\medskip
Returning to the converse proof, recall that
the existence of identification codes at a positive rate implies
the following for the messages:
\begin{enumerate}
\item For all $i\in [N]$,
      $\tr(W^{\otimes n}(Q_i) D_i) \ge 1-\lambda$;
\item for all $i,j\in [N]$ with $i\neq j$,
      $\tr(W^{\otimes n}(Q_j) D_i) \le \lambda$;
\item for all $i,j\in [N]$ with $i\neq j$ and any operator $F$ on 
${\cB}^{\otimes n}$:
\[
  \tr(V^{\otimes n}(Q_j) F) + \tr(V^{\otimes n}(Q_i) (\1-F)) \geq 1-\lambda,
\]
\end{enumerate}
where $\lambda_1,\lambda_2,\mu \leq \lambda \leq \frac12$.

From the first two properties, it follows by Lemma~\ref{Lemma 4} that 
\[
  I(Q;W)\geq 1-h(\lambda).
\]
By the third property and Lemma~\ref{Lemma 3},
\[ 
  I(Q;V) \leq h\left(\frac{\lambda}{2}\right) \leq h(\lambda).
\]
Thus if $2h(\lambda) \leq 1$, which is true for all $\lambda \leq \frac{1}{15}$, 
we obtain
$I(Q;\widetilde{W})>I(Q;\widetilde{V})$ and so
$C_S(\widetilde{W},\widetilde{V})>0$, which therefore must hold for the
original channel $(W,V)$ as well.
\end{proof}

\medskip
\begin{rem}
In the classical case, or more generally when $\widetilde{V}_i$ and
$\widetilde{V}_j$ commute, the upper bound of Lemma~\ref{Lemma 3}
can be improved to
\[
  I(Q;\widetilde{V}) \leq \mu,
\]
cf.~\cite{FG99}.
To see this, we use a well-known characterisation of the total
variational distance (the commutative trace distance):
\begin{align*}
  \frac12 \| \widetilde{V}_i-\widetilde{V}_j \|_1 
    &= \min t \text{ s.t. } 
              \exists\, \widetilde{V}_i^0, \widetilde{V}_j^0, \widetilde{V}_\perp^0 
              \text{ states with}\\
    &\phantom{========}             
              \widetilde{V}_i = t\widetilde{V}_i^0 + (1-t)\widetilde{V}_\perp^0,\\
    &\phantom{========}             
              \widetilde{V}_j = t\widetilde{V}_j^0 + (1-t)\widetilde{V}_\perp^0.
\end{align*}
The optimal $(1-t)\widetilde{V}_\perp^0$ is simply 
$\min\bigl(\widetilde{V}_i,\widetilde{V}_j\bigr)$, which for classical 
discrete probability distributions is defined pointwise, and for
commuting density operators via functional calculus. In particular, 
we can choose states $\widetilde{V}_i^0$, $\widetilde{V}_j^0$
and $\widetilde{V}_\perp^0$, such that 
\[
  \widetilde{V}_i = \mu\widetilde{V}_i^0 + (1-\mu)\widetilde{V}_\perp^0, \quad 
  \widetilde{V}_j = \mu\widetilde{V}_j^0 + (1-\mu)\widetilde{V}_\perp^0.
\]
This can be interpreted as a factorisation of the channel 
$\widetilde{V} = \widetilde{V}^0 \circ \cE_{\mu}$
into an erasure channel $\cE_{\mu}:\{i,j\} \longrightarrow \{i,j,\perp\}$,
with $\cE_\mu(x|x) = \mu$ and $\cE_\mu(\perp|x)=1-\mu$,
and the cq-channel $\widetilde{V}^0:\{i,j,\perp\} \longrightarrow \cS(E^n)$
defined by the states $\widetilde{V}_i^0$, $\widetilde{V}_j^0$
and $\widetilde{V}_\perp^0$. By data processing,
\[
  I(Q;\widetilde{V}) \leq I(Q;\cE_\mu) = \mu,
\]
and we are done.
\qed
\end{rem}

\section{Secure identification via \protect\\ robust wiretap cqq-channels}
\label{Securerobust}
In this section we consider robust and secure cq-channels. 
The results for transmission capacities can be found in \cite{BCCD14}
and \cite{BCDN17}: In \cite{BCCD14} the secrecy of the classical compound 
channel with quantum wiretapper and channel state information (CSI)
at the transmitter was derived. 
Furthermore, a lower bound on the secrecy capacity of this channel without 
CSI and the secrecy capacity of the compound 
classical-quantum wiretap channel with CSI at the transmitter is determined. 
In \cite{M15}, a multi-letter formula for the secrecy capacity of the compound 
classical-quantum wiretap channel is given.
We will show that the capacity of a compound 
wiretap cqq-channel again satisfies a dichotomy theorem.

\begin{Definition}
Let $\Theta$ and $\Sigma$ be index sets and let $\W=\{W_t:\X\rightarrow\cS(\cB):t\in\Theta\}$ 
and $\V=\{V_s:\X\rightarrow\cS(\cE):s\in\Sigma\}$ be compound cq-channels.  
We call the pair $(\W,\V)$ a \emph{compound wiretap cqq-channel}. 
The channel output of $\W$ is available to the legitimate receiver (Bob) and the 
channel output of $\V$ is available to the wiretapper (Eve). 
We may sometimes write the channel as a family of pairs 
$(\W,\V)=\left(W_t,V_s\right)_{t \in \Theta, s\in\Sigma}$. 
\end{Definition}

\begin{Definition}
An $(n,M,\lambda)$ transmission code for the compound wiretap cqq-channel
$(W_t,V_s)_{t \in \Theta, s\in\Sigma}$ consists of a family $\C=(P_i,D_i)_{i\in[M]}$ 
where the $P_i$ are probability distributions on
$\cX^n$ and $(D_i)_{i\in[M]}$ 
a POVM on ${\cB}^{\otimes n}$ such that  
\begin{align*}
  \forall i\in[M]  \quad & \sup_{t\in\Theta} 1-\tr W^{\otimes n}(P_i)\!\cdot\!D_i \leq \lambda, \\
  \forall i,j\in[M]\quad & \sup_{s\in\Sigma} \frac12 \| V_s^{\otimes n}(P_i) - V_s^{\otimes n}(P_j) \|_1 \leq \mu.
\end{align*}
\end{Definition}
The capacity is defined as before.

\begin{Theorem}[\cite{BCDN17}]
The secrecy capacity of a compound wiretap cqq-channel $(\W,\V)$ is given by
\begin{align*}
  C_{S}(\W,\V) &= \lim_{n\rightarrow \infty} 
                    \sup_{U \rightarrow X^n \rightarrow (B_t^nE_s^n)} \\
               &\phantom{===}
                   \frac1n \left(\inf_{t\in\Theta} I(U;B^n_t)
                                    - \sup_{s\in\Sigma} I(U;E^n_s) \right),
\end{align*}
where $B_t$ are the resulting random quantum states at the
output of legal receiver channels and $E_s$
are the resulting random quantum states at the output of wiretap channel.
\qed
\end{Theorem}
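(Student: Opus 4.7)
The plan is to prove the multi-letter formula by combining the standard wiretap coding idea (superposition random coding with an inner confusion codebook) with compound channel techniques, and to prove the matching converse via Fano's inequality together with a Fannes-type continuity bound applied to the privacy condition.

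For the direct part, I would fix a block length $n$ and a classical-quantum Markov chain $U \rightarrow X^n \rightarrow (B_t^n E_s^n)$ that is $\delta$-close to achieving the supremum. The idea is to build a codebook with two indices: an outer confidential message $m\in[M]$ with $M \approx 2^{n(\frac{1}{n}\inf_t I(U;B_t^n) - 2\delta)}$, and an inner confusion index $k\in[K]$ with $K \approx 2^{n(\frac{1}{n}\sup_s I(U;E_s^n) + \delta)}$, each codeword $x^n(m,k)$ generated i.i.d.\ from the conditional distribution of $X^n$ given $U$. For reliability at Bob, I would use a universal decoding POVM for the compound family $\W$, as in Theorem~\ref{CCQ-cap} (Mosonyi-style universal decoders or the typical-projector method of \cite{BB09}); this delivers decoding error going to zero uniformly over $t\in\Theta$. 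For privacy, I would invoke a quantum covering/resolvability lemma (e.g.\ via the operator Chernoff bound of Ahlswede-Winter, as in Devetak's wiretap argument) to show that, for each fixed $s\in\Sigma$ and each $m$, the state $\frac{1}{K}\sum_k V_s^{\otimes n}(x^n(m,k))$ is close in trace norm to the fixed state $V_s^{\otimes n}(P_U)$ with doubly-exponentially high probability in $n$; the rate choice of $K$ is exactly calibrated so that this concentration succeeds.

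The core obstacle is making the privacy guarantee \emph{uniform} over $\Sigma$. When $\Sigma$ is infinite I would first reduce to a finite $\epsilon$-net of the family $\V$, using a continuity argument for trace distance in the channel parameter, and then apply a union bound over the net against the doubly-exponential concentration. Combined with an expurgation step to eliminate a vanishing fraction of bad outer messages, this produces a single deterministic codebook satisfying both the uniform reliability and uniform privacy requirements, yielding achievability of the formula in the limit $n\to\infty$.

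For the converse, consider any $(n,M,\lambda,\mu)$-wiretap code with uniform message $J\in[M]$. For each $t\in\Theta$, Fano's inequality for the main channel output $B_t^n$ gives $\log M \leq I(J;B_t^n) + n\lambda\log M + 1$, and hence $\log M \leq I(J;B_t^n) + o(n)$ once $\lambda\to 0$ polynomially. For each $s\in\Sigma$, the privacy condition combined with Fannes' inequality (as noted after the definition of $C_S$) yields $I(J;E_s^n) \leq n\mu\log|E| + h(\mu) = o(n)$. Combining these and taking infimum over $t$ and supremum over $s$ on the right-hand side, I obtain
\[
  \frac{1}{n}\log M \leq \frac{1}{n}\Bigl( \inf_{t\in\Theta} I(J;B_t^n) - \sup_{s\in\Sigma} I(J;E_s^n) \Bigr) + o(1),
\]
which is at most the value of the multi-letter formula with the choice $U=J$. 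Passing to the limit along an optimal sequence completes the converse and matches the direct part.
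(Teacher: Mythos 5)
This theorem is imported by the paper from \cite{BCDN17} (cf.\ also \cite{M15}); the paper itself gives no proof, so your argument can only be judged on its own terms. Your overall architecture --- superposition wiretap coding with a confusion codebook, a universal compound decoder for $\W$, a covering/resolvability lemma with operator-Chernoff concentration for privacy, a net plus union bound to make privacy uniform over $\Sigma$, and a Fano-plus-Fannes weak converse with $U=J$ --- is exactly the standard route to this multi-letter formula, and the converse as you state it is sound (the privacy criterion in the paper is a pairwise trace-norm bound uniform in $s$, which via Fannes gives $I(J;E_s^n)=o(n)$ uniformly, and $J\to X^n\to B_t^nE_s^n$ is a legitimate choice of $U$ since the encoder is stochastic; one only needs to add that the multi-letter expression is superadditive so the limit over $n$ exists).

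There is, however, one concrete miscalibration in your direct part. You set the confidential message set to size $M\approx 2^{n(\frac1n\inf_t I(U;B_t^n)-2\delta)}$ \emph{and} the confusion index to size $K\approx 2^{n(\frac1n\sup_s I(U;E_s^n)+\delta)}$, with Bob required to decode the pair $(m,k)$. Then the total codebook has size $MK\approx 2^{\inf_t I(U;B_t^n)+\sup_s I(U;E_s^n)-n\delta}$, which exceeds the reliable-transmission threshold $\inf_t I(U;B_t^n)$ for the compound channel $U\to B_t^n$ whenever Eve's information is non-negligible; the universal decoder then fails and the reliability half of the construction collapses. The correct calibration is $\log(MK)\lesssim \inf_t I(U;B_t^n)-n\delta$, i.e.\ $M\approx 2^{\inf_t I(U;B_t^n)-\sup_s I(U;E_s^n)-O(n\delta)}$, which is of course exactly the rate the theorem claims. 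With that repaired, the remaining technical points you gesture at (shrinking the $\epsilon$-net over $\Sigma$ with $n$, since trace distances amplify linearly under tensor powers, and expurgation of bad outer messages) are handled the way you describe, and the proof goes through.
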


\begin{Definition}
An \emph{$(n,N,\lambda_1,\lambda_2,\mu)$ wiretap ID-code} for the 
compound wiretap cqq-channel $(\W,\V)$ is a set of pairs 
$\{(P_i,D_i):i\in[N]\}$ where the $P_i$ are probability distributions on
$\X^n$ and the $D_i$, $0 \le D_i \leq1$, denote operators on 
$\cB^{\otimes n}$ such that for all $i \neq j\in[N]$ and all $0\leq F\leq \1$,
\begin{align}
  \inf_{t\in\Theta} \tr W_t^{\otimes n}(Q_i) D_i &\geq 1-\lambda_1, \nonumber \\
  \sup_{t\in\Theta} \tr W_t^{\otimes n}(Q_j) D_i &\leq \lambda_2,   \nonumber \\
  \inf_{s\in\Sigma}  \tr V_s^{\otimes n}(Q_j) F + \tr V_s^{\otimes n}(Q_i) (\1-F) &\geq 1-\mu.   \label{CQCCIDe} 
\end{align}
We define $N(n,\lambda_1,\lambda_2,\mu)$ as the largest $N$
satisfying the above definition for a given $n$ and set $\lambda_1,\lambda_2,\mu$
of errors.
\end{Definition}

\begin{Definition}
The secure identification capacity $C_{SID}(\W,\V)$ of a compound
wiretap cqq-channel $(\W,\V)$ is defined as $C_{SID}(\W,\V) \=$
\[
   \inf_{\lambda_1,\lambda_2,\mu>0} 
                     \liminf_{n\rightarrow\infty} \frac1n \log\log N(n,\lambda_1,\lambda_2,\mu).
\]
\end{Definition}

\medskip\noindent
Again we get a dichotomy result.

\begin{Theorem}
\label{QCWCdich}
Let $(\W,\V)$ be a compound wiretap cqq-channel. Then,
\begin{align*}
  C_{\text SID}(\W,\V)
   &= C^{\rm sim}_{\text SID}(\W,\V) \\
   &= \begin{cases}
              C(\W) & \text{if } C_S(\W,\V) > 0,\\
              0    & \text{if } C_S(\W,\V) = 0.
       \end{cases}
\end{align*}
\end{Theorem}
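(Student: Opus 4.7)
The plan is to mirror the dichotomy proof of Theorem~\ref{Theorem D2} and lift each step to the compound setting by replacing single-channel quantities with uniform bounds over the parameter sets $\Theta$ and $\Sigma$. The argument splits into three pieces: a direct part that achieves rate $C(\W)$ when $C_S(\W,\V)>0$, a general upper bound $C_{\rm SID}(\W,\V)\le C(\W)$, and a converse showing that a positive secure identification rate forces positive secrecy capacity.

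For the direct part, assume $C_S(\W,\V)>0$ and fix $0<\epsilon<\min\bigl(C(\W),C_S(\W,\V)\bigr)$. Split the block length as $n=n_1+n_2$ with $n_2=\lceil\sqrt{n}\,\rceil$. By Theorem~\ref{CCQ-cap} there exists an $(n_1,M',\lambda_n)$ transmission code $C'=\{(c_m,E_m):m\in[M']\}$ for $\W$ with $M'=\lceil 2^{n_1(C(\W)-\epsilon)}\rceil$ and $\lambda_n\to 0$; and by the compound cqq wiretap capacity theorem of~\cite{BCDN17} (the compound analogue of Theorem~\ref{QWC}) there exists a wiretap transmission code $C''=\{(q_l,F_l):l\in[M'']\}$ for $(\W,\V)$ on $n_2$ letters with $M''=\lceil 2^{\epsilon\sqrt{n}}\rceil$ and vanishing error and leakage parameters uniform in $t\in\Theta,s\in\Sigma$. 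Following~\cite{AhlswedeDueck:ID.B}, choose a family of hash functions $f_i:[M']\to[M'']$, $i\in[N]$, of size $N=2^{\Omega(M')}$ satisfying $\Pr_{K\sim\mathrm{Unif}[M']}[f_i(K)=f_j(K)]\le\lambda$ for all $i\neq j$; this gives $\frac1n\log\log N\to C(\W)-\epsilon$. Encode message $i$ stochastically by drawing $K\in[M']$ uniformly and sending $c_K\otimes q_{f_i(K)}\in\X^n$, with decoding elements $D_i=\sum_{k\in[M']}E_k\otimes F_{f_i(k)}$, which are coarse-grainings of the single product POVM $(E_k\otimes F_l)_{k,l}$, so the code is simultaneous. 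Reliability uniformly over $\Theta$ follows from the error guarantees of $C',C''$ and the hash separation, while secrecy uniformly over $\Sigma$ follows because the first register carries a uniform $K$ (independent of $i$) and the second carries $f_i(K)$ through the wiretap code of $C''$, so $\tfrac12\|V_s^{\otimes n}(P_i)-V_s^{\otimes n}(P_j)\|_1\to 0$ uniformly in $s$.

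The upper bound $C_{\rm SID}(\W,\V)\le C(\W)$ is immediate, since any wiretap ID-code is in particular an ID-code for the compound cq-channel $\W$ alone, and Theorem~\ref{IDcompound} gives $C_{\rm ID}(\W)=C(\W)$. For the remaining converse, suppose $C_{\rm SID}(\W,\V)>0$ and pick a positive-rate $(n,N,\lambda_1,\lambda_2,\mu)$-code with all three error parameters bounded by some small $\lambda>0$. For any two distinct messages $i,j$, form the binary-input compound wiretap cqq-channel $(\widetilde{\W},\widetilde{\V})$ on alphabet $\{i,j\}$ defined by $\widetilde{W}_t(x)=W_t^{\otimes n}(P_x)$ and $\widetilde{V}_s(x)=V_s^{\otimes n}(P_x)$. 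Applying Lemma~\ref{Lemma 4} with POVM $(D_i,\1-D_i)$ for each $t\in\Theta$ and Lemma~\ref{Lemma 3} for each $s\in\Sigma$ (using that the secrecy condition~\eqref{CQCCIDe} holds uniformly in $s$), we obtain, with $Q$ uniform on $\{i,j\}$,
\begin{align*}
  &\inf_{t\in\Theta} I(Q;\widetilde{W}_t) - \sup_{s\in\Sigma} I(Q;\widetilde{V}_s) \\
  &\quad\ge h\!\left(\tfrac12(1+\lambda_1-\lambda_2)\right) - \tfrac12 h(\lambda_1) - \tfrac12 h(\lambda_2) - h\!\left(\tfrac{\lambda}{2}\right) > 0
\end{align*}
for $\lambda$ sufficiently small. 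This certifies $C_S(\widetilde{\W},\widetilde{\V})>0$, and since the binary channel is realised by specific input distributions on $\X^n$, it follows that $C_S(\W,\V)>0$, contradicting the hypothesis.

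The main obstacle is ensuring, in the direct part, that the wiretap ingredient code $C''$ exists with reliability uniform over $\Theta$ and leakage uniform over $\Sigma$, and then verifying that the randomisation over $K$ in the Ahlswede--Dueck concatenation preserves these uniform guarantees in the final wiretap ID-code. Once the compound wiretap capacity theorem of~\cite{BCDN17} is in hand, this reduces to the standard trace-norm averaging argument, which is insensitive to the choice of $s$. Simultaneity and the identity $C_{\rm SID}^{\rm sim}(\W,\V)=C_{\rm SID}(\W,\V)$ follow automatically, as the construction above is simultaneous by design and the converses are stated for the larger non-simultaneous quantity.
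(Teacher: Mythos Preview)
Your proposal is correct and follows essentially the same approach as the paper: the Ahlswede--Dueck two-code concatenation (compound cq transmission code of length $\approx n$ plus compound wiretap cqq code of length $\approx\sqrt{n}$) for the direct part, the trivial upper bound via Theorem~\ref{IDcompound}, and the contrapositive converse via Lemmas~\ref{Lemma 3} and~\ref{Lemma 4} applied pointwise over $t\in\Theta$ and $s\in\Sigma$. The only cosmetic difference is that the paper carries out the random-colouring existence argument explicitly (Bernoulli variables plus Hoeffding) where you invoke the resulting hash family directly, and the paper is equally brief about the secrecy verification in the direct part.
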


\begin{proof}
For the direct part, the identification code is again constructed by means of 
two fundamental codes, following Ahlswede and Dueck \cite{AhlswedeDueck:ID.B}.

Let $0<\epsilon<C$ be fixed. 
We know that there is a 
$\delta>0$ such that for sufficiently large $n$ there is an
$(n,M',\lambda(n))$-code for the compound cq-channel
\begin{equation}
  \label{qorgcode1}
  C'=\left\{\left(u'_j,\D'_j | j\in [M']\right)\right\}
\end{equation}
and an $(\lceil\sqrt{n}\,\rceil,M'',\lambda(\sqrt{n}))$-code for the compound wiretap cqq-channel
\begin{equation}
  \label{qorgcode2}
  C''=\left\{\left(u''_k,\D''_k | k\in [M'']\right)\right\}
\end{equation}
with code size $M'=\lceil 2^{n(C(\W)-\epsilon)}\rceil$ and
$M''=\lceil2^{\epsilon\sqrt{n}} \rceil$.
Now consider a family of maps 
$\left(T_i|i\in [N]\right)$, 
$T_i:[M'] \to [M''],\ \forall\  i\in [N]$,
where $T_i(j)$ yields the colour of 
code word $i$ under colouring number $j$. Thus we could construct an ID-code for the compound 
wiretap cqq-channel
$\{(Q_i,\D_i)|i\in [N]\}$ in the following way: Let
\[
  Q_i(x^n)=\left\{
   \begin{array}{c@{\quad}l}
    \frac{1}{M'} & \text{if }\exists j:x^n=u'_j\cdot u''_{T_i(j)} \\
    0 & \text{otherwise}.
  \end{array}
  \right.
\]
This means that we choose a colouring at random and calculate the 
corresponding colour of our message. We define the POVMs as
\[
  \D_i = \sum_{j=1}^{M'}\D'_j\otimes \D''_{T_i(j)}.
\]
Now we will show by random choice of the family of maps that there exists
a family which induces an ID-code for the cq-channel with the desired error probabilities.
For $i\in [N]$ and $j\in [M']$, take independent random
variables $U_{ij}$ with uniform distribution on the set 
$\{u'_j \cdot u''_k|k\in [M'']\}$. Collecting all RVs for one
message $i$ we get the random colour sets 
$$
\bar{U_i}=\{U_{i1}, \cdots, U_{iM'}\} \quad i \in \N
$$
and we will use equidistribution on $\bar{U_i}$ (which is 
a random probability distribution) as encoding distribution
$\bar{Q}_i$ for
message $i$. Therefore
$$
\D(\bar{U_i})=\sum_{j=1}^{M'}\D(U_{ij})
$$
with
$$
\D(U_{ij})=D'_j \otimes \D''_k,\quad \mbox{if } U_{ij}=u'_j\cdot u''_k.
$$
The random ID-code for the compound cq-channel is
$$
 \left\{ \left( \bar{Q}_i, 
 \D(\bar{U}_i \right) | i\in\N\right\}.
$$

For errors of the first kind we have for all possible realisations $U_i$ of
$\bar{U_i}$
\begin{equation*}
\min_{t\in\Theta} \tr W_t^{\otimes n}(Q_i) D(U_i) \ge 1-(\lambda(n)+\lambda(\sqrt{n})).
\end{equation*}
Thus errors of the first kind
tend to zero for $n\to\infty$.

Now we need to prove that with positive probability we get a code
with sufficiently small probability for errors of the second kind. 
Then there is a realisation with this error probability, and therefore
we are done.

We will analyse the overlapping between the $U_i$ (which determines
the probability for errors of the second kind as we use equidistribution
on the $U_i$ as encoding distribution for message $i$). To do this, 
we will define a Bernoulli chain counting the intersecting elements between
a realisation of $\bar{U_1}$ and $U_2$:
Let $U_1$ be fixed and define, for $j\in [M']$,
\[
  \Psi_j = \Psi_j\left(\bar{U_2}\right)
         = \begin{cases}
             1 & \text{ if } U_{2j}\in U_1,\\
             0 & \text{ if } U_{2j}\not\in U_1.
           \end{cases}
\]

This means that $\Psi_j=1$ iff messages 1 and 2 get the same colour under
colouring $j$. The RVs $U_{2j}$ are independent, therefore the $\Psi_j$ are
independent with
$$
E\Psi_j=\frac{1}{M''}.
$$

For $M''=\left\lceil 2^{\sqrt{n}\epsilon}\right\rceil$, we have
\begin{align*}
  D&\left(\lambda \bigg\|\frac{1}{M''}\right)\\
  &= \lambda\log\left(\lambda \left\lceil 2^{\sqrt{n}\epsilon}\right\rceil\right)+
 (1-\lambda)\log\frac{1-\lambda}
   {1-\left\lceil 2^{\sqrt{n}\epsilon}\right\rceil} \\
  &= \lambda\log\left\lceil2^{\sqrt{n}\epsilon}\right\rceil+
 \lambda\log\lambda+
 (1-\lambda)\log(1-\lambda)\\
  & 
- \; (1-\lambda)
  \log \left(1-\frac{1}{\left\lceil 2^{\sqrt{n}\epsilon}\right\rceil}\right)\\
  &\geq 
 \lambda\log\left(2^{\sqrt{n}\epsilon}\right)-
 \log\left(1-\frac{1}{2}\right)\\
  &= \lambda\cdot\sqrt{n}\cdot\epsilon+1.
\end{align*}

Now consider a realisation $U_2$ of $\bar{U_2}$. We have
\begin{equation*}
  \forall u\in U_1\setminus U_2: \; \max_{t\in\Theta} \tr(W_{t,u}^n\D(U_2))\le 2^{-n\delta} +
  2^{-\sqrt{n}\delta}.
\end{equation*}
This follows immediately from the error bounds of our original transmission
codes~(\ref{qorgcode1}) and~(\ref{qorgcode2}).

If now $\lambda\in\left(0,1\right)$ is given, we get that with positive 
probability, the events
\begin{equation}
  \label{qeq18}
  \max_{t\in\Theta} \tr W_{t,u}^nD\left(\bar{U_2}\right) \le
  \lambda+2\cdot 2^{-2\sqrt{n}\delta}
\end{equation}
and
\begin{equation}
  \label{qeq19}
  \max_{t\in\Theta}\tr W_{t,u}^nD\left(U_1\right) \le
  \lambda+2\cdot 2^{-2\sqrt{n}\delta}
\end{equation}
occur, provided that $n$ is large enough. Therefore there is a realisation
$U_2$ of $\bar{U_2}$ for which inequalities~(\ref{qeq18}) 
and~(\ref{qeq19}) hold, which leads to a code of size two. Repeating 
this argument for $i=3,\ldots,N$ and upperbounding the probability that 
the newly selected $U_i$ does not fulfil inequalities analogous to~(\ref{qeq18})
and~(\ref{qeq19}) for a certain $U_j, j\in [N-1]$ instead of 
$U_1$, by the sum of the probabilities for each $U_j$, we get that an 
$\left(n,N,\lambda+2\cdot2^{-\sqrt{n}\delta}\right)$-code exists, if
\begin{equation*}
%\label{qeq21}
\left(N-1\right) \Pr\left\{ \sum_{j=1}^{M'}\Psi_j>M'\lambda\right\} < 1.
\end{equation*}
But if $N\le 2^{2^{n(C-\epsilon)}(\lambda\sqrt{n}\epsilon-1)}$, 
then $N-1 < 2^{M' (\lambda\sqrt{n}\epsilon-1)}$. 
Therefore $2^{-M' (\lambda\sqrt{n}\epsilon-1)} < \frac{1}{N-1}$, and hence
by Hoeffding's bound
\[
  \Pr\left\{ \sum_{j=1}^{M'}\Psi_j>M'\lambda\right\} < \frac{1}{N-1}.
\]
Thus for all $\lambda\in (0,1)$ and for all $\epsilon>0$ 
\begin{align*}
  \lim_{n\to\infty}\frac{1}{n}& \log\log N\left(n,\lambda\right) \\
  &\geq \lim_{n\to\infty}
          \frac{1}{n} \log\log 2^{2^{n(C-\epsilon)}(\lambda\sqrt{n}\epsilon-1)}        \\
  &\geq \lim_{n\to\infty} \frac{n(C-\epsilon) + \log (\lambda\sqrt{n}\epsilon-1)}{n} \\
  &=    C-\epsilon.
\end{align*}
As in the classical case, 
it follows from the construction of the code for the compound wiretap cqq-channel 
that the wiretapper can not identify the second part of the message, and therefore 
condition (\ref{CQCCIDe}) is satisfied.

It is clear that the capacity $C_{SID}(\W,\V)$ of the channel with a wiretapper 
cannot be bigger than the capacity $C_{ID}(\W)$ of the channel without a wiretapper. 
Therefore it remains to be shown that 
$C_{SID}(\W,\V)=0$ necessarily, if $C_{S}(\W,\V)=0$.
We will show the contrapositive, that if $C_{SID}(\W,\V)>0$, then $C_{S}(\W,\V)> 0$.
Recall Lemmas~\ref{Lemma 3} and~\ref{Lemma 4} from Section~\ref{Secure},
from which we get directly (denoting by $Q$ the uniform distribution on
a set $\{i,j\}$ of two messages):

\begin{itemize}
\item If for $i\neq j\in [N]$, it holds for all POVM elements 
$F$ on ${\cB}^{\otimes n}$ and all $s\in\Sigma$ that
\[
  \tr(V_s^{\otimes n}(P_j) F) + \tr(V_s^{\otimes n}(P_i) (\1-F)) \geq 1-\lambda,
\]
then
\[
  \sup_{s\in\Sigma} I(Q;V_s^{\otimes n}) \leq h\left(\frac{\lambda}{2}\right).
\]

\item If for $i\neq j\in [N]$, it holds for all $t\in\Theta$ that
\begin{align*}
   \tr(W_t^{\otimes n}(P_i) D_i) &\ge 1-\lambda, \text{ and} \\
   \tr(W_t^{\otimes n}(P_j) D_i) &\le \lambda,
\end{align*}
then
\[
  \min_{t\in\Theta} I(Q;W_t^{\otimes n}) \geq 1-h(\lambda).
\]
\end{itemize}

The existence of an ID-wiretap code with positive rate implies the
above conditions. Hence, as before, we obtain if 
$\lambda \leq \frac{1}{15}$, then 
\begin{align*}
  C_{S}(\W,\V) &\geq \frac1n \left( \inf_{t\in\Theta} I(Q;W_t^{\otimes n}) 
                                   - \sup_{s\in\Sigma} I(Q;V_s^{\otimes n}) \right)\\ 
               & > 0,
\end{align*}
and we are done.
\end{proof}

\section{Identification in the \protect\\ presence of a jammer}
\label{Jammer}
In this section we perform the same analysis for the case of arbitrarily varying 
cq-channels, with analogous findings. We point out, however, that we
only consider \emph{finite} index sets $\Theta$ throughout this and the
following section.

\begin{Definition}
\label{CQAVC}
Let $\Theta$ be a finite index set, 
$\X$ a finite set and $\B$ a finite-dimensional Hilbert space. 
Let $W_t:\X \longrightarrow \cS(\B)$ be a cq-channel for every $t\in \Theta$:
\[
  W_t:\X\ni x\mapsto W_{t}(x)\in \mathcal{S}(\B), \quad t\in \Theta.
\]
Let $t^n\in\Theta^n$ be a state sequence.
The memoryless extension of the cq-channel $W_{t^n}$ is given by 
$W_{t^n}(x^n) =  W_{t_1}(x_1)\otimes \ldots \otimes W_{t_n}(x_n)$ for $x^n \in \X^n$.
We call $\W \= \{W_t\}_{t\in \Theta}$ \emph{an arbitrarily varying cq-channel}.
\end{Definition}

%This is a generalization of Definition~\ref{CQAV}. 
In this case a jammer can change the channel during the transmission.
\begin{Definition}
An \emph{$(n,M,\lambda)$-code} for the arbitrarily varying cq-channel $\W$ is a family 
$\mathcal{C}\=\left( (P_m,D_m) : m\in[M] \right)$ consisting of pairs of stochastic
encodings given by code word probability distributions $P_m$ over $\X^n$ 
and positive semi-definite operators  $D_i$ on $\B^{\otimes n}$, 
forming a POVM, i.e.~$\sum_{m=1}^{M} D_m = \1$, such that
\begin{equation*}
  %\label{error-def-av1}
   \max_{t^n\in\Theta^n} \max_{i\in[M]} 1-\tr W_{t^n}^{\otimes n}(P_i) D_i  
    \leq \lambda.
\end{equation*}
Like in the compound case, here we allow explicitly stochastic encoders. 
The number $M$ is called the \emph{size} of the code, and $\lambda$ the error 
probability. The maximum $M$ for given $n$ and $\lambda$ is
denoted $M(n,\lambda)$, extending the definition for a cq-channel
(which is recovered for $|\Theta|=1$).

The capacity of $\W$ is defined as before,
\[
  C(\W) = \inf_{\lambda>0} \liminf_{n\rightarrow\infty} \frac1n \log M(n,\lambda).
\]
\end{Definition}

A more intuitive description of the arbitrarily varying cq-channel is that 
a jammer tries to prevent the
legal parties from communicating properly. He may change his input in every 
channel use and is not restricted 
to use a repetitive probabilistic
strategy. Quite on the contrary, it is understood that the sender and the receiver
have to select their coding scheme first. After that the jammer makes his choice
of the sequence of channel states.
The sender and receiver do not know which channel from the set $\W$ 
is actually used; their prior knowledge is 
merely that the channel is memoryless and belongs to the set $\W$. Their task
is to prepare for the worst case among those.

%\textcolor{red}{AW: The following has to be checked for relevance and fixed.
%In \cite{AB07} the capacity of arbitrarily varying cq-channel is analyzed. 
%A lower bound of the capacity has been given. An alternative proof
%of the results of \cite{AB07} and a proof of the strong converse are given 
%in \cite{BBJN12}. In \cite{ABBN13} the
%Ahlswede Dichotomy for the arbitrarily varying cq-channels is
%established, and a sufficient and necessary condition for the zero deterministic
%capacity is given. }

\begin{Definition}
We say that the arbitrarily varying cq-channel
$\W = \{W_t : t \in \Theta\}$ is \emph{symmetrizable} if there exists a
parametrized set of distributions 
$\{\tau(\cdot|x): x\in\X\}$, on $\Theta$
also known as a channel $\tau$ from $\X$ to $\Theta$,
such that for all $x,x'\in\X$,
\[
  \sum_{t\in\Theta}\tau(t|x)W_{t}(x') = \sum_{t\in\Theta}\tau(t|x')W_{t}(x).
\]
\end{Definition}

To formulate the capacity theorem of \cite{AB07},
we need the following notations.
For an arbitrarily varying cq-channel $\mathcal{W}$ we denote its convex hull by $\conv(\mathcal{W})$. 
It is defined as follows: 
\begin{align*}%\label{eq:conv-hull}
 \conv(\mathcal{W})
     =\left\{W_{q} : W_q=\sum_{t\in \Theta}q(t)W_t,\ q \in\P(\Theta), \right\}.
\end{align*}

Furthermore, we set
\begin{align*}
  C_{\text{ran}}(\W) \= \max_{p\in\P(\X)} \min_{W\in\conv(\W)} I(p;W). 
  \label{avcqc_rand_dir}
\end{align*}
This is called the random coding capacity of the channel. 
Under this notion, the encoding with a stochastic encoder is 
generalized to a (correlated) random code.
It is assumed that the sender and the receiver have access to some 
source with correlated randomness, which, however, is secret from the 
jammer. Here we need just the quantity 
to give the transmission capacity of the arbitrarily varying cq-channel.

\begin{Theorem}[\cite{AB07}]
Let $\W$ be an arbitrarily varying cq-channel. Then its capacity
$C(\W)$ is given by
\[
  \hspace{0.5cm}
  C(\W) = \begin{cases}
            0                  & \text{if $\W$ is symmetrizable,} \\
            C_{\text{ran}}(\W) & \text{otherwise.}
                                 \hspace{2.3cm} \blacksquare
          \end{cases}
\]
\end{Theorem}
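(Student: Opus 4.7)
The plan is to prove the two directions of the dichotomy separately, based on a common upper bound $C(\W)\le C_{\text{ran}}(\W)$ that underlies both. First I would establish that upper bound. An i.i.d.\ jammer strategy with distribution $q\in\P(\Theta)$ induces the memoryless channel $W_q^{\otimes n}$ with $W_q=\sum_tq(t)W_t\in\conv(\W)$. Any deterministic code for $\W$ must in particular succeed against this i.i.d.\ jammer, hence $C(\W)\le C(W_q)=\max_p I(p;W_q)$ for every $q\in\P(\Theta)$. Minimising over $q$ and applying Sion's minimax theorem (using that $I(p;W)$ is concave in $p$, linear in $W$, and both $\P(\X)$ and $\conv(\W)$ are compact convex) yields
\[
  C(\W)\le\min_{W\in\conv(\W)}\max_{p}I(p;W)=\max_{p}\min_{W\in\conv(\W)}I(p;W)=C_{\text{ran}}(\W).
\]

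For the symmetrizable case, I would adapt the classical Ericson/Csisz\'ar--Narayan symmetrisation argument to the cq setting. Consider any code $\{(x_m^n,D_m):m\in[M]\}$ with $M\ge 2$; the stochastic-encoder case follows by linearity of the identity below. Let the true message $J\in[M]$ be uniform, and let the jammer sample an independent auxiliary $B\in[M]$ uniformly and then draw $t_k\sim\tau(\cdot|x_{B,k})$ independently. The joint classical-quantum state of $(J,B)$ and the receiver is
\[
  \rho_{JB}=\frac{1}{M^{2}}\sum_{j,b}\proj{j,b}\otimes\bigotimes_{k=1}^{n}\sum_{t}\tau(t|x_{b,k})W_t(x_{j,k}).
\]
Tensorising the symmetrizability identity $\sum_t\tau(t|x)W_t(x')=\sum_t\tau(t|x')W_t(x)$ coordinate-wise shows that $\rho_{JB}$ is invariant under the swap $J\leftrightarrow B$. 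Hence for any decoding POVM $(D_m)_m$, $\Pr(\hat J=J)=\Pr(\hat J=B)$. Since $J,B$ are independent uniform, $\Pr(J=B)=1/M$, and $2\Pr(\hat J=J)\le 1+1/M$, giving an average error at least $\tfrac12-\tfrac{1}{2M}$, incompatible with vanishing error at positive rate. So $C(\W)=0$.

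For the non-symmetrizable direct part, I would follow Ahlswede's elimination technique adapted to cq channels. Fix $p^{\ast}$ realising $C_{\text{ran}}(\W)$ and generate an i.i.d.\ random codebook of rate $C_{\text{ran}}(\W)-\epsilon$, decoded by a universal HSW-style POVM tuned to the worst $W\in\conv(\W)$. A type-counting reduction replaces $\Theta^{n}$ by a polynomial family of representative empirical types, and a quantum operator Chernoff bound plus a union bound over this family implements the robustification step: the random code can be replaced by a uniform mixture over $\mathrm{poly}(n)$ deterministic codes, still yielding vanishing average error for every $t^{n}$. Finally, the sender derandomises by prepending a short sub-block of length $o(n)$ carrying $O(\log n)$ classical bits that identify the chosen code in the list. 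Non-symmetrizability is what guarantees a strictly positive deterministic capacity on this sub-block, which is exactly what is needed to transmit the prefix robustly; the overall rate loss is $o(1)$ and the achievability of $C_{\text{ran}}(\W)$ follows.

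The main obstacle will be extracting a quantitative positive deterministic capacity for the prefix sub-block from the purely combinatorial hypothesis of non-symmetrizability. Classically, non-symmetrizability yields a strict $L_{1}$-separation of averaged output distributions for some input pair; in the cq setting one needs a trace-norm separation of averaged output states together with a decoding measurement whose measurement back-action on the main block is controlled via the gentle measurement lemma, so that sequential decoding of the prefix does not spoil the subsequent main decoding. The operator Chernoff concentration in the robustification step is likewise more delicate than its classical counterpart, because $\tr W_{t^{n}}(P_{i})D_{i}$ depends on both the random code and the state sequence and must be controlled uniformly over both; combining these ingredients so that errors do not accumulate across the prefix, the mixture of codes and the worst-case $t^{n}$ is the technical heart of the argument from \cite{AB07}.
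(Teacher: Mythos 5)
This theorem is quoted from \cite{AB07} and the paper gives no proof of it, so there is nothing internal to compare your argument against; I can only assess it against the known proof strategy of Ahlswede and Blinovsky, which your outline in fact mirrors (elimination/robustification for achievability, symmetrization for the converse). Two of your three components are essentially complete and correct. The upper bound $C(\W)\le C_{\text{ran}}(\W)$ via an i.i.d.\ jammer is fine, except that $I(p;W)$ is not linear in $W$: it is \emph{convex} in the channel, by writing $I(p;W)=\sum_x p(x)D\bigl(W(x)\,\|\,W(p)\bigr)$ and invoking joint convexity of the relative entropy. Convexity (plus concavity in $p$ and compactness of $\P(\X)$ and $\conv(\W)$ for finite $\Theta$) is exactly what Sion's theorem needs, so the conclusion stands. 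The symmetrization converse is also correct: the coordinate-wise identity does tensorise, the swap symmetry of $\rho_{JB}$ gives $\Pr(\hat J=J)=\Pr(\hat J=B)$, and the bound $2\Pr(\hat J=J)\le 1+1/M$ forces error at least $\tfrac12-\tfrac1{2M}\ge\tfrac14$ for $M\ge2$, hence $C(\W)=0$; your remark that stochastic encoders are handled by averaging the same identity over independent samples from $P_j$ and $P_b$ checks out.

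The gap is in the achievability direction, and you have correctly diagnosed where it is: everything you write there is a roadmap, not a proof. The two load-bearing steps are (i) that non-symmetrizability implies a strictly \emph{positive deterministic} capacity, which is what makes the $O(\log n)$-bit prefix transmittable and is arguably the hardest single ingredient of the whole theorem (classically this is the Csisz\'ar--Narayan/Ericson decoder construction; in the cq setting \cite{AB07} must build an explicit disambiguating POVM from the trace-norm separation of averaged output states), and (ii) the robustification step reducing the exponentially many state sequences $t^n$ to polynomially many types and applying an operator Chernoff bound uniformly over them. Neither is carried out, and neither is routine; in particular, step (i) cannot be waved through, since without it the elimination technique is circular (one needs \emph{some} positive rate before one can boost it to $C_{\text{ran}}(\W)$). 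So the proposal is a faithful and well-organized reconstruction of the architecture of the cited proof, with its converse half complete, but it does not yet constitute a proof of the direct part.
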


\begin{Definition}
An \emph{$(n,N,\lambda_1,\lambda_2)$ ID-code} for the arbitrarily varying cq-channel
$\W$ is a set of pairs $\{(P_i,D_i):i\in[N]\}$, where the $P_i$ are 
probability distributions on $\X^n$ and the $D_i$ are POVM elements, 
i.e.~$0 \le D_i\le \1$, acting on $\B^{\otimes n}$, 
such that $\forall\, i\neq j \in [N]$
\begin{align*}
  \min_{t^n\in\Theta^n} \tr W_{t^n}(P_i)\!\cdot\!D_i &\geq 1-\lambda_1 \text{ and }\\
  \max_{t^n\in\Theta^n} \tr W_{t^n}(P_i)\!\cdot\!D_j &\leq \lambda_2.
\end{align*}
The largest size of an $(n,N,\lambda_1,\lambda_2)$ ID-code
is denoted $N(n,\lambda_1,\lambda_2)$. Analogous to previous
definitions, we have also simultaneous ID-codes and the
maximum code size $N_{\rm sim}(n,\lambda_1,\lambda_2)$. 
\end{Definition}

The identification capacities are now defined as before.

With the help of the method from Theorem~\ref{IDcompound} we can show that the transmission 
capacity of the channel corresponds to the identification capacity. 
To do this, in the proof of the direct part we simply use a code for an arbitrary varying cq-channel 
instead of the code for the compound cq-channel. 
To show the converse, we show that the error of the first type in the identification can not be arbitrarily small
if the channel is symmetrizable. Therefore, we get the following.

\begin{Theorem}
Let $\W$ be an arbitrarily varying cq-channel. Then its ID-capacity
is given by
\begin{align*}
  \hspace{0.4cm}
  C^{\rm sim}_{\rm ID}(\W)
    &= C_{\rm ID}(\W) \\          
    &= \begin{cases}
         0                  & \text{if $\W$ is symmetrizable,} \\
         C_{\text{ran}}(\W) & \text{otherwise.}
                              \hspace{2cm} \blacksquare
       \end{cases}
\end{align*}
%both in the simultaneous and non-simultaneous flavour. \qed
\end{Theorem}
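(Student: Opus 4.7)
The plan is to follow the structure of Theorem~\ref{IDcompound}, with two modifications tailored to the jammer. For the direct part in the non-symmetrizable case, I would invoke the Ahlswede--Blinovsky transmission theorem: there exists a (stochastic) $(n, M, \lambda(n))$-code with $M \geq 2^{n(C_{\text{ran}}(\W)-\epsilon)}$ and $\lambda(n) \to 0$. Feeding this transmission code into the Ahlswede--Dueck colouring construction via Lemma~\ref{GilbLem} yields a simultaneous $(n, N, \lambda_1, \lambda_2)$ ID-code of double-exponential size with $\frac{1}{n}\log\log N \geq C_{\text{ran}}(\W)-2\epsilon$, exactly as in the proof of Theorem~\ref{cqid}. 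The only point to check is that the error bounds survive the $\max_{t^n \in \Theta^n}$ quantifier: this is immediate because the underlying transmission code already has uniform error $\leq \lambda(n)$ against every jammer sequence $t^n$, so the displayed estimates in that earlier proof go through verbatim after inserting a $\max_{t^n}$. As the construction is simultaneous, the identity $C_{\rm ID}^{\rm sim}(\W) = C_{\rm ID}(\W)$ will follow once we also establish the non-simultaneous converse.

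For the non-symmetrizable converse, I would observe that any ID-code for $\W$ is in particular an ID-code for the compound cq-channel $\conv(\W) = \{W_q : q \in \P(\Theta)\}$. Indeed, for any $q \in \P(\Theta)$ and any input distribution $P$ on $\X^n$,
\[
  W_q^{\otimes n}(P) = \sum_{t^n \in \Theta^n} q^n(t^n)\, W_{t^n}(P)
\]
is a convex combination of the $W_{t^n}(P)$, so the min/max conditions over $t^n \in \Theta^n$ imply the corresponding uniform bounds over $q \in \P(\Theta)$. Applying Theorem~\ref{cqid} to $\conv(\W)$ then yields
\[
  \tfrac{1}{n}\log\log N \leq \max_P \inf_{q\in\P(\Theta)} I(P; W_q) + o(1) = C_{\text{ran}}(\W) + o(1).
\]

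The symmetrizable converse will be the main obstacle, as it is the only genuinely new ingredient. I would adapt the classical Ericson/Csisz\'ar--Narayan symmetrization argument. Assume $\W$ is symmetrizable via $\tau(\cdot|x)$ and set $\tau^n(t^n|x^n) := \prod_{k=1}^n \tau(t_k|x_k)$. Towards a contradiction, fix two distinct messages $i \neq j$ in a hypothetical ID-code with $\lambda_1 + \lambda_2 < 1$, and introduce the jammer strategies $q_\ell(t^n) := \sum_{x^n} P_\ell(x^n)\tau^n(t^n|x^n)$ for $\ell \in \{i,j\}$. Iterating the defining identity $\sum_t \tau(t|x')W_t(x) = \sum_t \tau(t|x)W_t(x')$ coordinatewise gives
\[
  \sum_{t^n}\tau^n(t^n|x'^n) W_{t^n}(x^n) = \sum_{t^n}\tau^n(t^n|x^n) W_{t^n}(x'^n),
\]
and averaging both sides against $P_i(x^n)P_j(x'^n)$ after applying $\tr(\cdot\, D_i)$ yields
\[
  \sum_{t^n} q_j(t^n)\tr W_{t^n}(P_i) D_i = \sum_{t^n} q_i(t^n)\tr W_{t^n}(P_j) D_i.
\]
The left side is $\geq 1-\lambda_1$ by taking a convex combination of the first-kind bound, while the right side is $\leq \lambda_2$ by taking a convex combination of the second-kind bound. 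Hence $\lambda_1+\lambda_2 \geq 1$, contradicting our hypothesis. This forces $N \leq 1$ whenever $\lambda_1+\lambda_2 < 1$, so $C_{\rm ID}(\W) = 0$, completing the dichotomy.
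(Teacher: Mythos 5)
Your proposal is correct and follows essentially the same route the paper sketches: the direct part concatenates an Ahlswede--Blinovsky transmission code for the arbitrarily varying cq-channel with the colouring construction of Lemma~\ref{GilbLem}, and the converse combines a reduction to the compound channel $\conv(\W)$ (handled by Theorem~\ref{cqid}) with a symmetrization argument showing $\lambda_1+\lambda_2\geq 1$ for any two messages when $\W$ is symmetrizable. In fact your write-up supplies the details---in particular the explicit Ericson-style symmetrizability converse and the verification that the error estimates survive the $\max_{t^n\in\Theta^n}$ quantifier---that the paper only gestures at.
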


\section{Secure identification \protect\\ in the presence of a jammer}
\label{Securejammer}
In this section we add a wiretapper to the arbitrarily varying 
cq-channel. First we define the transmission codes and quote the 
known transmission capacity.
Using this result, we then calculate the secure identification 
capacity of the arbitrarily varying wiretap cqq-channel.

\begin{Definition}
Let $\Theta$ and $\Sigma$ be finite index sets, 
and let $\W=\{W_t:\X\rightarrow\cS(\cB):t\in\Theta\}$ 
and $\V=\{V_s:\X\rightarrow\cS(\cE):s\in\Sigma\}$ be arbitrarily varying cq-channels.  
We call the pair $(\W,\V)$ an \emph{arbitrarily varying wiretap cqq-channel}. 
The channel output of $\W$ is available to the legitimate receiver (Bob) and the 
channel output of $\V$ is available to the wiretapper (Eve). 
We may sometimes write the channel as a family of pairs 
$(\W,\V)=(W_t,V_s)_{t\in\Theta,s\in\Sigma}$. 
\end{Definition}

\begin{Definition}
An $(n,M,\lambda)$ transmission code for the arbitrarily varying wiretap cqq-channel
$(W_t,V_s)_{t \in \Theta, s\in\Sigma}$ consists of a family $\C=(P_i,D_i)_{i\in[M]}$,
where the $P_i$ are probability distributions on
$\cX^n$ and $(D_i)_{i\in[M]}$ 
a POVM on ${\cB}^{\otimes n}$ such that  
\begin{align*}
  \forall i\in[M]  \quad & \max_{t^n\in\Theta^n} 1-\tr W_{t^n}^{\otimes n}(P_i)\!\cdot\!D_i \leq \lambda, \\
  \forall i,j\in[M]\quad & \max_{s^n\in\Sigma^n} \frac12 \| V_{s^n}^{\otimes n}(P_i) - V_{s^n}^{\otimes n}(P_j) \|_1 \leq \mu.
\end{align*}
\end{Definition}

The capacity is defined as before.
To state the result of \cite{BCDN18} we again introduce
the random coding capacity,
\begin{align*}
  C_{\text{S,ran}}&(\W,\V) \= \lim_{n\rightarrow \infty} \frac1n 
                              \max_{U\rightarrow X^n \rightarrow B_{t^n}^n E^n_{s^n}} \\
      &\phantom{=======}
       \Biggl( \min_{\widehat{W} \in \operatorname{conv}\{W_{t}:t\in \Theta\}} 
                                    I(p_U;\widehat{W}^{\otimes n})  \\
      &\phantom{=============}
	           - \max_{s^n\in \Sigma^n} I(p_U;V_{s^n}) \Biggr).
    %\label{bibqicbtt}
\end{align*}
Here, $B_{t^n}^n$ are the resulting quantum states at the output of the
legitimate receiver's channels. $E^n_{s^n}$ are the resulting  quantum states  at
the output of the wiretap channels.
The maximum is taken over all random
variables  that satisfy the Markov chain relationships:
$U\rightarrow X^n \rightarrow B_{t^n}^n E^n_{s^n}$.  
$X^n$ is here a random variable taking values in $\X^n$, $U$ a random
variable taking values on some finite set $\cU$
with probability  distribution $p_U$.
In \cite{BCDN18} the following dichotomy is shown.

\begin{Theorem}[\cite{BCDN18}]
\label{multi1}
Let $C_S(\W,\V)$ denote the capacity of the arbitrarily varying wiretap 
cqq-channel $(\W,\V)$. Then,
\[
   C_S(\W,\V) 
          = \begin{cases}
              0                       & \text{if $\W$ is symmetrizable,} \\
              C_{\text{S,ran}}(\W,\V) & \text{otherwise}.
                                        \hspace{1.6cm} \blacksquare
            \end{cases}
\]
\end{Theorem}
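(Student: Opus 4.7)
The plan is to handle the two directions separately, using the by now standard three-stage template for arbitrarily varying wiretap channels: (i) establish a random (correlated) coding theorem with secrecy, (ii) derandomize via Ahlswede's elimination technique under non-symmetrizability, and (iii) rule out positive rates in the symmetrizable case by an explicit jamming attack.

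For the achievability of $C_{\text{S,ran}}(\W,\V)$ with common randomness, I would combine the compound cq wiretap construction (Theorem from \cite{BCDN17} used in Section~\ref{Securerobust}) with the observation that, from the legitimate receiver's viewpoint, an arbitrary state sequence $t^n$ acts identically to $\widehat W^{\otimes n}$ for some $\widehat W \in \operatorname{conv}(\W)$, provided one first averages over an appropriate permutation of the coordinates. Concretely, I would take the wiretap code ensemble whose rate matches the min-max expression defining $C_{\text{S,ran}}$, and then invoke Ahlswede's robustification technique: a code which works on average over a type class (i.e.\ against every permutation-invariant state distribution) also works against every deterministic state sequence $t^n$, at the cost of an additive polynomial loss in the error. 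Secrecy against the wiretapper sequence $s^n$ is obtained through a chaining argument using the Devetak--Winter style privacy amplification applied uniformly over the finite set $\Sigma^n$; finiteness of $\Sigma$ keeps the union bound polynomial, and the secrecy index $\mu$ decays exponentially as usual.

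To pass from the random code to a deterministic code I would invoke Ahlswede's elimination technique: the random code can be replaced by a uniform mixture over only $O(n^2)$ deterministic codes while preserving error and secrecy exponents. Provided $\W$ is not symmetrizable, the capacity $C(\W)$ of the underlying AVC is positive, so sender and receiver can agree on which of these polynomially many codes to use by prepending a short transmission segment of length $o(n)$; this costs a vanishing rate and, crucially, this label can be made public without compromising secrecy (the message-bearing portion is independent of the label and its secrecy bound is uniform over the label). Combining this with the random achievability gives the lower bound $C_S(\W,\V) \geq C_{\text{S,ran}}(\W,\V)$ under non-symmetrizability.

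For the converse, the easy direction is symmetrizability $\Rightarrow$ $C_S(\W,\V)=0$: a jammer drawing $t^n$ according to the symmetrizing kernel $\tau$ evaluated at an independent codeword of the code turns the main channel into one whose output for message $i$ equals that for message $j$, so Bob's error probability is bounded away from zero uniformly in $n$, exactly as in the classical AVC converse. The matching upper bound $C_S(\W,\V) \leq C_{\text{S,ran}}(\W,\V)$ in the non-symmetrizable case follows by observing that a deterministic code is in particular a random code, and that restricting the jammer to memoryless i.i.d.\ strategies drawn from $q\in\P(\Theta)$ makes the channel a memoryless cq wiretap channel with legitimate channel $\sum_t q(t) W_t \in \operatorname{conv}(\W)$, to which the single-letter wiretap converse of Theorem~\ref{QWC} applies; minimising over $q$ and maximising over $U$ yields $C_{\text{S,ran}}$.

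The main obstacle, as usual for AVWC, is the interaction between derandomization and secrecy: one must ensure that the common-randomness seed used to select among the $O(n^2)$ candidate codes can be sent in the clear without giving Eve a foothold against the secrecy guarantee. I would address this by designing the random ensemble so that the secrecy bound $\mu$ holds \emph{pointwise} in the seed (not only on average), which is the standard strengthening via taking a large enough pool of privacy-amplification hash functions and invoking a double exponential concentration argument over both $\Theta^n$ and $\Sigma^n$ — finite $\Theta,\Sigma$ is exactly what makes the union bound go through.
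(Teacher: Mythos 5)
The paper does not actually prove this theorem; it is imported verbatim from \cite{BCDN18}, so your proposal can only be compared against the strategy of that reference. Your outline — correlated random coding for the convex hull, Ahlswede robustification over permutations, elimination down to polynomially many codes with the seed sent in the clear, and the symmetrizing-kernel jamming attack for the zero-capacity branch — is indeed the template used there and in the classical AVWC literature, so the architecture is right.

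Two points deserve flagging. First, your claim that non-symmetrizability of $\W$ implies $C(\W)>0$ is not correct: by the Ahlswede--Blinovsky theorem quoted in Section~\ref{Jammer}, non-symmetrizability only gives $C(\W)=C_{\text{ran}}(\W)$, which can still vanish. The statement survives because in that degenerate case $C_{\text{S,ran}}(\W,\V)\leq C_{\text{ran}}(\W)=0$ as well, but your derandomization step (which needs a positive-rate side channel to communicate the seed) must explicitly split off this case rather than assume it away. Second, the genuinely hard step — making the secrecy index hold \emph{pointwise} in the elimination seed and uniformly over $\Sigma^n$, so that the publicly transmitted seed gives Eve nothing — is precisely where \cite{BCDN18} invests most of its effort (and where super-activation phenomena originate); your final paragraph correctly identifies this obstacle but only asserts that a concentration argument handles it. As a reconstruction of the proof strategy this is acceptable; as a proof it is a sketch whose crux is deferred to exactly the lemma one would have to prove.
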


As in the previous section, we can now use a similar proof
technique to determine the secure identification capacity of the 
arbitrarily varying wiretap cqq-channel. We start by defining the identification codes.

\begin{Definition}
An \emph{$(n,N,\lambda_1,\lambda_2,\mu)$ wiretap ID-code} for the 
arbitrarily varying wiretap cqq-channel $(\W,\V)$ is a set of pairs 
$\{(P_i,D_i):i\in[N]\}$ where the $P_i$ are probability distributions on
$\X^n$ and the $D_i$, $0 \le D_i \leq1$, denote operators on 
$B^{\otimes n}$ such that $\forall i,j\in[N],\ i\neq j$ and $0\leq F\leq \1$,
\begin{align*}
  \min_{t^n\in\Theta^n} \tr W_{t^n}^{\otimes n}(Q_i) D_i &\geq 1-\lambda_1, \\
  \max_{t^n\in\Theta^n} \tr W_{t^n}^{\otimes n}(Q_j) D_i &\leq \lambda_2, \\
  \label{CQCCIDer} 
  \min_{s^n\in\Sigma^n} \tr V_{s^n}^{\otimes n}(Q_j) F + \tr V_{s^n}^{\otimes n}(Q_i) (\1-F) &\geq 1-\mu.
\end{align*}
%Recall that we use $Q_iW^{\otimes n} 
%\= \sum_{x^n\in A^n} Q_i(x^n)W^{\otimes n}_{x^n}\,\in S({\cB}^{\otimes n})$
%as an abbreviation.
We define $N(n,\lambda_1,\lambda_2,\mu)$ as the largest $N$
satisfying the above definition for a given $n$ and set $\lambda_1,\lambda_2,\mu$
of errors.
\end{Definition}

\begin{Definition}
The identification capacity $C_{SID}(\W,\V)$ of an arbitrarily varying
wiretap cqq-channel $(\W,\V)$ is defined as 
\[\begin{split}
  C_{SID}&(\W,\V) \\
         &\= \inf_{\lambda_1,\lambda_2,\mu>0}
                     \liminf_{n\rightarrow\infty} \frac1n \log\log N(n,\lambda_1,\lambda_2,\mu).
\end{split}\]
\end{Definition}

Again we show a dichotomy result, using the idea of Theorem~\ref{QCWCdich}. 
As fundamental codes we use for $C'$ a code for the arbitrarily varying cq-channel and
for $C''$ a code for the arbitrarily varying wiretap cqq-channel, both 
reaching the capacity. If the transmission capacity for $C''$ is positive,
we get as an identification capacity the transmission capacity of $C''$. 
The security follows by the strong secrecy condition like in 
Theorem~\ref{QCWCdich}. Also the converse follows the same idea.
Therefore we get the following.
 
\begin{Theorem}[Dichotomy]\label{AVCWCdich}
Let $C_{}(\W)$ be the capacity of the arbitrarily varying cq-channel $\W$ and let $C_{S}(\W,\V)$ be
the secrecy capacity of the arbitrarily varying wiretap cq-channel $(\W,\V)$. Then,
\begin{align*}
  \hspace{0.5cm}
  C_{\text SID}(\W,\V)
     &= C^{\rm sim}_{\text SID}(\W,\V) \\
     &= \begin{cases}
              C(\W) & \text{if } C_S(\W,\V) > 0,\\
              0     & \text{if } C_S(\W,\V) = 0.
                         \hspace{0.6cm}\blacksquare
        \end{cases}
\end{align*}
\end{Theorem}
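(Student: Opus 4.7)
The plan is to mimic the proof of Theorem \ref{QCWCdich} (the compound wiretap dichotomy), replacing the compound transmission/wiretap codes with their arbitrarily varying counterparts. The main new difficulty is that the jammer is allowed to choose any state sequence $t^n \in \Theta^n$ (not just product distributions coming from a single channel), but this is already absorbed into the definition of the fundamental transmission and wiretap codes we import.

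For the direct part under the assumption $C_S(\W,\V) > 0$ (which forces $\W$ non-symmetrizable, so $C(\W) > 0$), I would proceed in three steps, following Ahlswede and Dueck's concatenation scheme. First, fix $0 < \epsilon < C(\W)$ and invoke the AVC transmission theorem to obtain an $(n, M', \lambda(n))$-code $C' = \{(u'_j, \D'_j) : j \in [M']\}$ for $\W$ with $M' = \lceil 2^{n(C(\W)-\epsilon)} \rceil$ and vanishing error. Second, invoke Theorem \ref{multi1} (since $C_S(\W,\V) > 0$) to obtain an $(\lceil\sqrt{n}\,\rceil, M'', \lambda(\sqrt{n}), \mu(\sqrt{n}))$ wiretap code $C'' = \{(u''_k, \D''_k) : k \in [M'']\}$ with $M'' = \lceil 2^{\epsilon\sqrt{n}} \rceil$ valid for all jammer sequences in $\Theta^n$ and $\Sigma^n$ respectively. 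Third, randomly color the codewords of $C'$ with colors from $[M'']$, so that for each ID-message $i$ one obtains uniform encoding distribution $\bar Q_i$ on $\{u'_j \cdot u''_{T_i(j)} : j \in [M']\}$ and decoding POVM $\D_i = \sum_j \D'_j \otimes \D''_{T_i(j)}$. The first-kind error is controlled uniformly over $t^n \in \Theta^n$ because the error bounds of $C'$ and $C''$ are themselves uniform in the jammer state; the second-kind error is controlled via Hoeffding's inequality applied to the Bernoulli variables counting color collisions, just as in the proof of Theorem \ref{QCWCdich}. The key (and essentially only) modification is to replace every $\sup_{t\in\Theta}$ in the error estimates by $\max_{t^n\in\Theta^n}$, which is already inherent in $C'$ and $C''$. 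The strong secrecy condition \eqref{CQCCIDer} for all $s^n \in \Sigma^n$ then follows from the wiretap secrecy of $C''$, because the information the wiretapper receives is governed by the second block of the concatenated codeword, and this part is already secure against every jammer sequence by construction.

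For the converse, I would show the contrapositive: if $C_S(\W,\V) = 0$, then $C_{\mathrm{SID}}(\W,\V) = 0$. Given any $(n,N,\lambda_1,\lambda_2,\mu)$ wiretap ID-code with $\lambda_1,\lambda_2,\mu \leq \lambda \leq \tfrac{1}{15}$, fix any pair $i \neq j$ and let $Q$ be uniform on $\{i,j\}$. For every jammer sequence $t^n \in \Theta^n$, Lemma \ref{Lemma 4} applied to $(P_i,P_j)$ through the channel $W_{t^n}$ yields $I(Q;\widetilde W_{t^n}) \geq 1 - h(\lambda)$; for every $s^n \in \Sigma^n$, Lemma \ref{Lemma 3} applied to the channel $V_{s^n}$ yields $I(Q;\widetilde V_{s^n}) \leq h(\lambda/2) \leq h(\lambda)$. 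Taking the infimum over $t^n$ and supremum over $s^n$ and then maximizing over the binary auxiliary $U \to X^n \to B_{t^n}^n E_{s^n}^n$ obtained from this two-message restriction gives
\[
  C_S(\W,\V) \geq \tfrac{1}{n}\bigl( 1 - 2h(\lambda) \bigr) > 0,
\]
contradicting $C_S(\W,\V) = 0$. Together with the trivial upper bound $C_{\mathrm{SID}}(\W,\V) \leq C_{\mathrm{ID}}(\W) = C(\W)$ from dropping the secrecy constraint, this closes the dichotomy.

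The step I anticipate as the main technical obstacle is the direct part's error analysis uniformly over $t^n,s^n$: one must verify that when the coloring $T_i$ is chosen at random, the second-kind error bounds in \eqref{qeq18}--\eqref{qeq19} hold simultaneously for all $t^n \in \Theta^n$ with sufficient probability to absorb a union bound over the $N \leq 2^{2^{n(C(\W)-\epsilon)}}$ messages. This works because the AVC codes $C'$ and $C''$ already deliver worst-case bounds in $t^n$, and the coloring randomization is independent of $t^n$; hence the Hoeffding concentration argument of Theorem \ref{QCWCdich} can be reused verbatim, with the $\min_{t\in\Theta}$ and $\max_{t\in\Theta}$ replaced by $\min_{t^n\in\Theta^n}$ and $\max_{t^n\in\Theta^n}$ throughout. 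The only conceptual care needed is that the non-symmetrizability of $\W$ is automatic as soon as $C_S(\W,\V) > 0$, so no separate case distinction is required beyond the one stated in the theorem.
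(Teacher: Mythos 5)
Your proposal is correct and follows essentially the same route as the paper, which itself only sketches this theorem by pointing back to the concatenated-code construction of Theorem~\ref{QCWCdich}, with the compound fundamental codes replaced by an AVC transmission code for $\W$ and an AVC wiretap code for $(\W,\V)$ — exactly what you do, including the uniform-in-$t^n,s^n$ error analysis and the Lemma~\ref{Lemma 3}/Lemma~\ref{Lemma 4} converse. The one point worth making explicit in your converse is that the two-message restriction lower-bounds the multi-letter expression $C_{\text{S,ran}}(\W,\V)$ rather than $C_S(\W,\V)$ itself; since Theorem~\ref{multi1} equates the two only when $\W$ is non-symmetrizable, you should add that $C_{SID}(\W,\V)>0$ implies $C_{ID}(\W)>0$ and hence non-symmetrizability of $\W$, which closes that loop.
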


In this theorem the capacity is a single letter formula, but the condition 
if the capacity is positive is given by the multi-letter formula
for the random coding secret capacity.

\begin{rem}
In the case of transmission it is possible to avoid the capacity 
being zero if the channel is symmetrizable, if we allow the 
sender and receiver to use common randomness. With this resource the capacity 
will not change if the channel is non-symmetrizable, but if the channel is symmetrizable
then the capacity may go up from zero to the random coding capacity. 

The situation appears different in the case of identification. We can, of course,
use the same resource to get rid of the vanishing capacity in the symmetrizable case. 
However, note that a positive rate of common randomness, by the
concatenated code construction of Ahlswede and Dueck \cite{AhlswedeDueck:ID.B},
increases the ID-capacity by the same amount. 
Fortunately, it comes to our rescue the fact that whenever common randomness is
required to achieve the random coding capacity for transmission, then
a rate of asymptotically zero is sufficient \cite{Ahlswede:elim}. Thus, we
could define random coding capacities with zero rate of common randomness
without changing the notion for transmission, while obtaining a sound
capacity concept for the identification problem. 
\end{rem}

\section{Continuity and super-additivity}
\label{Continuity}
In \cite{BD17} we discussed the continuity and super-additivity for the identification capacity
of a classical compound channel and a classical compound wiretap channel. It turns 
out that the results for the capacity of the classical-quantum case are completely
analogous. Therefore we just list the results here and discuss them as briefly
as possible.

%In this section we discuss the continuity and super-additivity for the identification capacity. 
%In \cite{BD17} we analyzed the identification capacity for 
%the classical compound case and in \cite{BD17a} we analyzed the classical arbitrarily varying case.
%In \cite{BCDN17} and \cite{BCDN17a} the transmission capacity for the compound and the arbitrarily varying cq-%channels and wiretap cqq-channels
%are analyzed. 
%
%We analyze if $C_{ID}$ and $C_{SID}$ satisfy additivity for the above-mentioned channels.
%Furthermore we analyze if the capacity is a continuous function of the system parameters.

\subsection{Distance between cq-channels}
First we need a metric to measure the distance between two cq-channels.

\begin{Definition}
Let $W,\widetilde W:{\X}\longrightarrow {\cS}({\cB})$ be two cq-channels.  
The distance between them is defined as
\begin{equation*}
  %\label{DMCdistance}
  d(W,\widetilde W) \= \max_{x\in\X} \left\lVert W(x)-\widetilde W(x)\right\rVert_1,
\end{equation*}
where $\lVert \cdot \rVert_1$ denotes the trace norm.
\end{Definition}

Next, we extend this concept to the compound and arbitrarily varying case.
\begin{Definition}
Let $\W=(W_t)_{t\in\Theta}$ and $\widetilde\W=(\widetilde W_s)_{s\in\widetilde\Theta}$ be two compound or arbitrarily varying cq-channels with input alphabet $\X$ and let
\begin{equation*}
  G(\W,\widetilde\W) \= \sup_{t\in \Theta} \inf_{t'\in\widetilde\Theta} d(W_{t},\widetilde{W}_{t'}).
\end{equation*}
Then we define the distance between the two cq-channels as
\begin{equation*}
  D(\W,\widetilde\W) \= \max\{G(\W,\widetilde\W),G(\widetilde\W,\W)\}.
\end{equation*}
\end{Definition}

%Next we study what happens if there are small variations in the uncertainty set. 
Obviously, it is desirable to have a \emph{continuous} behaviour of the capacity,
meaning that small variations in the channel (i.e.~the set $\W$) 
set should only lead to small variations in the corresponding capacity. 
Let ${\mathfrak W}(\X,{\cS}({\cB}))$ be the family of all compound cq-channels 
$\W=(W_t)_{t\in\Theta}$ with $W_t:{\X}\longrightarrow {\cS}({\cB})$,
with respect to the above metric $D$.
We use the distance definition to define continuity for points and functions in the usual way.

%\textcolor{red}{AW: Do we really want to keep this ``definition''? These are 
%precisely the notions of continuity and discontinuity of functions between metric
%spaces. Hardly worth repeating, in my opinion...
%\begin{Definition}
%\begin{enumerate}
%\item We say that $\W\in {\mathfrak W}(\X,{\cS}({\cB}))$ is a continuity point of $C_{ID}$  
%      if for every $\epsilon>0$ there exists $\delta>0$ such that for all 
%      $\widetilde \W\in{\mathfrak W}(\X,{\cS}({\cB}))$
%      with $D(\W,\widetilde\W)<\delta$, 
%      it holds $|C_{ID}(\W)-C_{ID}(\widetilde\W)|<\epsilon$.
%\item If property 1 doesn't hold, we say that $\W$ is a discontinuity point.
%\item We say that $C_{ID}$ is a continuous function on
%      ${\mathfrak W}(\X,{\cS}({\cB}))$ if all points $\W\in {\mathfrak W}(\X,{\cS}({\cB}))$ 
%      are continuity points.
%\end{enumerate}
%\end{Definition}}
%
Similarly, for wiretap cqq-channels $(\W,\V)$,
the metric to measure the distance between two 
wiretap cqq-channels is as follows.

\begin{Definition}
\label{measure}
Let $(W,V)$ and $(\widetilde W,\widetilde V)$ be two
wiretap cqq-channels with the same input alphabet $\X$, then we define
\begin{equation*}
  d_S((W,V),(\widetilde W,\widetilde V)) \= \max \{d(W,\widetilde W),d(V,\widetilde V)\}.
\end{equation*}
If $(\W,\V)$ and $(\widetilde \W, \widetilde\V)$ are two compound/arbitrarily varying
wiretap cqq-channels with the same input alphabet $\X$. Then we define
\begin{equation*}
  D_S\bigl( (\W,\V),(\widetilde \W, \widetilde\V) \bigr) 
             \= \max \{D(\W,\widetilde \W), D(\V,\widetilde\V)\}.
\end{equation*}
\end{Definition}
The notions of the (dis-)continuity points are as usual.

%Furthermore, we define
%\begin{Definition} Let $(\W,\V)$ be given. The capacity $C(\W,\V)$ is said to be continuous
%in all $(\W,\V)$, if for all sequences $\{(\W_n,\V_n) \}^{\infty}_{n=1}$ with
%\be\label{eq36}
%\lim_{n\to\infty} D((\W_n,\V_n) , (\W,\V)) = 0
%\ee
%we have
%\be
%\lim_{n\to\infty}  C((\W_n,\V_n) ) = C((\W,\V)).
%\ee
%\end{Definition}

We also consider parallel (i.e. tensor product) channels, which means that
they map pair of inputs independently to a tensor product of the output systems:
define $\W\otimes \widetilde\W$ as the set of channels
\[
  W_{t_1}\otimes \widetilde{W}_{t_2}: \X_1\times\X_2 \to {\cS}({\cB_1})\otimes {\cS}({\cB_2}),
\]
with
\[
  W_{t_1}\otimes \widetilde W_{t_2}(x_1,x_2)\=W_{t_1}(x_1)\otimes \widetilde W_{t_2}(x_2).
\]

Let $\W$ be a compound cq-channel.
From Theorem~\ref{cqid}, we know that
\[
  C_{ID}(\W)= C(\W) =\max_P \min_{t\in\Theta} I(P;W_t). 
\]
This is a continuous function of $\W$ and therefore the following holds.

\begin{Corollary}
  $C_{ID}$ is a continuous function on ${\mathfrak W}(\X,{\cS}({\cB}))$.
  \qed
\end{Corollary}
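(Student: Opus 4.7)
The plan is to reduce the corollary to the continuity of a known single-letter quantity, namely the map $\W \mapsto \max_P \min_{t \in \Theta} I(P;W_t)$, since Theorem~\ref{cqid} identifies this quantity with $C_{\rm ID}(\W)$. So the task becomes showing that the Holevo-type functional above is uniformly continuous on ${\mathfrak W}(\X,{\cS}({\cB}))$ under the metric $D$.

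First I would establish the following uniform estimate for a single cq-channel: if $W$ and $\widetilde W$ are cq-channels with $d(W,\widetilde W) \le \delta$, then for \emph{every} input distribution $P \in \cP(\X)$,
\[
  |I(P;W) - I(P;\widetilde W)| \le 2\delta \log d + 2 h_2(\delta),
\]
where $d = \dim \B$. This follows from writing $I(P;W) = S(W(P)) - \sum_x P(x) S(W(x))$, applying $\|W(x) - \widetilde W(x)\|_1 \le \delta$ for every $x$, hence also $\|W(P) - \widetilde W(P)\|_1 \le \delta$, and invoking the Fannes--Audenaert continuity bound for von Neumann entropy on both terms. The key point is that the bound depends only on $\delta$ and the output dimension, not on $P$ or the support structure.

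Next I would lift this to the compound setting. Suppose $D(\W,\widetilde \W) \le \delta$. By definition of $G$, for each $t \in \Theta$ there exists $t'(t) \in \widetilde\Theta$ with $d(W_t,\widetilde W_{t'(t)}) \le \delta$, and symmetrically. Fixing $P$, the single-channel bound yields $|I(P;W_t) - I(P;\widetilde W_{t'(t)})| \le \epsilon(\delta) \= 2\delta \log d + 2h_2(\delta)$, and therefore
\[
  \min_{t' \in \widetilde\Theta} I(P;\widetilde W_{t'})
    \le I(P;\widetilde W_{t'(t)})
    \le I(P;W_t) + \epsilon(\delta).
\]
Minimising over $t$ on the right and using the symmetric inequality (from $G(\widetilde\W,\W) \le \delta$) one obtains
\[
  \left| \min_{t\in\Theta} I(P;W_t) - \min_{t'\in\widetilde\Theta} I(P;\widetilde W_{t'}) \right|
     \le \epsilon(\delta),
\]
uniformly in $P$. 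Maximising over $P$ on both sides then gives $|C_{\rm ID}(\W) - C_{\rm ID}(\widetilde \W)| \le \epsilon(\delta)$, which tends to $0$ as $\delta \to 0$, proving continuity.

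There is no real obstacle here; the only subtle point is ensuring that the continuity constant $\epsilon(\delta)$ is \emph{uniform} in $P$ and in the index sets $\Theta,\widetilde\Theta$ (which may well be infinite), so that the sup/min exchange in the last step goes through. Fannes--Audenaert provides exactly such a $P$-independent modulus, so both the inner $\min$ over states and the outer $\max$ over input distributions behave well. No compactness of $\Theta$ is required, which is important since the definition of $G(\W,\widetilde\W)$ already handles arbitrary index sets via the $\sup$--$\inf$ structure.
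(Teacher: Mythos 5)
Your proposal is correct and follows exactly the route the paper takes: the paper simply asserts that $C_{\rm ID}(\W)=\max_P\inf_{t}I(P;W_t)$ (via Theorem~\ref{cqid}) is a continuous function of $\W$, and your argument supplies the details of that assertion via a $P$-uniform Fannes-type bound lifted through the $\sup$--$\inf$ structure of the metric $D$. The only cosmetic point is that for infinite $\widetilde\Theta$ the infimum in $G(\W,\widetilde\W)$ need not be attained, so $t'(t)$ should be chosen with $d(W_t,\widetilde W_{t'(t)})\le\delta+\eta$ for arbitrary $\eta>0$; this does not affect the conclusion.
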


Regarding the additivity, we can once more use Theorem~\ref{cqid}. 
It follows immediately that

\begin{Corollary}
For any two compound cq-channels $\W$ and $\widetilde\W$,
\begin{equation*}\hspace{1cm}
  C_{ID}(\W\otimes\widetilde\W) = C_{ID}(\W)+C_{ID}(\widetilde\W).
  \hspace{1cm}\blacksquare
\end{equation*}
\end{Corollary}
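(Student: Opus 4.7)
The plan is to reduce the additivity claim to the single-letter formula in Theorem~\ref{cqid}. After that, everything becomes a short exercise on the Holevo information of product channels, with the argument splitting into a direct inequality (achievability-type) and a converse inequality, both of which boil down to additivity/subadditivity of Holevo information plus the elementary identity $\min_{t,s}\bigl[f(t)+g(s)\bigr] = \min_t f(t) + \min_s g(s)$.

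First, I would apply Theorem~\ref{cqid} to $\W$, $\widetilde{\W}$, and $\W\otimes\widetilde{\W}$ simultaneously, reducing the statement to
\[
  \max_P \min_{t,s} I(P; W_t\otimes \widetilde W_s)
      = \max_{P_1}\min_t I(P_1;W_t) + \max_{P_2}\min_s I(P_2;\widetilde W_s),
\]
where $P$ ranges over joint distributions on $\X_1\times \X_2$. For the direction ``$\geq$'', I would plug in the product distribution $P^\star = P_1^\star \otimes P_2^\star$, where $P_i^\star$ is the optimizer for the respective compound cq-channel. Since the channel is a tensor product and the input is independent, the joint output state factorises as $W_t(x_1)\otimes \widetilde W_s(x_2)$, the marginal output is $W_t(P_1^\star)\otimes \widetilde W_s(P_2^\star)$, and the von Neumann entropy is additive on both levels, giving $I(P_1^\star\otimes P_2^\star; W_t\otimes \widetilde W_s) = I(P_1^\star;W_t) + I(P_2^\star;\widetilde W_s)$. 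The decoupling $\min_{t,s}[f(t)+g(s)] = \min_t f(t) + \min_s g(s)$ then delivers the lower bound.

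For the direction ``$\leq$'', I would take an arbitrary joint $P$ with marginals $P_1,P_2$. Conditioning on the classical input, the joint output is still a product state, so the conditional entropy factorises exactly: $S(B_1B_2|X_1X_2) = S(B_1|X_1) + S(B_2|X_2)$. Subadditivity of the von Neumann entropy on the marginal output $W_t(P_1)\otimes\widetilde W_s(P_2)$ of the overall output $(W_t\otimes\widetilde W_s)(P)$ then yields $I(P;W_t\otimes\widetilde W_s)\leq I(P_1;W_t) + I(P_2;\widetilde W_s)$. Applying $\min_{t,s}$ (which again splits as a sum) and then $\max_P$ (which is dominated by independent maximisation over $P_1$ and $P_2$) finishes the upper bound.

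The only place where anything could go wrong is in the ``$\leq$'' step, where one might worry that correlating the joint input $P$ across the two factors of the product channel could yield strictly more Holevo information than independent inputs can; however, because the channel acts independently on the two factors and is classical on the input side, the conditional output state is a strict tensor product, so the conditional-entropy term is exactly additive and only the unconditional entropy can be subadditive, yielding the inequality in the correct direction. Thus there is no genuine obstacle, and the corollary follows with no further work beyond invoking Theorem~\ref{cqid}.
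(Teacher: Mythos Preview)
Your proof is correct and follows exactly the route the paper indicates: reduce to Theorem~\ref{cqid} and then verify that the compound transmission capacity formula $\max_P \inf_t I(P;W_t)$ is additive under tensor products. The paper simply asserts that additivity ``follows immediately'' from Theorem~\ref{cqid} without spelling out the Holevo-information computation, so your write-up is in fact more detailed than the original; the one cosmetic slip is the phrase ``subadditivity \ldots on the marginal output $W_t(P_1)\otimes\widetilde W_s(P_2)$,'' where you mean subadditivity applied to the joint output state $(W_t\otimes\widetilde W_s)(P)$ with marginals $W_t(P_1)$ and $\widetilde W_s(P_2)$.
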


\begin{Definition}
We say that a capacity $C$ is super-additive if we can find two channels $(W,V)$ 
and $(\widetilde W,\widetilde V)$ such that 
\begin{equation}
  \label{super}
  C(W\otimes \widetilde W, V\otimes \widetilde V) > C(W,V) + C(\widetilde W, \widetilde V).
\end{equation}
\end{Definition}

The following theorem characterizes the discontinuity points of $C_{SID}$ 
completely. It also shows that the set of discontinuity points is never empty.

\begin{Theorem}
\label{disSID}
The wiretap cqq-channel $(W,V)$ is a discontinuity point of $C_{SID}$ iff
\begin{enumerate}
\item $C(W)>0$,
\item $C_S(W,V)=0$, and
\item For each $\epsilon>0$ there exists a wiretap channel $(W_\epsilon,V_\epsilon)$ such that
      $d_S((W,V),(W_\epsilon,V_\epsilon)) < \epsilon$ and
      $C_S(W_\epsilon,V_\epsilon) > 0$.

%\begin{align*}
%  d_S((W,V),(W_\epsilon,V_\epsilon))     &< \epsilon \\
%  \text{and } C_S(W_\epsilon,V_\epsilon) &> 0.
%\end{align*}
\end{enumerate}
\end{Theorem}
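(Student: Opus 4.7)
The plan is to use the dichotomy from Theorem~\ref{Theorem D2}, namely $C_{\rm SID}(W,V)=C(W)$ when $C_S(W,V)>0$ and $C_{\rm SID}(W,V)=0$ when $C_S(W,V)=0$, combined with two soft regularity facts for the component capacities: (a) $W\mapsto C(W)$ is continuous in the metric $d$, by continuity of the Holevo information and compactness of $\cP(\X)$; and (b) the set $\{(W,V):C_S(W,V)>0\}$ is open in $d_S$. Given these, both implications of the theorem drop out of the dichotomy by direct inspection.

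For the forward direction, suppose $(W,V)$ is a discontinuity point. If $C_S(W,V)>0$, openness of $\{C_S>0\}$ combined with continuity of $C$ would make $C_{\rm SID}$ continuous at $(W,V)$, a contradiction. Hence $C_S(W,V)=0$, which gives condition~2 and $C_{\rm SID}(W,V)=0$. Discontinuity then furnishes a sequence $(W_n,V_n)\to(W,V)$ with $C_{\rm SID}(W_n,V_n)\not\to 0$; by the dichotomy this forces $C_S(W_n,V_n)>0$ along a subsequence (which yields condition~3) and $C_{\rm SID}(W_n,V_n)=C(W_n)$. Continuity of $C$ gives $C(W_n)\to C(W)$, which must be strictly positive for the convergence to fail, proving condition~1. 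Conversely, assume conditions 1--3. Then $C_{\rm SID}(W,V)=0$ by the dichotomy, while the approximating channels $(W_\epsilon,V_\epsilon)$ provided by condition~3 satisfy $C_{\rm SID}(W_\epsilon,V_\epsilon)=C(W_\epsilon)\to C(W)>0$ by continuity of $C$ and condition~1, so $C_{\rm SID}$ fails to be continuous at $(W,V)$.

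The main technical point is the openness of $\{C_S>0\}$, since $C_S$ is defined through the multi-letter formula in Theorem~\ref{QWC}. The idea is to set
\[
  a_n(W,V) \= \sup_{U\rightarrow X^n \rightarrow B^n E^n}\bigl(I(U;B^n)-I(U;E^n)\bigr),
\]
observe that $a_n$ is super-additive in $n$ (by concatenating optimisers on independent blocks of lengths $n_1$ and $n_2$), and hence $C_S(W,V)=\sup_n a_n(W,V)/n$ by Fekete's lemma. Each $a_n$ is continuous in $(W,V)$: standard cardinality bounds let us restrict the auxiliary alphabet $\cU$ to a compact set, on which the supremum is attained and depends continuously on the channel parameters through the Holevo information. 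Consequently $C_S$ is a supremum of continuous functions, hence lower semi-continuous, and the set $\{C_S>0\}$ is open — which is precisely what the argument above requires.
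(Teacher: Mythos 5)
Your proposal is correct and takes essentially the route the paper intends, namely the one it defers to in the classical case \cite[Thm.~6.1]{BD17}: combine the dichotomy of Theorem~\ref{Theorem D2} with continuity of $W\mapsto C(W)$ and openness of $\{C_S>0\}$, and read off both implications. Your identification of the one genuinely new technical point for the cq-setting --- that openness of $\{C_S>0\}$ no longer follows from a single-letter formula and must instead be obtained from superadditivity of the $n$-letter functionals $a_n$, Fekete's lemma, and continuity of each $a_n$ (with a cardinality bound on $\cU$) --- is exactly the required adaptation, and the argument is sound.
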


The proof follows the same idea as the proof for the classical case in \cite[Thm.~6.1]{BD17}.
For each cq-channel $W$ with $C(W)>0$, there exists a channel $V_*$, such that 
for $(W,V_*)$ the conditions 2 and 3 of Theorem~\ref{disSID} are fulfilled.
Therefore $(W,V_*)$ is an example of a discontinuity point of $C_{SID}$.
Because of that, there is a huge number of discontinuity points.

\begin{Corollary}
Let $(W,V)$ be a wiretap cqq-channel with $C_{SID}(W,V)>0$. 
Then there exists a $\hat{\epsilon}>0$, 
such that for all $(\widetilde W,\widetilde V)$ with 
$D((W,V),(\widetilde W,\widetilde V))<\hat{\epsilon}$, it holds that
$C_{SID}(\widetilde W,\widetilde V)>0$.
\qed
\end{Corollary}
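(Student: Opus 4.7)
The plan is to obtain this corollary as a short consequence of the dichotomy Theorem~\ref{Theorem D2} together with the characterization of discontinuity points in Theorem~\ref{disSID}.

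First, I would unpack the hypothesis using Theorem~\ref{Theorem D2}: $C_{\rm SID}(W,V)>0$ forces $C_S(W,V)>0$, and hence $C_{\rm SID}(W,V)=C(W)>0$. The key observation is that condition~(2) in Theorem~\ref{disSID}, namely $C_S(W,V)=0$, then fails at $(W,V)$. By Theorem~\ref{disSID}, the point $(W,V)$ is therefore not a discontinuity point of $C_{\rm SID}$, i.e., $C_{\rm SID}$ is continuous at $(W,V)$.

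From here I would conclude by a routine continuity argument: since $C_{\rm SID}(W,V)=C(W)>0$ and $C_{\rm SID}$ is continuous at $(W,V)$, there exists $\hat\epsilon>0$ such that every wiretap cqq-channel $(\widetilde W,\widetilde V)$ with $D((W,V),(\widetilde W,\widetilde V))<\hat\epsilon$ satisfies $C_{\rm SID}(\widetilde W,\widetilde V)>\tfrac12 C(W)>0$, giving the claim.

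No genuine obstacle remains once Theorem~\ref{disSID} is taken for granted. Intuitively, the dichotomous formula---$C_{\rm SID}$ equals $C(W)$ when $C_S>0$ and $0$ otherwise---can jump only at the boundary between the two regions, and the hypothesis $C_{\rm SID}(W,V)>0$ places $(W,V)$ in the interior of the $C_S>0$ region. The only minor bookkeeping point is to note that, for the singleton index sets present here, the metric $D$ coincides with the wiretap distance $d_S$, so that Theorem~\ref{disSID} applies verbatim to the single-channel pair $(W,V)$.
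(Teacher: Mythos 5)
Your proposal is correct and follows what is evidently the intended derivation: the corollary is placed immediately after Theorem~\ref{disSID} precisely so that it falls out of the contrapositive of the ``only if'' direction (failure of condition~(2), since the dichotomy forces $C_S(W,V)>0$ whenever $C_{SID}(W,V)>0$) combined with positivity of the value $C_{SID}(W,V)=C(W)$ at the point. Your closing remark that $D$ reduces to $d_S$ for singleton index sets is the right bookkeeping, and the $\epsilon=\tfrac12 C(W)$ continuity argument completes the proof as the paper intends.
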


The identification capacity of a compound cq-channel is additive and therefore 
not super-additive. 
It follows by its operational definition that for the message transmission 
capacity and for the message transmission secrecy 
capacity, inequality (\ref{super}) holds with ``$\geq$''. 

By the same argument, we can show that the same also holds 
for the secure identification capacity.
\begin{Proposition}
For any two wiretap cqq-channels $(W,V)$ and $(\widetilde W,\widetilde V)$,
\begin{equation*}
  C_{SID}(W\otimes\widetilde W,V\otimes\widetilde V)
          \geq C_{SID}(W,V)+C_{SID}(\widetilde W,\widetilde V).
\end{equation*}
\end{Proposition}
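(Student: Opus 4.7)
The plan is to reduce the claim to the dichotomy theorem (Theorem~\ref{Theorem D2}) combined with two elementary observations: first, that the ordinary transmission capacity of cq-channels is super-additive under tensor products, and second, that the property $C_S > 0$ for a wiretap cqq-channel is preserved under taking parallel composition with an arbitrary second channel. Since the dichotomy forces $C_{SID}(W,V) \in \{0, C(W)\}$, and likewise for $(\widetilde W, \widetilde V)$, I split into the case where the right-hand side vanishes (which is trivial since $C_{SID} \geq 0$) and the case where at least one summand is strictly positive.

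If $C_S(W,V) > 0$, I would argue that $C_S(W \otimes \widetilde W, V \otimes \widetilde V) > 0$ as follows: given any $(n,M,\lambda,\mu)$-wiretap transmission code $\{(P_i,D_i)\}$ for $(W,V)$, pick any fixed letter $\widetilde x$ from the input alphabet of $\widetilde W$, and form the parallel-channel code with first-system encoder $P_i$, deterministic second-system input $\widetilde x^{n}$, and decoding operators $D_i \otimes \1$. The legal decoder measures only the first system, so the legal error stays at $\lambda$; the wiretapper's output factorises as $V^{\otimes n}(P_i) \otimes \widetilde V(\widetilde x)^{\otimes n}$, and because tensoring with a fixed state preserves trace distances, the secrecy parameter remains $\mu$. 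The symmetric statement holds if instead $C_S(\widetilde W, \widetilde V) > 0$, so whenever at least one of the two secrecy capacities is positive, we have $C_S(W \otimes \widetilde W, V \otimes \widetilde V) > 0$.

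Invoking Theorem~\ref{Theorem D2} on the product channel then gives $C_{SID}(W \otimes \widetilde W, V \otimes \widetilde V) = C(W \otimes \widetilde W)$. Super-additivity of the non-secret transmission capacity, $C(W \otimes \widetilde W) \geq C(W) + C(\widetilde W)$, follows from the standard operational argument of choosing product input distributions $P \otimes \widetilde P$ and using additivity of the Holevo information on product states. Combined with the elementary bound $C_{SID}(W,V) \leq C(W)$ (itself immediate from the dichotomy, since $C_{SID}$ is either $0$ or $C(W)$), this yields $C_{SID}(W \otimes \widetilde W, V \otimes \widetilde V) = C(W \otimes \widetilde W) \geq C(W) + C(\widetilde W) \geq C_{SID}(W,V) + C_{SID}(\widetilde W, \widetilde V)$, which closes the argument in all cases.

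The reason one must route the proof through the dichotomy, rather than mimic directly the operational product construction used for transmission and secrecy capacities, is the doubly-logarithmic rate scale: forming the product ID-code of sizes $N$ and $\widetilde N$ on the parallel channel gives rate $\frac{1}{n}\log\log(N\widetilde N) \approx \frac{1}{n}\log\max(\log N, \log \widetilde N)$, which is the maximum of the two rates rather than their sum. This is the main conceptual obstacle, and the dichotomy theorem of Section~\ref{Secure} is exactly what circumvents it, by collapsing $C_{SID}$ to the ordinary single-logarithmic transmission rate whenever the channel admits any positive secrecy rate at all.
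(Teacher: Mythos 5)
Your proof is correct and matches the paper's intent: the paper's one-line justification (``this follows from the coding theorem'') is precisely an appeal to the dichotomy theorem applied to the product channel, which you carry out explicitly, together with the standard facts that $C_S>0$ survives tensoring with any second channel and that $C(W\otimes\widetilde W)\geq C(W)+C(\widetilde W)$. Your closing remark about why the naive product-ID-code construction fails on the $\frac1n\log\log N$ scale is a valid and worthwhile clarification that the paper leaves implicit.
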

\begin{proof}
  This follows from the coding theorem.
\end{proof}

\medskip
The following theorem gives a complete characterization of the super-additivity behaviour of 
$C_{SID}$.

\begin{Theorem}
\label{supI}
Let $(W,V)$ and $(\widetilde W, \widetilde V)$ be two wiretap cqq-channels.

%\begin{enumerate}
1) Assume $\min\{C(W),C(\widetilde W)\}>0$. Then,
        $$
          C_{SID}(W\otimes\widetilde W, V\otimes\widetilde V)
            > C_{SID}(W, V)+C_{SID}(\widetilde W,\widetilde V)
        $$
        holds iff $C_S(W\otimes\widetilde W, V\otimes\widetilde V) > 0$,
        but at least one of $C_S(W,V)$ or $C_S(\widetilde{W},\widetilde{V})$ equals $0$.

2) Assume $C(W)=0$ [$C(\widetilde W)=0$]. Then, 
        $$
          C_{SID}(W\otimes\widetilde W, V\otimes\widetilde V)
            > C_{SID}(W, V)+C_{SID}(\widetilde W,\widetilde V)
        $$
        holds iff $C_S(W\otimes\widetilde W, V\otimes\widetilde V) >0$, but
        $C_S(\widetilde W, \widetilde V)=0$ [$C_S(W, V)=0$].
%\end{enumerate}
\end{Theorem}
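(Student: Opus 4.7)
The plan is to reduce the statement to the dichotomy of Section~\ref{Secure} combined with two auxiliary facts. First, the classical transmission capacity of cq-channels is additive under tensor products, $C(W\otimes\widetilde W) = C(W) + C(\widetilde W)$, which follows from the Holevo single-letter formula because sub-additivity of the von Neumann entropy forces product input distributions to attain the optimum. Second, as noted immediately before the theorem, the secrecy capacity is super-additive: $C_S(W\otimes\widetilde W, V\otimes\widetilde V) \geq C_S(W,V) + C_S(\widetilde W, \widetilde V)$, a consequence of tensoring wiretap codes. Finally, $C_S(W,V) \leq C(W)$ always, so $C(W)=0$ forces $C_S(W,V)=0$ and hence $C_{SID}(W,V)=0$.

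For Case 1, where $C(W), C(\widetilde W) > 0$, the dichotomy theorem forces each of $C_{SID}(W,V)$, $C_{SID}(\widetilde W, \widetilde V)$ and $C_{SID}(W\otimes\widetilde W, V\otimes\widetilde V)$ to be either $0$ or equal to the corresponding transmission capacity $C(W)$, $C(\widetilde W)$, $C(W)+C(\widetilde W)$, respectively. I then enumerate subcases by the vanishing of the two single-factor secrecy capacities. If both $C_S(W,V)$ and $C_S(\widetilde W, \widetilde V)$ are positive, so is the product by super-additivity, and both sides of the target inequality equal $C(W)+C(\widetilde W)$. If exactly one of them (say the first) is positive, then $C_S$ of the product is still positive, so the left-hand side equals $C(W)+C(\widetilde W)$ while the right-hand side equals $C(W)$, yielding the strict inequality. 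If both single-factor secrecy capacities vanish, the right-hand side is $0$, and strict inequality holds iff $C_S$ of the product is positive. Collecting these cases gives exactly the stated criterion.

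For Case 2, assume $C(W)=0$. Then $C_S(W,V)=C_{SID}(W,V)=0$ by the observations above, so the right-hand side reduces to $C_{SID}(\widetilde W, \widetilde V)$, which by the dichotomy is either $0$ or $C(\widetilde W)$. By additivity, $C(W\otimes \widetilde W) = C(\widetilde W)$, so the left-hand side is also $0$ or $C(\widetilde W)$. If $C_S(\widetilde W, \widetilde V) > 0$, super-additivity yields $C_S$ of the product $>0$, so both sides equal $C(\widetilde W)$ and no strict inequality holds. If $C_S(\widetilde W, \widetilde V) = 0$, the right-hand side is $0$ and strict inequality holds iff $C_S(W\otimes\widetilde W, V\otimes\widetilde V) > 0$; note that this last condition together with $C_S \leq C$ and additivity forces $C(\widetilde W) > 0$, so the left-hand side is genuinely positive. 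No real obstacle is expected: the entire proof is a careful bookkeeping over subcases driven by the dichotomy, with the only minor subtlety being the verification in the last sub-case of Case 2 that $C(\widetilde W)>0$ is automatic, so the left-hand side is strictly larger than $0$.
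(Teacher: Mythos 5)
Your proposal is correct and is essentially the argument the paper intends: the paper only cites the classical analogue \cite[Thm.~6.2]{BD17}, and that reduction is exactly your bookkeeping over subcases driven by the dichotomy theorem (Theorem~\ref{Theorem D2}) together with additivity of $C$ for cq-channels, super-additivity of $C_S$, and $C_S(W,V)\leq C(W)$. The case analysis is complete and the observation that $C_S(W\otimes\widetilde W,V\otimes\widetilde V)>0$ forces $C(\widetilde W)>0$ in the last sub-case of Case~2 is the right (and only) subtlety to check.
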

The \textbf{proof} follows the same idea as the proof for the classical case 
\cite[Thm.~6.2]{BD17}.
\qed

\medskip
\begin{Theorem}
Let $(\W,\V)$ be a compound wiretap cqq-channel.
$(\W,\V)$ is a discontinuity point of $C_{SID}$ if the following properties are fulfilled:
\begin{enumerate}
\item $C(\W)>0$
\item $C_S(\W,\V)=0$
\item For all $\epsilon >0$ there exists a CWC $(\W_\epsilon, \V_\epsilon)$ with 
\(
D_S((\W,\V),(\W_\epsilon, \V_\epsilon))<\epsilon\ {\rm and}\ C_S(\W_\epsilon, \V_\epsilon)>0.
\)
\end{enumerate}
\end{Theorem}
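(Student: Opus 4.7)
The plan is to derive the discontinuity directly from the dichotomy theorem (Theorem~\ref{QCWCdich}) combined with the continuity of the (non-secrecy) compound cq-channel capacity $C(\W)$, which was established earlier as a corollary of the single-letter formula in Theorem~\ref{cqid}. The essential observation is that under the three listed hypotheses, $C_{SID}$ takes the value $0$ at $(\W,\V)$ but jumps to (approximately) $C(\W)>0$ at arbitrarily close channels, forcing a discontinuity.

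First, I would apply Theorem~\ref{QCWCdich} to $(\W,\V)$ itself. Hypothesis~2 gives $C_S(\W,\V)=0$, so the dichotomy immediately yields $C_{SID}(\W,\V)=0$. Next, for each $\epsilon>0$, hypothesis~3 supplies a compound wiretap cqq-channel $(\W_\epsilon,\V_\epsilon)$ with $D_S((\W,\V),(\W_\epsilon,\V_\epsilon))<\epsilon$ and $C_S(\W_\epsilon,\V_\epsilon)>0$. Applying Theorem~\ref{QCWCdich} again on the approximating channel then gives $C_{SID}(\W_\epsilon,\V_\epsilon)=C(\W_\epsilon)$.

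The second step is to control $C(\W_\epsilon)$ by continuity. By Definition~\ref{measure}, $D_S((\W,\V),(\W_\epsilon,\V_\epsilon))<\epsilon$ implies $D(\W,\W_\epsilon)<\epsilon$. Since $C_{ID}(\W)=C(\W)=\max_P\inf_{t\in\Theta}I(P;W_t)$ by Theorem~\ref{cqid}, and this quantity was shown in the Corollary preceding Theorem~\ref{disSID} to be continuous on $\mathfrak{W}(\X,\cS(\cB))$ with respect to $D$, we obtain $C(\W_\epsilon)\to C(\W)$ as $\epsilon\to 0$. Combined with hypothesis~1, i.e.\ $C(\W)>0$, this forces $C_{SID}(\W_\epsilon,\V_\epsilon)=C(\W_\epsilon)$ to stay bounded away from $0$ for all sufficiently small $\epsilon$.

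Putting the two pieces together, we have a sequence of compound wiretap cqq-channels converging to $(\W,\V)$ in the metric $D_S$ along which $C_{SID}$ does \emph{not} converge to $C_{SID}(\W,\V)=0$; hence $(\W,\V)$ is a discontinuity point of $C_{SID}$. I expect no substantive obstacle here: the proof is essentially an assembly of previously established results, and follows line by line the argument of \cite[Thm.~6.1]{BD17}. The only subtle point is verifying that continuity of $C$ on compound cq-channels (a maximin of Holevo informations) survives under $D$, but this was already handled in the Corollary quoted above, so no new estimates are needed.
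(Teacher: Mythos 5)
Your proposal is correct and is essentially the paper's intended argument: the paper only remarks that the proof follows the classical case of \cite[Thm.~6.3]{BD17}, and that argument is exactly your assembly of the dichotomy theorem (Theorem~\ref{QCWCdich}) giving $C_{SID}(\W,\V)=0$ but $C_{SID}(\W_\epsilon,\V_\epsilon)=C(\W_\epsilon)$, together with the continuity of $C(\W)=\max_P\inf_t I(P;W_t)$ under $D$ to keep $C(\W_\epsilon)$ bounded away from zero. No gaps; note only that the statement is a sufficiency claim, so proving this one direction is all that is required.
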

The \textbf{proof} follows the same idea as the proof for the classical 
case \cite[Thm.~6.3]{BD17}.
\qed

\medskip
As before, we find a large number of discontinuity points. 
Now we characterize the super-additivity of these channels.
Theorem~\ref{supI} can be generalized for compound wiretap cqq-channels.
Furthermore, we consider the sharpest form of super-additivity, that is, super-activation.

\begin{Definition}
%\begin{itemize}
%\item We say that the capacity $C$ is super-additive if we can find two AVWCs $(\W,\V)$ and $(\widetilde \W,\widetilde \V)$ such that 
%\begin{equation}\label{super-add}
%C(\W\times \widetilde \W, \V\times \widetilde \V)>C(\W,\V)+C(\widetilde \W, \widetilde \V)
%\end{equation}
We say that a capacity $C$ can be super-activated if we can find 
two cqq-channels $(\W,\V)$ and 
$(\widetilde \W,\widetilde \V)$ such that 
\begin{equation*}
  %\label{super-act}
  C(\W\otimes \widetilde \W, \V\otimes \widetilde \V)>0
    \ \text{ and }\ 
  C(\W,\V) = C(\widetilde \W, \widetilde \V)=0.
\end{equation*}
%\end{itemize}
\end{Definition}

\begin{Theorem}
Let $(\W,\V)$ and $(\widetilde\W, \widetilde\V)$ be two compound wiretap cqq-channels.
Then for these two channels
we have super-activation for $C_{SID}$ iff we have super-activation for $C_S$.
\end{Theorem}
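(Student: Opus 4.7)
The plan is to derive the claimed equivalence as a direct corollary of the dichotomy Theorem~\ref{QCWCdich}, which states that $C_{SID}(\W,\V)$ equals $C(\W)$ when $C_S(\W,\V) > 0$ and $0$ otherwise. The key observation is that, because the secrecy capacity is bounded above by the transmission capacity (an operational fact: any wiretap code is in particular a transmission code), the condition $C_S(\W,\V) > 0$ automatically forces $C(\W) > 0$. Hence the dichotomy collapses to the simple bi-implication
\begin{equation*}
 C_{SID}(\W,\V) > 0 \iff C_S(\W,\V) > 0,
\end{equation*}
and equivalently $C_{SID}(\W,\V) = 0 \iff C_S(\W,\V) = 0$. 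I would state and verify this zero/positive equivalence first as a short self-contained observation, applied to all three channels $(\W,\V)$, $(\widetilde\W,\widetilde\V)$, and $(\W\otimes\widetilde\W,\V\otimes\widetilde\V)$.

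Next I would unfold the definition of super-activation and match the three conditions on both sides. For the forward direction, assume super-activation for $C_S$, i.e.\ $C_S(\W,\V) = C_S(\widetilde\W,\widetilde\V) = 0$ while $C_S(\W\otimes\widetilde\W,\V\otimes\widetilde\V) > 0$. By the observation above, this translates termwise into $C_{SID}(\W,\V) = C_{SID}(\widetilde\W,\widetilde\V) = 0$ and $C_{SID}(\W\otimes\widetilde\W,\V\otimes\widetilde\V) > 0$, which is super-activation for $C_{SID}$. For the converse, the same equivalence applied in the reverse direction converts super-activation of $C_{SID}$ back into super-activation of $C_S$.

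There is no real obstacle; the only thing one has to be mildly careful about is that the dichotomy gives a conditional formula, and one must check that ``$C_{SID} = 0$'' on the left is genuinely equivalent to ``$C_S = 0$'' on the right rather than only one implication. The possible worry is the case $C_S > 0$ but $C(\W) = 0$, which would make the dichotomy's first branch yield $C_{SID} = 0$; but this case cannot occur, precisely because $C_S \leq C$. After flagging this small sanity check, the proof is essentially a two-line unwinding of definitions.
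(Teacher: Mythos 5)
Your argument is correct and is essentially the paper's intended proof: the paper defers to the classical case \cite[Thm.~6.4]{BD17}, which rests on exactly the same reduction of the dichotomy Theorem~\ref{QCWCdich} to the bi-implication $C_{SID}>0 \iff C_S>0$ via the operational bound $C_S(\W,\V)\leq C(\W)$, applied to the two factor channels and to their tensor product (which is again a compound wiretap cqq-channel). Your explicit flagging of the potential $C_S>0$, $C(\W)=0$ case and why it cannot occur is exactly the right sanity check.
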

The \textbf{proof} follows the same idea as the proof for the classical 
\cite[Thm.~6.4]{BD17}.
\qed

\medskip
The analysis of the transmission capacities of the arbitarily varying cq-channels and 
arbitrarily varying wiretap cqq-channels has been done in \cite{BCDN16} and \cite{BCDN17}.
There we showed that the transmission random coding capacity of the arbitrarily 
varying cq-channel is continuous. 

To give a complete characterization of the 
discontinuity points of the capacity as in \cite{BD17a}, let us introduce the set
\[
  \cN = \{ \W \text{ finite and symmetrizable} \}.
\]
Note that being symmetrizable is a closed condition, hence 
$\cN$ is a closed set under the convergence induced by the metric $D$.
%
%To analyze the deterministic capacity, we define
%for an arbitrarily varying cq-channel
%$\W=\{W_t\}_{t \in \Theta}$ (as usual, with finite index set $\Theta$),
%\[
%  F(\W) := \min_{\tau(t|x)} \max_{x,x'}
%            \left\| \sum_{t\in\Theta}\tau(t|x)W_{t}(x')-\sum_{t\in\Theta}\tau(t|x')W_{t}(x)\right\|_1,
%\]
%$\{\tau(\cdot|x)$ ranges over parametrized probability distributions 
%on $\Theta$, $x\in\X$. The statement $F(\W)=0$ is equivalent to $\W$
%being symmetrizable.
%
%It holds that $C_{ID}(\W) = C(\W)$. 
%
With this, we can give a complete characterization of the 
discontinuity points of the capacity, just as in \cite{BD17a}.
\begin{Theorem} 
The capacity $C_{ID}(\W)$ is discontinuous at the finite cq-AVC $\W$ 
iff the following conditions hold:
\begin{enumerate}
  \item $C_{\text{ran}} (\W) > 0$
  \item $\W \in \cN$, i.e.~the channel is symmetrizable, 
        and for every $\epsilon > 0$ there exists a finite arbitrarily 
        varying cq-channel $\widetilde{\W}$ with $D(W,\widetilde{W}) < \epsilon$ and
        $\widetilde{\W} \not\in  \cN$.
        \qed
\end{enumerate}
\end{Theorem}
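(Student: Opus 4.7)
The plan is to exploit the dichotomous formula
$C_{ID}(\W)=0$ if $\W\in\cN$ and $C_{ID}(\W)=C_{\text{ran}}(\W)$ otherwise, together with the facts cited just before the statement: $\cN$ is closed under the metric $D$, and $C_{\text{ran}}$ is continuous on the space of finite arbitrarily varying cq-channels. With these two ingredients the proof essentially reduces to a case analysis on which ``branch'' of the dichotomy $\W$ sits on.

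For the direction (\(\Leftarrow\)), assume both conditions hold. Since $\W\in\cN$ we have $C_{ID}(\W)=0$. By condition~2 we can choose a sequence $\widetilde{\W}_n\to\W$ with each $\widetilde{\W}_n\notin\cN$; then $C_{ID}(\widetilde{\W}_n)=C_{\text{ran}}(\widetilde{\W}_n)\to C_{\text{ran}}(\W)>0$ by continuity of $C_{\text{ran}}$ and condition~1. So $C_{ID}$ is discontinuous at $\W$.

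For the direction (\(\Rightarrow\)), I would contrapose and show that if any of the three sub-claims fails, then $C_{ID}$ is continuous at $\W$. First, because $C_{ID}\le C_{\text{ran}}$ everywhere and $C_{\text{ran}}$ is continuous, the case $C_{\text{ran}}(\W)=0$ forces $C_{ID}(\widetilde{\W})\to 0=C_{ID}(\W)$; this gives condition~1. Second, suppose $\W\notin\cN$. Since $\cN$ is closed, some neighbourhood $U$ of $\W$ is disjoint from $\cN$, so for every $\widetilde{\W}\in U$ we have $C_{ID}(\widetilde{\W})=C_{\text{ran}}(\widetilde{\W})$, and by continuity of $C_{\text{ran}}$ we get $C_{ID}(\widetilde{\W})\to C_{\text{ran}}(\W)=C_{ID}(\W)$; this gives $\W\in\cN$. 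Third, suppose that there exists $\epsilon>0$ such that every $\widetilde{\W}$ with $D(\W,\widetilde{\W})<\epsilon$ lies in $\cN$; then for all such $\widetilde{\W}$, $C_{ID}(\widetilde{\W})=0=C_{ID}(\W)$, so $C_{ID}$ is continuous at $\W$. This forces the approximability clause in condition~2.

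The main technical obstacle is not the combinatorial case analysis above but making sure the two tools we import really apply in the cq-setting: namely that the random-coding capacity $C_{\text{ran}}(\W)=\max_{p}\min_{W\in\conv(\W)} I(p;W)$ is continuous in $\W$ with respect to $D$, and that $\cN$ is closed. The former follows from joint continuity of Holevo information in $(p,W)$ together with a standard min-max argument on the compact set $\conv(\W)$, exactly as in the classical analysis of \cite{BD17a} adapted via the cited results \cite{BCDN16,BCDN17}; the latter is because the defining equations $\sum_t\tau(t|x)W_t(x')=\sum_t\tau(t|x')W_t(x)$ are preserved under limits taken on a compact parameter space $\{\tau\}$, so the paper's in-line remark indeed gives a closed $\cN$. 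Once those two facts are in place, the argument above completes the characterization.
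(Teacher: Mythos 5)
Your proof is correct and follows essentially the route the paper intends: the paper gives no explicit argument here but points to the classical analogue in \cite{BD17a}, and the intended proof is exactly your case analysis based on the dichotomy $C_{ID}(\W)\in\{0,C_{\text{ran}}(\W)\}$, the continuity of $C_{\text{ran}}$, and the closedness of $\cN$ (both of which the paper asserts in the surrounding text). Your handling of the contrapositive covers all three ways the characterizing conditions can fail, so the argument is complete at the same level of rigor as the paper's.
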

The second condition is precisely that $\W$ belongs to the boundary of $\cN$,
$
  \partial\cN = \bigl\{ \W \in \cN :\ \forall\epsilon > 0\, \exists \widetilde\W \text{ s.t. }
                        D(\widetilde\W,\W) < \epsilon \text{ and } \widetilde\W \not\in\cN \bigr\}.
$
The following result establishes a certain robustness property of the capacity.
It holds because $\cN$ is a closed set, hence every point outside it 
has a neighbourhood not intersecting it.

\begin{Theorem}
\label{theorem35} 
Let $\W$ be a finite arbitrarily varying cq-channel with $\W$ being not symmetrizable.
Then there exists an $\epsilon > 0$
such that all finite arbitrarily varying cq-channels $\widetilde{\W}$ with
$D(\widetilde{\W}, \W) < \epsilon$ 
are continuity points of $C_{ID}(\W)$.
\qed
\end{Theorem}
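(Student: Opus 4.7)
The plan is to derive Theorem~\ref{theorem35} as a direct topological corollary of the preceding characterization of discontinuity points together with the closedness of the symmetrizability set $\cN$ under the metric $D$. First I recall that the characterization says $C_{ID}$ is discontinuous at a finite cq-AVC $\widetilde\W$ iff both $C_{\rm ran}(\widetilde\W) > 0$ and condition~(2) holds, where condition~(2) requires in particular that $\widetilde\W \in \cN$. Hence every discontinuity point of $C_{ID}$ is itself symmetrizable.

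Given that $\W$ is not symmetrizable, i.e.~$\W \notin \cN$, the closedness of $\cN$ in the $D$-topology implies that its complement is open, so there exists $\epsilon > 0$ such that the open $D$-ball $\{\widetilde\W : D(\widetilde\W,\W) < \epsilon\}$ is disjoint from $\cN$. Every such $\widetilde\W$ is therefore itself non-symmetrizable, and by the observation in the first paragraph it fails the necessary condition for discontinuity. Hence every $\widetilde\W$ within distance $\epsilon$ of $\W$ is a continuity point of $C_{ID}$, which is the claim.

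The only nontrivial ingredient is the closedness of $\cN$, already asserted in the paper just before the characterization theorem, and I expect this to be the main point one would want to expand. Given a sequence $\W^{(k)} \to \W$ of symmetrizable finite cq-AVCs with symmetrizing witnesses $\tau^{(k)} : \X \to \Theta^{(k)}$, one first notes that the defining identity involves only $|\X|^{2}$ linear constraints on $\tau$, so that one may assume $|\Theta^{(k)}|$ is uniformly bounded (for instance by $|\X|$). By compactness of the simplex of stochastic maps $\X \to \Theta$ one then extracts a subsequence along which both the channel components $W_{t}^{(k)}$ and the symmetrizing kernel $\tau^{(k)}$ converge, and passes to the limit in $\sum_t \tau^{(k)}(t|x) W_t^{(k)}(x') = \sum_t \tau^{(k)}(t|x') W_t^{(k)}(x)$ to obtain a symmetrizing witness for $\W$. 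This shows $\W \in \cN$ and completes the closedness, after which the topological argument above immediately yields the theorem.
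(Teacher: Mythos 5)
Your argument is correct and is essentially the paper's own proof: the paper likewise deduces the theorem from the closedness of the symmetrizability set $\cN$ (so that the complement of $\cN$ is open and contains a $D$-ball around $\W$) combined with the preceding characterization theorem, whose condition~(2) forces every discontinuity point to lie in $\cN$. Your additional compactness sketch for the closedness of $\cN$ goes slightly beyond the paper, which only asserts that symmetrizability is a closed condition, but it does not change the route of the proof.
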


Let $\W$ be an arbitrary varying cq-channel and $\{\W_n\}^\infty_{n=1}$ 
be an arbitrary sequence of finite arbitrarily varying cq-channels with
\be\label{eq46}
\lim_{n\to\infty} D(\W_n , \W) = 0.
\ee
We define the variance of $C_{ID} (\W_n )$ for the sequence $\{W_n \} = \{W_n \}^\infty_{n=1}$ as
\[
  V(\{W_n \}) = \limsup_{n\to\infty} C_{ID} (\W_n ) - \liminf_{n\to\infty} C_{ID} (\W_n),
\]
and furthermore, let
\[
  \overline{V}(\W) = \sup V(\{\W_n \}),
\]
where the sup is taken over all $\{\W_n \}$ and $\W$ that 
satisfy (\ref{eq46}). In other words, $\overline{V}(\W)$ describes the 
maximal variation of $C_{ID}(\W)$ in the neighborhood of a certain channel $\W$.
Finally,
\[
  \overline{V} = \sup_{\W} \bar{V} (\W)
\]
is the maximal variation for all arbitrarily varying cq-channel $\W$.
Furthermore, we set $\cN_\infty\=\{\W : C_{\text{ran}}(\W)=0\}$.
Then we have the following result.

\begin{Theorem}
For a finite arbitrarily varying cq-channel $\W$, the following assertions hold:
\begin{enumerate}
  \item $\overline{V}(\W) = 0$ for $\W \not\in\partial\cN\backslash \cN_\infty$.
  \item $\overline{V}(\W) = C_{\text{ran}}(\W)$ for $\W \in \partial\cN \setminus \cN_\infty$.
  \item $\displaystyle\overline{V} = \sup_{\W\in \partial\cN \setminus \cN_\infty} C_{\text{ran}}(\W)$.
        \qed
\end{enumerate}
\end{Theorem}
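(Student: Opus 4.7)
The plan is to leverage Theorem~\ref{theorem35} (robustness under non-symmetrizability) together with the continuity of $C_{\text{ran}}$ and the formula $C_{ID}(\W) = C_{\text{ran}}(\W) \cdot \mathbf{1}_{\W\notin\cN}$, splitting the analysis of $\overline{V}(\W)$ according to where $\W$ lies with respect to $\partial\cN$ and $\cN_\infty$. For any convergent sequence $\W_n \to \W$, the key facts to combine are: (i) $0 \leq C_{ID}(\W_n) \leq C_{\text{ran}}(\W_n)$; (ii) $C_{\text{ran}}(\W_n) \to C_{\text{ran}}(\W)$ by continuity; (iii) on every $\W_n \notin \cN$ one has equality $C_{ID}(\W_n) = C_{\text{ran}}(\W_n)$, whereas on every $\W_n \in \cN$ one has $C_{ID}(\W_n) = 0$.

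For part 1, I would dispose of the complementary region $\W \notin \partial\cN \setminus \cN_\infty$ by three subcases. If $\W \in \cN_\infty$, then $C_{\text{ran}}(\W) = 0$ forces $C_{ID}(\W_n) \leq C_{\text{ran}}(\W_n) \to 0$ for every admissible sequence, so $V(\{\W_n\}) = 0$. If $\W \notin \partial\cN$ and $\W \in \cN$, then $\W$ lies in the interior of $\cN$, so some neighbourhood consists entirely of symmetrizable channels and $C_{ID}(\W_n) = 0$ eventually. If $\W \notin \partial\cN$ and $\W \notin \cN$, Theorem~\ref{theorem35} provides a neighbourhood of $\W$ in which every channel is non-symmetrizable, so $C_{ID}(\W_n) = C_{\text{ran}}(\W_n) \to C_{\text{ran}}(\W)$, giving $V(\{\W_n\}) = 0$. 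Taking the supremum over sequences yields $\overline{V}(\W) = 0$.

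For part 2, fix $\W \in \partial\cN \setminus \cN_\infty$, so $\W$ is symmetrizable and $C_{\text{ran}}(\W) > 0$. The upper bound $\overline{V}(\W) \leq C_{\text{ran}}(\W)$ is immediate from $0 \leq C_{ID}(\W_n) \leq C_{\text{ran}}(\W_n) \to C_{\text{ran}}(\W)$. For the matching lower bound, I would construct an explicit sequence realizing the gap: by the boundary property, pick for each $n$ a non-symmetrizable $\widetilde\W_n$ with $D(\widetilde\W_n,\W) < 1/n$, and define $\W_n = \W$ for even $n$ and $\W_n = \widetilde\W_n$ for odd $n$. Then $C_{ID}(\W_n) = 0$ on the even subsequence while $C_{ID}(\W_n) = C_{\text{ran}}(\W_n) \to C_{\text{ran}}(\W)$ on the odd subsequence by continuity of $C_{\text{ran}}$, so $V(\{\W_n\}) = C_{\text{ran}}(\W)$.

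Part 3 follows by taking suprema: the contribution to $\overline{V}$ from any $\W$ outside $\partial\cN \setminus \cN_\infty$ is zero by part 1, and part 2 identifies the contribution on $\partial\cN \setminus \cN_\infty$ as $C_{\text{ran}}(\W)$, so $\overline{V} = \sup_{\W \in \partial\cN \setminus \cN_\infty} C_{\text{ran}}(\W)$. I expect the main subtlety to lie in part 1 under the subcase $\W \in \cN_\infty \cap \partial\cN$, since such $\W$ lies on the boundary of $\cN$ and one must be sure that $\overline{V}(\W) = 0$ even though $\W$ is a symmetrizable channel approachable by non-symmetrizable ones; here the continuity of $C_{\text{ran}}$ through the value $0$ is essential and supplies the uniform vanishing bound on $C_{ID}(\W_n)$, preventing any jump.
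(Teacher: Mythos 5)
Your proof is correct and is exactly the argument the paper intends (the paper itself omits the proof, deferring to the classical analogue in \cite{BD17a}): it combines the dichotomy $C_{ID}(\W)=C_{\text{ran}}(\W)$ for non-symmetrizable $\W$ and $=0$ otherwise, the continuity of $C_{\text{ran}}$, and the closedness of $\cN$, with the interleaved sequence giving the lower bound in part~2. Your identification of the subcase $\W\in\cN_\infty\cap\partial\cN$ as the one requiring the squeeze $0\le C_{ID}(\W_n)\le C_{\text{ran}}(\W_n)\to 0$ is also the right observation.
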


\medskip
Now we will examine the additivity of the capacity function. 

\begin{Theorem}
Let $\W_1$ and $\W_2$ be two arbitrarily varying cq-channels. Then,
$C_{ID}(\W_1 \otimes \W_2 ) = 0$
iff 
$C_{ID} (\W_1 ) = C_{ID}(\W_2 ) = 0$.
\qed
\end{Theorem}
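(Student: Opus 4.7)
The plan is to reduce everything to the dichotomy for arbitrarily varying cq-channels stated just above, namely that $C_{ID}(\W) = 0$ if and only if either $\W$ is symmetrizable or $C_{\text{ran}}(\W) = 0$. With this in hand, the theorem will follow from a short embedding argument in one direction and a three-case analysis in the other.

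For the implication $C_{ID}(\W_1 \otimes \W_2) = 0 \Rightarrow C_{ID}(\W_1) = C_{ID}(\W_2) = 0$, I will argue the contrapositive. Given any $(n, N, \lambda_1, \lambda_2)$ ID-code $\{(P_i, D_i)\}$ for $\W_1$, I lift it to $\W_1 \otimes \W_2$ by fixing an arbitrary input string $x_2^{*n} \in \X_2^n$ and using the pairs $(P_i \otimes \delta_{x_2^{*n}},\, D_i \otimes \1_{\B_2^n})$. Since $\tr W_{s^n}(x_2^{*n}) = 1$ for every jammer sequence $s^n$ on the second factor, the errors of the first and second kinds factor as products with $1$ and are preserved exactly, giving $C_{ID}(\W_1 \otimes \W_2) \geq \max\{C_{ID}(\W_1), C_{ID}(\W_2)\}$ by symmetry.

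For the converse implication $C_{ID}(\W_1) = C_{ID}(\W_2) = 0 \Rightarrow C_{ID}(\W_1 \otimes \W_2) = 0$, I first record the auxiliary fact that $C_{\text{ran}}(\W) = 0$ is equivalent to $\conv(\W)$ containing a constant channel: plugging the uniform input $P_{\text{unif}}$ into the chain $0 = C_{\text{ran}}(\W) \geq \min_{W\in\conv(\W)} I(P_{\text{unif}};W) \geq 0$ yields some $W^* \in \conv(\W)$ with zero Holevo information at a full-support distribution, which forces $W^*(x)$ to be independent of $x$; the reverse implication is immediate. I then split into three cases depending on which alternative of the dichotomy applies to each $\W_i$. (i) If both are symmetrizable via $\tau_1, \tau_2$, then $\tau((t_1,t_2)|(x_1,x_2)) \= \tau_1(t_1|x_1)\tau_2(t_2|x_2)$ factors across the tensor product and symmetrizes $\W_1 \otimes \W_2$. (ii) If both admit constants $W_i^* \in \conv(\W_i)$, then $W_1^* \otimes W_2^* \in \conv(\W_1 \otimes \W_2)$ is again constant. (iii) In the mixed case, say $\W_1$ is symmetrizable via $\tau_1$ and $W_2^* = \sum_s q_2(s) W_s^{(2)} \in \conv(\W_2)$ is constant, I claim $\tau((t,s)|(x_1,x_2)) \= \tau_1(t|x_1)\, q_2(s)$ symmetrizes $\W_1 \otimes \W_2$: the first tensor factor of the output is symmetric under swapping $x_1 \leftrightarrow x_1'$ by the symmetrizability of $\W_1$, while the second factor reduces to $W_2^*(x_2') = W_2^*(x_2)$ by constancy. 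In every case the dichotomy then delivers $C_{ID}(\W_1 \otimes \W_2) = 0$.

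The main obstacle is the mixed case (iii). One has to realise that the correct way to combine a symmetrizer on one factor with a constant channel in the convex hull of the other is to use the constant only as the $\Theta_2$-marginal inside a \emph{product} symmetrizer, rather than trying to produce a constant channel in $\conv(\W_1 \otimes \W_2)$ (which need not exist, since the convex hull of a tensor product of sets is strictly smaller than the tensor product of their convex hulls). Once this observation is made, the remaining verifications are routine.
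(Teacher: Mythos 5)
Your proof is correct. The paper gives no argument for this theorem at all (the continuity/additivity section only defers to the classical analogue in [BD17a]), but your route is exactly the intended one: reduce to the dichotomy $C_{ID}(\W)=0$ iff $\W$ is symmetrizable or $C_{\text{ran}}(\W)=0$, get the easy direction by lifting an ID-code with a fixed input and $D_i\otimes\1$ on the idle factor, and get the hard direction by the three-case analysis, where the key step is the product symmetrizer $\tau_1(t|x_1)\,q_2(s)$ in the mixed case. One cosmetic slip: your parenthetical justification in the last paragraph is stated backwards --- in fact $\conv(\W_1\otimes\W_2)$ \emph{contains} all products $A\otimes B$ with $A\in\conv(\W_1)$, $B\in\conv(\W_2)$ (use product weights), and is generically larger; the real reason a constant channel need not exist in the mixed case is that tracing out the second output of a constant element of $\conv(\W_1\otimes\W_2)$ would produce a constant element of $\conv(\W_1)$, contradicting $C_{\text{ran}}(\W_1)>0$. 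This remark plays no role in the argument, so the proof stands as written.
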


The next result shows that the ID capacity is super-additive.

\begin{Theorem}
Let $\W_1$ and $\W_2$ be two arbitrarily varying cq-channels. Then,
\[
C_{ID}(\W_1 \otimes \W_2 ) > C_{ID}(\W_1 ) + C_{ID} (\W_2 )
\]
iff exactly one of the two channels $\W_1$, $\W_2$ is symmetrizable
while the other one is not, and both random coding capacities are
positive, $C_{\text{ran}} (\W_1 ) > 0$, $C_{\text{ran}} (\W_2 ) > 0$.
\qed
\end{Theorem}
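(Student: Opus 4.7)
The plan is to carry out a case analysis based on the dichotomy proved earlier in this section: $C_{ID}(\W) = 0$ when $\W$ is symmetrizable, and $C_{ID}(\W) = C_{\text{ran}}(\W)$ otherwise. Two structural ingredients are needed. The first I would establish is a tensor-product lemma: $\W_1 \otimes \W_2$ is symmetrizable iff both $\W_i$ are. The ``if'' direction is immediate by taking as symmetrizer the product conditional distribution $\tau((t_1,t_2)|(x_1,x_2)) := \tau_1(t_1|x_1)\tau_2(t_2|x_2)$, where $\tau_i$ symmetrizes $\W_i$; both sides of the symmetrization equation then factorise. For the ``only if'' direction, given a symmetrizer $\tau$ of the product, I would fix the second coordinate at some reference $x_2^*$ and apply the partial trace over the second tensor factor, using $\tr W_{t_2}(x_2^*) = 1$; this produces $\tau_1'(t_1|x_1) := \sum_{t_2} \tau((t_1,t_2)|(x_1,x_2^*))$ that symmetrizes $\W_1$, and symmetrically for $\W_2$.

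The second ingredient is additivity of the random coding capacity under tensor products, $C_{\text{ran}}(\W_1 \otimes \W_2) = C_{\text{ran}}(\W_1) + C_{\text{ran}}(\W_2)$. The $\geq$ direction is achieved by evaluating with product inputs $p_1 \otimes p_2$ and product jammer strategies $q_1 \otimes q_2$, using Holevo additivity on product states. The $\leq$ direction is more delicate because $\conv(\W_1 \otimes \W_2)$ contains non-product $q$; here the key step is to marginalise any joint jammer $q$ into its coordinate projections $q_1, q_2$ (observing that the marginal output on the first system factorises as $W_{q_1}(x_1)$ and similarly for the second), and then to apply subadditivity of the von Neumann entropy and a chain-rule argument to deduce $I(p_1 \otimes p_2; W_q) \leq I(p_1; W_{q_1}) + I(p_2; W_{q_2})$.

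With these two facts in hand the theorem reduces to a mechanical case analysis. If both $\W_i$ are symmetrizable, the product is symmetrizable by the lemma and both sides of the inequality vanish, so super-additivity fails. If both $\W_i$ are non-symmetrizable, then so is the product, and additivity of $C_{\text{ran}}$ yields exact equality. In the remaining case, say $\W_1$ symmetrizable and $\W_2$ not, the product is non-symmetrizable by the lemma, so $C_{ID}(\W_1 \otimes \W_2) = C_{\text{ran}}(\W_1) + C_{\text{ran}}(\W_2)$, whereas $C_{ID}(\W_1) + C_{ID}(\W_2) = 0 + C_{\text{ran}}(\W_2)$; the gap thus equals $C_{\text{ran}}(\W_1)$, which is strictly positive iff $C_{\text{ran}}(\W_1) > 0$. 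To match the theorem's phrasing ``both random coding capacities positive'', I would note that non-symmetrizability of $\W_2$ automatically forces $C_{\text{ran}}(\W_2) > 0$: by Sion's minimax, $C_{\text{ran}}(\W) = \min_q \max_p I(p;W_q)$, so $C_{\text{ran}} = 0$ would produce a $q^*$ with $W_{q^*}$ constant in $x$, and then $\tau(t|x) := q^*(t)$ would symmetrize $\W_2$, a contradiction.

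The hard step I anticipate is the additivity of $C_{\text{ran}}$ under tensor products; the lower bound is straightforward but the matching upper bound has to handle non-product jammer strategies. Everything else reduces to the dichotomy theorem and to the tensor-product symmetrizability lemma, which is purely algebraic.
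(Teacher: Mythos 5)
Your overall architecture is the natural one (the paper states this theorem without proof, deferring to the classical analogue in \cite{BD17a}): reduce to the dichotomy $C_{ID}(\W)\in\{0,C_{\text{ran}}(\W)\}$, show that $\W_1\otimes\W_2$ is symmetrizable iff both factors are, show additivity of $C_{\text{ran}}$, and run the case analysis. Your symmetrizability lemma is correct as sketched (the partial-trace trick with a reference letter $x_2^*$ works), and the minimax observation that non-symmetrizability forces $C_{\text{ran}}>0$ (a constant-output averaged channel yields an input-independent symmetrizer) is exactly what reconciles your case analysis with the theorem's phrasing.

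The gap is in your sketch of the upper bound $C_{\text{ran}}(\W_1\otimes\W_2)\le C_{\text{ran}}(\W_1)+C_{\text{ran}}(\W_2)$, which you rightly single out as the hard step. The inequality you propose, $I(p_1\otimes p_2;W_q)\le I(p_1;W_{q_1})+I(p_2;W_{q_2})$ for an arbitrary correlated jammer $q$ with marginals $q_1,q_2$, is false. Already classically: take $W_t(x)=\delta_{x\oplus t}$ on bits and let $q$ put weight $\tfrac12$ on each of $(0,0)$ and $(1,1)$; both marginal channels are then completely noisy, so the right-hand side is $0$, while $B_1\oplus B_2=X_1\oplus X_2$ makes the left-hand side equal to $1$ for uniform product input. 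Mutual information is not subadditive when the joint channel is correlated, because $S(B_1B_2|X)\le S(B_1|X)+S(B_2|X)$ enters with the wrong sign. Moreover your quantifiers are reversed: to upper-bound $\max_p\min_q$ you may \emph{choose} the jammer -- so correlated $q$ need not be controlled at all -- but you must handle \emph{arbitrary correlated inputs} $p$, which your displayed inequality omits. The correct route is: let $q_i^*$ achieve $\min_{q_i}\max_{p_i}I(p_i;W_{q_i})=C_{\text{ran}}(\W_i)$ (Sion's minimax, using convexity of $I(p;W_q)$ in $q$ and concavity in $p$); then for any joint $p$ with marginals $p_1,p_2$ one has $\min_q I(p;W_q)\le I(p;W_{q_1^*}\otimes\widetilde{W}_{q_2^*})\le I(p_1;W_{q_1^*})+I(p_2;\widetilde{W}_{q_2^*})\le C_{\text{ran}}(\W_1)+C_{\text{ran}}(\W_2)$, the middle step being subadditivity of the Holevo information for a genuine \emph{product} cq-channel, where $S(B_1B_2)\le S(B_1)+S(B_2)$ while the conditional term factorizes exactly. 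A mirror-image caveat applies to your lower bound: there the minimization runs over all correlated $q$, so you cannot restrict to product jammers; use instead the chain rule $I(X_1X_2{:}B_1B_2)\ge I(X_1{:}B_1)+I(X_2{:}B_2|X_1)$ together with the fact that the reduced state on each output system depends only on the jammer's corresponding marginal. With these corrections the proof closes.
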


Now we will analyze the continuity of $C_{SID}$ for arbitrarily varying wiretap cqq-channels.
In Theorem~\ref{AVCWCdich} we showed that
$C_{SID}(\W,\V)=C_{SID}^{\rm sim}(\W,\V)=C(\W)$ if $C_S(\W,\V)>0$ and $=0$ otherwise.
We shall now use this result to fully characterize the continuity 
behavior and the discontinuity behaviour of $C_{SID}$.
To do so, we distinguish two cases,
1.~$C_{S,ran}(\W,\V)>0$ and 2.~$C_{S,ran}(\W,\V)=0$.

\begin{Theorem}
\label{C1} 
Let $(\W,\V)$ be a a finite arbitrarily varying wiretap cqq-channel
with $C_{\text{S,ran}}(\W,\V)>0$. Then,
$(\W,\V)$ is a discontinuity point of $C_{SID}$ iff
$\W$ is symmetrizable in the boundary of $\cN$, i.e.~$\W\in\partial\cN$.
\end{Theorem}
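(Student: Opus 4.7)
My plan is to invoke Theorem~\ref{AVCWCdich} together with the continuity of the random coding quantities and closedness of the symmetrizability class $\cN$. First I would unpack the theorem into a single case formula,
\[
  C_{SID}(\W,\V)
    = \begin{cases}
        C_{\text{ran}}(\W) & \text{if } \W\notin\cN \text{ and } C_{\text{S,ran}}(\W,\V)>0, \\
        0                  & \text{otherwise,}
      \end{cases}
\]
using the case split $C_S(\W,\V)>0 \Leftrightarrow \W\notin\cN$ and $C_{\text{S,ran}}(\W,\V)>0$, together with $C(\W)=C_{\text{ran}}(\W)$ for $\W\notin\cN$. By hypothesis $C_{\text{S,ran}}(\W,\V)>0$; continuity of $C_{\text{S,ran}}$ (as used implicitly in \cite{BCDN17,BCDN18}) then supplies a $D_S$-neighborhood of $(\W,\V)$ on which this quantity stays strictly positive. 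Inside this neighborhood, $C_{SID}$ only toggles between $C_{\text{ran}}(\widetilde\W)$ and $0$ depending on whether $\widetilde\W$ is symmetrizable.

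For the ``if'' direction, suppose $\W\in\partial\cN$. Then $\W\in\cN$ gives $C_{SID}(\W,\V)=0$, but the boundary condition supplies a sequence of non-symmetrizable finite cq-AVCs $\widetilde\W_n\to\W$. Pairing with $\widetilde\V_n=\V$, the formula above yields $C_{SID}(\widetilde\W_n,\V)=C_{\text{ran}}(\widetilde\W_n)$ for all large $n$, which converges to $C_{\text{ran}}(\W)$ by continuity of $C_{\text{ran}}$ and is strictly positive since $C_{\text{ran}}(\W)\geq C_{\text{S,ran}}(\W,\V)>0$ (dropping the wiretap subtraction can only enlarge the rate). Hence $C_{SID}$ jumps at $(\W,\V)$.

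For the ``only if'' direction, assume $\W\notin\partial\cN$. Either $\W\notin\cN$, in which case Theorem~\ref{theorem35} gives a whole neighborhood of $\W$ outside $\cN$, so after shrinking to preserve $C_{\text{S,ran}}>0$ we have $C_{SID}=C_{\text{ran}}$ on this neighborhood and continuity at $(\W,\V)$ reduces to continuity of $C_{\text{ran}}$; or $\W$ lies in the interior of $\cN$, and a whole neighborhood of $\W$ consists of symmetrizable channels, giving $C_{SID}\equiv 0$ there. In either case $(\W,\V)$ is a continuity point.

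The main obstacle I anticipate is the openness of $\{C_{\text{S,ran}}>0\}$ in the metric $D_S$, which is a form of lower semi-continuity for a multi-letter secrecy capacity. I would cite the continuity analysis carried out in \cite{BCDN17,BCDN18}; a self-contained derivation would combine continuity of Holevo quantities in the channel parameters with a compactness argument over the auxiliary random variable $U$ at each fixed blocklength, and then pass to the regularized limit, echoing the classical treatment of \cite{BD17a}.
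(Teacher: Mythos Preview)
Your proposal is correct and follows essentially the same route the paper intends: the paper's own ``proof'' is just the pointer ``follows the same lines as the argument in \cite{BD17a}'', and your reconstruction is precisely the cq-analogue of that argument --- combine the dichotomy Theorem~\ref{AVCWCdich} with Theorem~\ref{multi1} to reduce $C_{SID}$ to a two-valued function toggled by membership in $\cN$, then use continuity of $C_{\text{ran}}$ and closedness of $\cN$ (Theorem~\ref{theorem35}) for the two cases in the ``only if'' direction. Your identification of the one genuine technical dependency, namely openness of $\{C_{\text{S,ran}}>0\}$ in the metric $D_S$, is exactly the point where the cq-case requires the input from \cite{BCDN17,BCDN18}, and flagging it as you do is appropriate.
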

The \textbf{proof} follows the same lines as the argument in \cite{BD17a}.
\qed

\medskip
Furthermore, we have the following important stability results.

\begin{Theorem}
  Let $(\widehat{\W},\widehat{\V})$ be a cq-AVWC with 
  $C_{SID}(\widehat{\W},\widehat{\V})> 0$. Then there exists an $\epsilon > 0$ 
  such that for all finite cq-AVWCs $(\W,\V)$ with 
  $D((\W,\V),(\widehat{\W},\widehat{\V})) < \epsilon$, always $C_{SID}(\W,\V)> 0$. 
  In particular, $C_{SID}$ is continuous at $(\widehat{\W},\widehat{\V})$.
  \qed
\end{Theorem}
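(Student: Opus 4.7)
The plan is to deduce the theorem as a short consequence of the dichotomy Theorem~\ref{AVCWCdich} together with the already-established continuity characterization in Theorem~\ref{C1}. First, I would unpack the hypothesis using Theorem~\ref{AVCWCdich}: $C_{SID}(\widehat{\W},\widehat{\V}) > 0$ forces both $C(\widehat{\W}) > 0$ (so by the AVC capacity formula $\widehat{\W}$ is not symmetrizable, i.e.\ $\widehat{\W} \notin \cN$) and $C_S(\widehat{\W},\widehat{\V}) > 0$ (so by Theorem~\ref{multi1} the random coding secrecy capacity satisfies $C_{\text{S,ran}}(\widehat{\W},\widehat{\V}) > 0$).

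Since $\cN$ is a closed subset of the space of finite cq-AVCs in the metric $D$, its complement is open and $\widehat{\W}$ admits an open neighbourhood disjoint from $\cN$; in particular $\widehat{\W} \notin \partial \cN$. With this in hand I would invoke Theorem~\ref{C1}: it applies because $C_{\text{S,ran}}(\widehat{\W},\widehat{\V}) > 0$, and it certifies that $(\widehat{\W},\widehat{\V})$ is a continuity point of $C_{SID}$, since the only discontinuity criterion it identifies, namely $\widehat{\W} \in \partial\cN$, fails here. Continuity of $C_{SID}$ at a point where the value is strictly positive then immediately yields an $\epsilon > 0$ for which every finite cq-AVWC $(\W,\V)$ with $D((\W,\V),(\widehat{\W},\widehat{\V})) < \epsilon$ still satisfies $C_{SID}(\W,\V) \geq \tfrac{1}{2} C_{SID}(\widehat{\W},\widehat{\V}) > 0$, delivering both claims in the theorem.

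The principal obstacle thus lies inside the referenced Theorem~\ref{C1} rather than in the present statement, which is essentially its corollary. The non-trivial ingredient that Theorem~\ref{C1} hides is the lower semicontinuity of $C_{\text{S,ran}}$ at points away from $\partial\cN$: one must show that small $D$-perturbations of $(\W,\V)$ induce only small variations of the single-letter expression $\inf_{\widehat W \in \operatorname{conv}\{W_t\}} I(p_U;\widehat W^{\otimes n}) - \sup_{s^n\in\Sigma^n} I(p_U;V_{s^n})$, which in turn uses joint continuity of Holevo information in the input distribution and the channel, together with the stability of the minimum taken over the convex hull $\operatorname{conv}(\W)$. One also needs the continuity of $C_{\text{ran}}(\W)$ on the complement of $\cN$ so that the equality $C_{SID}(\W,\V) = C(\W) = C_{\text{ran}}(\W)$ supplied locally by the dichotomy assembles into a continuous function at $(\widehat{\W},\widehat{\V})$. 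Granting these ingredients, which are the classical-quantum analogues of the estimates in \cite{BD17a}, the argument outlined above produces the robustness and continuity statements in a few lines.
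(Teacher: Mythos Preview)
Your proposal is correct and matches the paper's intended reasoning: the paper states this theorem without proof (only a \qed), placing it immediately after Theorem~\ref{C1}, so it is meant precisely as the corollary you describe. Your unpacking via the dichotomy Theorem~\ref{AVCWCdich} (yielding $\widehat{\W}\notin\cN$ and $C_{\text{S,ran}}(\widehat{\W},\widehat{\V})>0$), followed by the observation that $\widehat{\W}\notin\partial\cN$ since $\cN$ is closed, and then the application of Theorem~\ref{C1} to conclude continuity and hence local positivity, is exactly the argument the paper leaves implicit.
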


\begin{Theorem}
  \label{C2} 
  Let $(\W,\V)$ be a a finite arbitrarily varying wiretap cqq-channel
  with $C_{\text{S,ran}}(\W,\V)=0$. Then,
  $(\W,\V)$ is a point of discontinuity of $C_{SID}$
  iff $C_{\text{ran}}(\W)> 0$ and for every $\epsilon> 0$ there exists a 
  finite cq-AVWC $(\W_\epsilon,\V_\epsilon)$ with 
  $D((\W,\V),(\W_\epsilon,\V_\epsilon)) < \epsilon$, such that
  $\W_\epsilon$ is not symmetrizable
  and $C_{\text{S,ran}}(\W_\epsilon,\V_\epsilon)> 0$.
  \qed
\end{Theorem}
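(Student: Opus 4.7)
The plan is to reduce Theorem~\ref{C2} to the dichotomy Theorem~\ref{AVCWCdich} combined with Theorem~\ref{multi1} (for the secrecy side) and the already established continuity of the random coding transmission capacity $C_{\text{ran}}$ on finite cq-AVCs. The hypothesis $C_{\text{S,ran}}(\W,\V)=0$ forces $C_S(\W,\V)=0$ by Theorem~\ref{multi1} (if $\W$ is symmetrizable then $C_S=0$ directly; otherwise $C_S = C_{\text{S,ran}}=0$), and hence $C_{SID}(\W,\V)=0$ via Theorem~\ref{AVCWCdich}. So discontinuity at $(\W,\V)$ amounts precisely to the existence of arbitrarily close finite cq-AVWCs whose $C_{SID}$ is bounded away from zero.

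For the ``only if'' direction I would assume $(\W,\V)$ is a discontinuity point and pick a sequence $(\W_n,\V_n)$ with $D\bigl((\W_n,\V_n),(\W,\V)\bigr)\to 0$ for which $C_{SID}(\W_n,\V_n)\ge \delta$ holds for some $\delta>0$ along a subsequence. Theorem~\ref{AVCWCdich} then forces $C_S(\W_n,\V_n)>0$ and $C_{SID}(\W_n,\V_n)=C(\W_n)\ge\delta$; combining with Theorem~\ref{multi1}, $\W_n$ must be non-symmetrizable with $C_{\text{S,ran}}(\W_n,\V_n)>0$, and $C(\W_n)=C_{\text{ran}}(\W_n)\ge\delta$. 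The continuity of $C_{\text{ran}}$ (stated earlier for finite cq-AVCs) then yields $C_{\text{ran}}(\W)\ge\delta>0$, and the channels $(\W_n,\V_n)$ themselves serve as witnesses $(\W_\epsilon,\V_\epsilon)$ for every $\epsilon>0$.

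For the ``if'' direction, given $\epsilon_n\downarrow 0$, I would take the supplied $(\W_{\epsilon_n},\V_{\epsilon_n})$. Non-symmetrizability of $\W_{\epsilon_n}$ plus $C_{\text{S,ran}}(\W_{\epsilon_n},\V_{\epsilon_n})>0$ gives $C_S(\W_{\epsilon_n},\V_{\epsilon_n})=C_{\text{S,ran}}(\W_{\epsilon_n},\V_{\epsilon_n})>0$ by Theorem~\ref{multi1}, and then Theorem~\ref{AVCWCdich} gives $C_{SID}(\W_{\epsilon_n},\V_{\epsilon_n})=C(\W_{\epsilon_n})=C_{\text{ran}}(\W_{\epsilon_n})$. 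Invoking continuity of $C_{\text{ran}}$ once more, $C_{\text{ran}}(\W_{\epsilon_n})\to C_{\text{ran}}(\W)>0$, whereas $C_{SID}(\W,\V)=0$. This produces the required jump.

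Apart from carefully separating the four capacities $C$, $C_S$, $C_{\text{ran}}$, $C_{\text{S,ran}}$ and applying the correct dichotomy on each side, no calculations are needed; the whole argument is bookkeeping on the two dichotomies. The only non-trivial ingredient is the continuity of $C_{\text{ran}}$ on finite cq-AVCs under the metric $D$, which is provided by \cite{BCDN16,BCDN17} as already quoted; crucially, this continuity statement does \emph{not} require $\W$ itself to be non-symmetrizable, since $C_{\text{ran}}$ is defined for all $\W$ regardless. The argument parallels that of Theorem~\ref{C1} and its classical antecedent in \cite{BD17a}, with the role played there by $C_{\text{S,ran}}>0$ now split into the two conditions $C_{\text{ran}}(\W)>0$ and the approximability by non-symmetrizable cq-AVWCs with positive $C_{\text{S,ran}}$.
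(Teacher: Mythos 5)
Your argument is correct and is precisely the intended proof: the paper gives no explicit argument for Theorem~\ref{C2} (deferring, as for the neighbouring results, to the classical analogue in \cite{BD17a}), and that argument is exactly your reduction to the two dichotomies (Theorems~\ref{multi1} and~\ref{AVCWCdich}, plus the Ahlswede--Blinovsky dichotomy $C(\W)=C_{\text{ran}}(\W)$ for non-symmetrizable $\W$) combined with the continuity of $C_{\text{ran}}$ on finite cq-AVCs. Your observation that $C_{\text{S,ran}}(\W,\V)=0$ forces $C_S(\W,\V)=0$ in both branches of Theorem~\ref{multi1}, hence $C_{SID}(\W,\V)=0$, correctly reduces discontinuity to the existence of nearby channels with $C_{SID}$ bounded away from zero, and the rest of the bookkeeping is sound.
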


%\begin{Corollary} 
%  The arbitrarily varying wiretap cqq-channel $(\widehat{\W},\widehat{\V})$ 
%  is a point of discontinuity of $C_{SID}$
%  iff $C_{\text{ran}}(\widehat{\W})> 0$ and for every $\epsilon> 0$ there exists some 
%  finite cq-AVWC $(\W_\epsilon,\V_\epsilon)$ with 
%  $D((\W,\V),(\W_\epsilon,\V_\epsilon)) < \epsilon$,
%  $\W_\epsilon$ being not symmetrizable and
%  $C_{\text{S,ran}}(\W_\epsilon,\V_\epsilon)> 0$.
%  \qed
%\end{Corollary}

To end, we fully characterize the occurrence of super-activation and 
super-additivity for $C_{SID}$.
Of course, super-activation is the most powerful form of super-additivity,
in this case two channels each with capacity zero combine to one with positive capacity. 
%First we start with the formal definition.

It is known that $C_{ID}$ cannot be super-activated. For $C_{SID}$, 
a different behaviour can be observed:

\begin{Theorem}
Let $(\W_1,\V_1)$, $(\W_2,\V_2)$ be two arbitrarily varying wiretap cqq-channels. 
Then the following holds.
\begin{enumerate}
\item If $\max\{C_{\text{S,ran}}(\W_1,\V_1), C_{\text{S,ran}}(\W_2,\V_2)\}> 0$,
  then $C_{SID}$ shows super-activation for these two channels precisely when
  one of $C_{\text{S,ran}}(\W_1,\V_1)$ or $C_{\text{S,ran}}(\W_2,\V_2)$ equals $0$, and the
  other one is positive.
  W.l.o.g.~$C_{\text{S,ran}}(\W_2,\V_2) = 0$ and $C_{\text{S,ran}}(\W_1,\V_1) > 0$.
  Therefore, $\W_1$ necessarily is symmetrizable.
  %\textcolor{red}{AW: Why?}
\item If $C_{\text{S,ran}}(\W_1,\V_1)=C_{\text{S,ran}}(\W_2,\V_2)=0$, 
  then $C_{SID}$ can be super-activated for these channels iff
  $\W_1 \otimes \W_2$ is not symmetrizable.
  (Note that this condition is equivalent to saying that at least
  one of the two channels $\W_1$ or $\W_2$ is not symmetrizable.)
  \qed
\end{enumerate}
\end{Theorem}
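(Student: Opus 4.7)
The plan is to combine Theorem~\ref{AVCWCdich} with Theorem~\ref{multi1} to obtain a workable criterion for the positivity of $C_{SID}$, and then reduce the super-activation question for $C_{SID}$ to questions about symmetrizability and the random coding secrecy capacity $C_{\text{S,ran}}$.

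First I would set up the unified criterion. By Theorem~\ref{AVCWCdich} one has $C_{SID}(\W,\V)>0$ precisely when $C(\W)>0$ and $C_S(\W,\V)>0$; using Theorem~\ref{multi1} and the corresponding formula for $C(\W)$ in the symmetrizable versus non-symmetrizable cases, this is equivalent to the conjunction ``$\W$ is not symmetrizable'' and ``$C_{\text{S,ran}}(\W,\V)>0$''. Everything is read through this criterion. A further basic observation, which I would verify by a one-line calculation, is that if $\tau_i(\cdot|x_i)$ symmetrizes $\W_i$ for $i=1,2$, then the product kernel $\tau_1(t_1|x_1)\tau_2(t_2|x_2)$ symmetrizes $\W_1\otimes \W_2$; hence the tensor product of two symmetrizable arbitrarily varying cq-channels is symmetrizable. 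This is the source of the necessary conditions below.

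For part~(1), assume w.l.o.g.~$C_{\text{S,ran}}(\W_1,\V_1)>0$ and split into two subcases. In subcase~(a), also $C_{\text{S,ran}}(\W_2,\V_2)>0$: then $C_{SID}(\W_i,\V_i)=0$ forces both $\W_i$ symmetrizable by the unified criterion, so $\W_1\otimes \W_2$ is symmetrizable, giving $C_{SID}(\W_1\otimes\W_2,\V_1\otimes\V_2)=0$ and no super-activation. In subcase~(b), $C_{\text{S,ran}}(\W_2,\V_2)=0$: then $C_{SID}(\W_2,\V_2)=0$ is automatic, while $C_{SID}(\W_1,\V_1)=0$ combined with $C_{\text{S,ran}}(\W_1,\V_1)>0$ forces $\W_1$ to be symmetrizable, matching the stated necessary condition. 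Super-activation is then achievable in subcase~(b) by choosing $(\W_2,\V_2)$ so that $\W_1\otimes\W_2$ is no longer symmetrizable (a well-known cq-AVC phenomenon), which unlocks $C_{\text{S,ran}}(\W_1\otimes\W_2,\V_1\otimes\V_2)>0$ by running the private code of $(\W_1,\V_1)$ on the first tensor factor and using the second trivially. For part~(2), both $C_{\text{S,ran}}(\W_i,\V_i)=0$ implies $C_{SID}(\W_i,\V_i)=0$, so super-activation of $C_{SID}$ is equivalent to $C_{SID}(\W_1\otimes\W_2,\V_1\otimes\V_2)>0$, i.e.~to the conjunction of non-symmetrizability of $\W_1\otimes\W_2$ and $C_{\text{S,ran}}(\W_1\otimes\W_2,\V_1\otimes\V_2)>0$. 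Appealing to the super-activation result for $C_{\text{S,ran}}$ (the cq-analogue of \cite[Thm.~6.4]{BD17}, which leverages the coding results of \cite{BCDN18}), non-symmetrizability of the product channel is precisely what is needed in this regime; the parenthetical remark in the theorem is then the contrapositive of the basic observation that the tensor product of two symmetrizable channels is symmetrizable.

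The main obstacle is the reverse direction in part~(2): converting the combinatorial fact that $\W_1\otimes\W_2$ fails to be symmetrizable into the quantitative positivity $C_{\text{S,ran}}(\W_1\otimes\W_2,\V_1\otimes\V_2)>0$ when both individual $C_{\text{S,ran}}$ vanish. This is a genuine super-activation phenomenon for the random coding private capacity, and requires transporting the classical construction of \cite{BD17} through the multi-letter formula of Theorem~\ref{multi1}, choosing the auxiliary random variable $U$ on the joint input alphabet so as to exploit the non-symmetrizable structure while keeping the wiretapper's Holevo information controlled.
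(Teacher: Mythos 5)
Your overall route --- fusing Theorem~\ref{AVCWCdich} with Theorem~\ref{multi1} into the criterion ``$C_{SID}(\W,\V)>0$ iff $\W$ is non-symmetrizable and $C_{\text{S,ran}}(\W,\V)>0$'' and then analyzing tensor products --- is exactly the intended one; the paper itself supplies no proof here and simply defers to the classical analogues in \cite{BD17a} and \cite{BCDN17a}. However, three concrete gaps remain. First, your symmetrizability lemma is only half of what you use. The product-kernel computation gives ``both factors symmetrizable $\Rightarrow$ product symmetrizable'', but the sufficiency directions (and the parenthetical in part~2) need the converse, ``product symmetrizable $\Rightarrow$ both factors symmetrizable'' (equivalently: one non-symmetrizable factor forces the product to be non-symmetrizable). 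That converse is \emph{not} the contrapositive of your observation; it follows by fixing the second input at some $\bar{x}_2$ and tracing out the second output system in the symmetrization identity, which marginalizes the joint kernel to a symmetrizing kernel for $\W_1$. Without it you cannot conclude in part~1(b) that de-symmetrization actually occurs. Second, in part~1(b) you prove only necessity for the given pair and then switch to an existence argument (``by choosing $(\W_2,\V_2)$\,\dots''). The theorem asserts a characterization for two \emph{given} channels, so you must state and verify the exact condition: super-activation holds iff $\W_1$ is symmetrizable (forced, as you note) \emph{and} $\W_2$ is not, in which case the product is non-symmetrizable and $C_{\text{S,ran}}(\W_1\otimes\W_2,\V_1\otimes\V_2)\geq C_{\text{S,ran}}(\W_1,\V_1)>0$ via the trivial embedding you describe.

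The third gap is the serious one, and you have correctly located it but your proposed fix will not work. In part~2 you need ``$\W_1\otimes\W_2$ non-symmetrizable $\Rightarrow C_{\text{S,ran}}(\W_1\otimes\W_2,\V_1\otimes\V_2)>0$'' under the hypothesis that both individual $C_{\text{S,ran}}$ vanish. No choice of auxiliary variable $U$ in the multi-letter formula can deliver this from non-symmetrizability alone: take $\Theta_i$ a singleton with $W$ non-constant and $V_i=W_i$ (Eve receives exactly Bob's output state). Then $C_{\text{S,ran}}(\W_i,\V_i)=0$ and $C_{\text{S,ran}}(\W_1\otimes\W_2,\V_1\otimes\V_2)=0$, yet $\W_1\otimes\W_2$ is non-symmetrizable; hence $C_{SID}$ of the product vanishes and no super-activation occurs. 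So the ``if'' direction of part~2 cannot be established by ``transporting the classical construction'' as you propose --- either the positivity of $C_{\text{S,ran}}$ for the product must be imported as an additional hypothesis, or one must invoke the precise characterization of when $C_{\text{S,ran}}$ itself super-activates from \cite{BCDN17a}, \cite{BCDN18}. Your proposal, as written, leaves the decisive implication unproved and implicitly asserts something false in full generality.
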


Next, we characterise the case in which we observe super-additivity, 
but in which no super-activation occurs.

\begin{Theorem}
Let $(\W_1,\V_1)$ and $(\W_2,\V_2)$ be two arbitrarily varying wiretap 
cqq-channels for which no super-activation occurs. 
Then, for these two channels, super-additivity of $C_{SID}$ 
applies iff 
$C_{SID}(\W_1,\V_1)> 0$ and $C_S(\W_2,\V_2) = 0$ but $C_{\text{ran}}(\W_2)> 0$,
or analogously with 1 and 2 interchanged.
\end{Theorem}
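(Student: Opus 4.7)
My plan is to reduce the problem, via the dichotomy of Theorem~\ref{AVCWCdich}, to the already-proven super-additivity dichotomy for the identification capacity $C_{ID}$ of arbitrarily varying cq-channels, and then to perform a short case split on the positivity of $C_{SID}(\W_1,\V_1)$ and $C_{SID}(\W_2,\V_2)$. The dichotomy tells me that $C_{SID}(\W_i,\V_i)$ equals either $C(\W_i)$, when $C_S(\W_i,\V_i)>0$, or $0$ otherwise, and the same for the product channel. A first auxiliary observation I will record is the estimate $C_S(\W_1\otimes\W_2,\V_1\otimes\V_2)\ge C_S(\W_1,\V_1)+C_S(\W_2,\V_2)$, which follows from concatenating independent wiretap codes and observing that the resulting code still satisfies the strong secrecy requirement under any (even correlated) jammer strategy, because each component is secure against arbitrary choices in its own $\Sigma_i$.

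The case in which $C_{SID}(\W_1,\V_1)=C_{SID}(\W_2,\V_2)=0$ is immediately excluded by the no-super-activation hypothesis: super-additivity there would mean $C_{SID}(\W_1\otimes\W_2,\V_1\otimes\V_2)>0$, which is precisely super-activation. When both $C_{SID}(\W_i,\V_i)>0$, both $\W_i$ are necessarily non-symmetrizable with positive $C_{\mathrm{ran}}$, and by the auxiliary estimate the product again satisfies $C_S>0$, so $C_{SID}(\W_1\otimes\W_2,\V_1\otimes\V_2)=C(\W_1\otimes\W_2)$. The super-additivity dichotomy for $C_{ID}$ of arbitrarily varying cq-channels states that strict super-additivity of $C$ requires \emph{exactly one} of $\W_1,\W_2$ to be symmetrizable, so here we must have equality $C(\W_1\otimes\W_2)=C(\W_1)+C(\W_2)$, which rules out super-additivity of $C_{SID}$.

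The substantive case is when exactly one side is positive, say $C_{SID}(\W_1,\V_1)>0$ and $C_{SID}(\W_2,\V_2)=0$, i.e.~$C_S(\W_2,\V_2)=0$. Then the product has $C_S>0$ by the auxiliary estimate, so $C_{SID}(\W_1\otimes\W_2,\V_1\otimes\V_2)=C(\W_1\otimes\W_2)$, while the right-hand side of the super-additivity inequality is just $C(\W_1)$. Hence super-additivity is equivalent to $C(\W_1\otimes\W_2)>C(\W_1)$. If $\W_2$ is symmetrizable then $C(\W_2)=0$ and the AVC $C_{ID}$ dichotomy gives strict inequality iff $C_{\mathrm{ran}}(\W_1)>0$ (automatic) and $C_{\mathrm{ran}}(\W_2)>0$; if $\W_2$ is non-symmetrizable then $C(\W_1\otimes\W_2)=C(\W_1)+C_{\mathrm{ran}}(\W_2)$, and strict inequality is again equivalent to $C_{\mathrm{ran}}(\W_2)>0$. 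In both sub-cases the condition collapses to $C_{\mathrm{ran}}(\W_2)>0$, which together with $C_S(\W_2,\V_2)=0$ and $C_{SID}(\W_1,\V_1)>0$ reproduces the asserted criterion (the other direction by symmetry).

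I expect the main obstacle to be the sub-case where $\W_2$ is non-symmetrizable with $C_{\mathrm{ran}}(\W_2)>0$ but $C_S(\W_2,\V_2)=0$, since this is the only configuration in which the transmission and secrecy capacities decouple non-trivially: here one must carefully argue that $C_S(\W_1\otimes\W_2,\V_1\otimes\V_2)>0$ is inherited from $\W_1$ alone, and then invoke the correct branch of the $C_{ID}$-super-additivity dichotomy (equality, not strict inequality, because neither factor is symmetrizable). Apart from this point, the argument is essentially a bookkeeping exercise on top of Theorem~\ref{AVCWCdich} and the earlier identification super-additivity theorem for arbitrarily varying cq-channels.
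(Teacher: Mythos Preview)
Your proposal is correct. The paper itself does not supply a proof for this theorem: unlike the neighbouring results in Section~\ref{Continuity}, it is stated without even the usual pointer ``the proof follows the same idea as the classical case in \cite{BD17a}''. Your argument is precisely the natural one that such a pointer would indicate: invoke the dichotomy of Theorem~\ref{AVCWCdich} to replace $C_{SID}$ by either $C(\W_i)$ or $0$, observe that $C_S$ of the product inherits positivity from either factor (so the product always lands in the $C(\W_1\otimes\W_2)$ branch once one side has $C_{SID}>0$), and then reduce the remaining inequality $C(\W_1\otimes\W_2)>C(\W_1)$ to the already-established super-additivity characterisation for $C_{ID}=C$ of arbitrarily varying cq-channels. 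The case split (both zero $\Rightarrow$ excluded by the no-super-activation hypothesis; both positive $\Rightarrow$ both $\W_i$ non-symmetrizable, hence $C$ is additive; exactly one positive $\Rightarrow$ the stated criterion) is exactly what the classical proof in \cite{BD17a} does, and your handling of the two sub-cases on the symmetrizability of $\W_2$ is clean and correct. The point you flag as a potential obstacle --- that when $\W_2$ is non-symmetrizable one must use the \emph{equality} branch of the $C_{ID}$ dichotomy rather than the strict-inequality branch --- is indeed the one place where care is needed, and you handle it correctly.
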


In \cite{BCDN16}, super-activation has been shown for the transmission 
capacity of the classical arbitrarily varying classical-quantum wiretap 
channels, and in \cite{BCDN17a}, a full characterization have been given.

\section{Conclusions}
\label{Conclusions}
In this paper we have extended the theory of identification 
via quantum channels to include realistic considerations of 
robustness and security. The former we modelled by channel
uncertainty in both compound and arbitrarily varying cq-channels,
the latter by considering wiretap channels. We considered
these additions of robustness and secrecy constraints separately,
and eventually both of them together. These notions generalize 
the ones presented in \cite{BD17} for classical channels, and we found 
capacity characterizations quite analogous to those of \cite{BD17}. 
There we have also given applications for using ID-codes in 
the secure and robust setting; these applications evidently 
extend to cq-channels.

The first, visible difference in the results resides in the 
fact that while the classical theory in all 
variants essentially yields single-letter formulas, the analogues for 
quantum channels are to a large part multi-letter formulas 
that elude efficient computation, as already seen in the case of
cq-channels considered here, let alone for general quantum
channels.

Secondly, as has been stressed from the beginning of the
theory of identification via quantum channels, it comes naturally
in two flavours, simultaneous \cite{Loeber:PhD} and 
non-simultaneous \cite{AhlswedeWinter:ID-q} identification. 
This is because in quantum mechanics the different 
tests for the various messages correspond to measurements that
are not necessarily compatible, which is an entirely non-classical 
phenomenon. Since it has an additional constraint
on the decoder, the simultaneous ID-capacity is always upper
bounded by the non-simultaneous ID-capacity; thus, it is
desirable to prove the direct coding theorems for the former,
and the converses for the latter. We do this here, and find 
that simultaneous and non-simultaneous ID-capacities coincide
in all the models considered, generalizing the result of \cite{AhlswedeWinter:ID-q}
for ideally known and secrecy-free cq-channels. It should be 
noted, however, that for general quantum channels, a gap between
simultaneous and non-simultaneous ID-capacities is expected,
cf. \cite{winter:survey}.

The converses for the non-simultaneous ID-capacities are 
considerably more difficult than their classical and simultaneous
analogues. As a matter of fact, those can be obtained by general 
information spectrum and resolvability methods, while the converses 
and dichotomy theorems of the non-simultaneous cq-versions require 
genuine quantum generalizations of resolvability ideas, as is
already evident in the matrix concentration bounds from \cite{AhlswedeWinter:ID-q}. 
In the present paper, an interesting case is that of the compound 
channel (Theorem \ref{cqid}), where the converse proof is specifically 
adapted to the channel model, and it follows a completely different
idea from the one known for classical channels.
Another manifestation of the different character of classical
and quantum information is the form of the maximum mutual 
information of a cq-channel with two output states $\mu$-close 
in trace norm (Lemma \ref{Lemma 3}). We need this technical bound 
to argue that ID-secrecy implies wiretap communication secrecy.
While in Remark 4.8 it is shown that for classical channels this
maximum information is precisely $\mu$ (cf. also \cite{AZ95}), the 
analysis for cq-channels is not only much more involved, it also
only yields a very different-looking upper bound. We wish to 
highlight it as an interesting open problem to determine precisely
what the optimal upper bound in Lemma \ref{Lemma 3} is.

\section*{Acknowledgments}
%\textcolor{red}{AW: Wenn ihr mich als Coautor haben wollt, brauche ich leider
%meine Funding-Litanei:

Holger Boche and Christian Deppe were supported by the Bundesministerium 
f\"ur Bildung und Forschung (BMBF) through Grants 16KIS0118K and 16KIS0117K.
Holger Boche is also partly supported by the Deutsche Forschungsgemeinschaft
(DFG, German Research Foundation) under Germany's Excellence Stra\-tegy EXC-2111 390814868. 
Andreas Winter was supported by the ERC Advanced Grant IRQUAT,
the Spanish MINECO, projects FIS2013-40627-P and FIS2016-86681-P, with
the support of FEDER funds, and the Generalitat de Catalunya, CIRIT project 
2014-SGR-966 and 2017-SGR-1127.

Finally, Holger Boche thanks Freeman Dyson for comments on related issues on 
particle physics and computation, and the IAS Princeton for its hospitality.

\vspace{1cm}

\begin{IEEEbiographynophoto}{Holger Boche}
	received the Dr.rer.nat. degree in pure mathematics from the
	Technische Universit\"at Berlin, Berlin, Germany,
	in 1998, the Dipl.-Ing. and Dr.-Ing. degrees in electrical 
	engineering from the Technische Universit\"at
	Dresden, Dresden, Germany, in 1990 and 1994,
	respectively, and the degree in mathematics from
	the Technische Universit\"at Dresden, in 1992.
	From 1994 to 1997, he was involved in postgraduate studies in mathematics with the 
	Friedrich-Schiller Universit\"at Jena, Jena, Germany. In 1997,
	he joined the Heinrich-Hertz-Institut (HHI) f\"ur Nachrichtentechnik Berlin,
	Berlin. In 2002, he was a Full Professor of Mobile Communication Networks
	with the Institute for Communications Systems, Technische Universit\"at Berlin.
	In 2003, he became the Director of the Fraunhofer German-Sino Laboratory
	for Mobile Communications, Berlin, and the Director of HHI in 2004.
	He was a Visiting Professor with ETH Zurich, Zurich, Switzerland, in
	Winter 2004 and 2006, and KTH Stockholm, Stockholm, Sweden, in Summer
	2005. Since 2010, he has been with the Institute of Theoretical Information
	Technology and a Full Professor with the Technische Universit\"at M\"unchen,
	Munich, Germany. Since 2014, he has been a member and an Honorary Fellow
	of the TUM Institute for Advanced Study, Munich. He is a member of the
	IEEE Signal Processing Society SPCOM and the SPTM Technical Committee.
	He received the Research Award Technische Kommunikation from the Alcatel
	SEL Foundation in 2003, the Innovation Award from the Vodafone Foundation
	in 2006, and the Gottfried Wilhelm Leibniz Prize from the German Research
	Foundation in 2008. He was a corecipient of the 2006 IEEE Signal Processing
	Society Best Paper Award and a recipient of the 2007 IEEE Signal Processing
	Society Best Paper Award. He was elected as a member of the German
	Academy of Sciences (Leopoldina) in 2008 and the Berlin Brandenburg
	Academy of Sciences and Humanities in 2009.
\end{IEEEbiographynophoto}

\begin{IEEEbiographynophoto}{Christian Deppe}
	received the Dipl.-Math. degree in mathematics from the 
	Universit\"at Bielefeld, Bielefeld, Germany, in 1996, and the 
	Dr.-Math. degree in mathematics from the Universit\"at Bielefeld, 
	Bielefeld, Germany, in 1998. 
	He was a Research and Teaching Assistant with the
	Fakult\"at f\"ur Mathematik, Universit\"at  Bielefeld from 1998 to 2010. 
	From 2011 to 2013 he was project leader of the 
	project ``Sicherheit und Robustheit des Quanten-Repeaters''
	of the Federal Ministry of Education and Research at 
	Fakult\"at f\"ur Mathematik, Universit\"at  Bielefeld. 
	In 2014 he was supported by a DFG project at the Institute of 
	Theoretical Information Technology, Technische Universit\"at M\"unchen. 
	In 2015 he had a temporary professorship at the Fakult\"at f\"ur 
	Mathematik und Informatik, Friedrich-Schiller Universit\"at Jena. 
	He is currently  project leader of the project ``Abh\"orsichere Kommunikation 
	\"uber Quanten-Repeater'' of the Federal Ministry of Education and 
	Research at Fakult\"at f\"ur Mathematik, Universit\"at  Bielefeld. 
	Since 2018 he is at the Department of Communications Engineering at 
	the Technical University of Munich.
\end{IEEEbiographynophoto}

\begin{IEEEbiographynophoto}{Andreas Winter}
  received a Diploma degree in Mathematics from the Freie Universit\"at Berlin, 
  Berlin, Germany, in 1997, and a Ph.D. degree (Dr. math.) from the Fakult\"at f\"ur 
  Mathematik, Universit\"at Bielefeld, Bielefeld, Germany, in 1999. 
  He was Research Associate at the University of Bielefeld until 2001, 
  and then with the Department of Computer Science at the University of 
  Bristol, Bristol, UK. In 2003, still with the University of Bristol, 
  he was appointed Lecturer in Mathematics, and in 2006 Professor of 
  Physics of Information. Since 2012 he has been ICREA Research Professor 
  with the Universitat Aut\`onoma de Barcelona, Barcelona, Spain. 
  His research interests include quantum and classical Shannon theory, 
  and discrete mathematics. He is recipient, along with Bennett, Devetak, 
  Harrow and Shor, of the 2017 Information Theory Society Paper Award.
\end{IEEEbiographynophoto}

\end{document}